\documentclass[a4paper,11pt]{article}
\pdfoutput=1 

\usepackage{jheppub} 

\usepackage[T1]{fontenc} 
\usepackage{subcaption}
\usepackage{verbatim}
\usepackage{mathrsfs}
\usepackage{physics}
\usepackage{amsthm}
\usepackage{mathtools}
\usepackage{enumerate}


\renewcommand{\tilde}{\widetilde}
\renewcommand{\bar}{\overline}

\newcommand{\reals}{\mathbb{R}}
\newcommand{\comps}{\mathbb{C}}

\newcommand{\A}{\mathcal{A}}
\newcommand{\M}{\mathcal{M}}
\renewcommand{\H}{\mathcal{H}}

\newcommand{\im}{\operatorname{im}}

\newcommand{\B}{\mathcal{B}}

\newcommand{\R}{\mathcal{R}}

\renewcommand{\L}{\mathcal{L}}
\newcommand{\Z}{\mathcal{Z}}

\newcommand{\C}{\mathcal{C}}

\newtheoremstyle{breakthm}%
{}{}%
{\itshape}{}%
{\bfseries}{}
{\newline}{}

\theoremstyle{breakthm}
\newtheorem{theorem}{Theorem}

\newtheorem{lemma}[theorem]{Lemma}
\newtheorem{prop}[theorem]{Proposition}

\newtheoremstyle{break}%
{\topsep}{\topsep}%
{}{}%
{\bfseries}{}
{\newline}{}

\theoremstyle{break}
\newtheorem{definition}[theorem]{Definition}
\newtheorem{remark}[theorem]{Remark}
\newtheorem{fact}[theorem]{Fact}
\newtheorem{background}[theorem]{Background}

\numberwithin{theorem}{section}

\title{Notes on the type classification of von Neumann algebras}
\author{Jonathan Sorce}
\affiliation{Center for Theoretical Physics, Massachusetts Institute of Technology}
\abstract{These notes provide an explanation of the type classification of von Neumann algebras, which has made many appearances in recent work on entanglement in quantum field theory and quantum gravity.
The goal is to bridge a gap in the literature between resources that are too technical for the non-expert reader, and resources that seek to explain the broad intuition of the theory without giving precise definitions.
Reading these notes will provide you with: (i) an argument for why ``factors'' are the fundamental von Neumann algebras that one needs to study; (ii) an intuitive explanation of the type classification of factors in terms of renormalization schemes that turn unnormalizable positive operators into ``effective density matrices;'' (iii) a mathematical explanation of the different types of renormalization schemes in terms of the allowed traces on a factor; (iv) an intuitive characterization of type I and II factors in terms of their ``standard forms;'' and (v) a list of some interesting connections between type classification and modular theory, including the argument for why type III$_1$ factors are believed to be the relevant ones in quantum field theory.
None of the material is new, but the pedagogy is different from other sources I have read; it is most similar in spirit to the recent work on gravity and the crossed product by Chandrasekaran, Longo, Penington, and Witten.}

\makeatletter
\gdef\@fpheader{MIT preprint ID MIT-CTP/5527\vspace{3em}}
\makeatother
\begin{document}
\maketitle

\section{Introduction}

The goal of these notes is to unify, under a single roof, a number of pedagogical ideas about the type classification of von Neumann algebras that have been discussed in an as-needed manner in the physics literature.
In particular, I have in mind Witten's paper ``Gravity and the crossed product'' \cite{witten2022gravity}, which includes an illuminating comment on page two that the type of a von Neumann factor can be thought of in terms of whether ``[the algebra has] pure states, as well as... density matrices,'' ``[the] algebra does not have pure states, but it does have density matrices,'' or ``[the] algebra does not have pure states [or] density matrices.''
Throughout much of Witten's recent work both alone and with other authors \cite{witten2018aps, witten2022does, witten2022gravity, longo2022note, chandrasekaran2022algebra, chandrasekaran2022large, penington2023algebras}, and in the work by Leutheusser and Liu that inspired it \cite{leutheusser2021causal, leutheusser2021emergent}, other important properties of the type classification of von Neumann algebras have appeared; in particular, the properties of traces on a von Neumann factor of a given type.
These notions are certainly not new to physics, as the original papers on von Neumann algebra classification \cite{murray1936rings, murray1937rings, neumann1940rings, murray1943rings} were motivated in part by a desire to classify the algebraic structures that can appear in quantum mechanical systems.
Some of the original examples of von Neumann algebras were obtained from physical systems, e.g. \cite{araki1963representations}, and a great deal of work went into establishing that in many quantum field theories, the type of the von Neumann algebra associated to a spacetime region is a so-called hyperfinite type III$_1$ factor --- see for example \cite{araki1964type, driessler1977type, fredenhagen1985modular, buchholz1987universal, buchholz1995scaling, yngvason2005role}.

In these notes, I will try to explain the classification of von Neumann algebras in a physically motivated way that gives background to Witten's comment about the type of a factor being determined by the types of states it contains.
I will not motivate why a physicist would want to consider von Neumann algebras in the first place; this is done excellently in section 2.6 of \cite{witten2018aps}, and I can provide no justification that is superior to what is given in that text.
The plan of the paper is as follows.

\textit{Note: The reader who wishes to understand the theory as quickly as possible is encouraged to read sections 4, 5, 6.1, and 7.
Other sections can be thought of as supplementary.}
\begin{itemize}
	\item In \hyperref[sec:basic-facts]{section 2}, I list some basic facts about von Neumann algebras that will be needed in the rest of the text.
	\item In \hyperref[sec:factors]{section 3}, I explain the justification for focusing our attention on the ``factors'' that are the basic building blocks of all von Neumann algebras.
	\item In \hyperref[sec:big-picture]{section 4}, I introduce the real meat of the subject by describing a basic example of an algebra that contains no density matrices, but contains operators that can be thought of as ``renormalizable density matrices.''
	\item In \hyperref[sec:projector-types]{section 5}, I explain the type classification of factors due to Murray and von Neumann \cite{murray1936rings}, which involves an algebraic characterization of projection operators as having either ``effectively finite dimension'' or ``effectively infinite dimension,'' and explain how positive operators built up out of ``effectively finite'' projectors can be interpreted as renormalizable density matrices. 
	\item In \hyperref[sec:traces]{section 6}, the ``renormalizability'' heuristic is given mathematical teeth by introducing the unique renormalized trace functional on a von Neumann factor, and explaining how the positive operators of finite renormalized trace in a von Neumann factor can be interpreted as density matrices.
	(Note that the renormalized trace functional on a von Neumann algebra is generically different than the trace induced from an orthonormal basis on Hilbert space, which is why certain operators with naively infinite trace can be ``renormalized'' to finite trace operators.)
	\item To aid the reader in developing a good gut understanding of von Neumann factors, \hyperref[sec:standard-forms]{section 7} explains the standard forms of type I and type II$_{\infty}$ algebras, explaining how each of these algebras can be thought of as the algebra of bounded operators on a Hilbert space together with an ``internal algebra.''
	This internal algebra is trivial when the total algebra is of type I, and is of type II$_1$ when the total algebra is of type II$_{\infty}$.  
	\item Finally, in \hyperref[sec:miscellany]{section 8}, I collect additional miscellaneous ideas that may be of physical interest.
	I explain some connections between the type classification of von Neumann factors and the Tomita-Takesaki modular theory (\hyperref[sec:modular-theory]{section 8.1}), sketch the arguments for considering type III$_1$ factors in quantum field theory (\hyperref[sec:type-III1]{section 8.2}), and explain how type theory is modified for non-factor algebras (\hyperref[sec:non-factor]{section 8.3}).
\end{itemize}

Several appendices provide mathematical background.
\hyperref[app:math-background]{Appendix A} explains nets, operator topologies, and the algebraic form of the spectral theorem; it is needed for understanding some technical statements made in the text, and the dedicated reader may want to work through it before approaching the rest of the notes, but it can generally be consulted on an as-needed basis.
\hyperref[app:trace-math]{Appendix B} provides proofs of some properties of traces on von Neumann algebras that are used in section \ref{sec:traces}.

My understanding of this subject is the product of around four months spent reading mathematical texts.
The sources consulted in compiling these notes are \cite{rudin1974real, rudin1991functional, douglas1998banach, conway2000course, dixmier2011neumann, pedersen1979c, jones2003neumann, sunder2012invitation, takesaki2001theory, takesaki2003theory}.
While I found my time with these sources illuminating, many of them are quite dense, and difficult to approach from a ``physicist's perspective.''
It is my hope that these notes will provide the broad strokes of the theory of type classification in a way that is more technically precise than some discussions that have appeared so far in the physics literature, but sufficiently simple as to be pedagogically accessible.

I would like for these notes to be a useful reference for working physicists, so if you think anything is inaccurate or confusing, please send me a message and I will consider it in my revisions.

\subsection{Assumptions and notation}

The following is a list of symbols and concepts that are introduced in the text, and assumptions that are made throughout.

\begin{itemize}
	\item
	$\H$ denotes a Hilbert space, and $\B(\H)$ is the algebra of bounded operators on that space. (See definition \ref{def:bounded}.)
	
	\item 
	$\H$ is assumed to be separable, meaning that it has a finite or countable orthonormal basis.
	
	\item The identity operator on a Hilbert space is denoted by the numeral $1$.
	
	\item ``Operator'' almost always means ``bounded operator.''
	The exception is section \ref{sec:modular-theory}, where both unbounded and bounded operators appear, and I take care to always specify which kind of operator we are talking about.
	
	\item An overline over a set denotes the topological closure of that set.
	This appears for the first time in fact \ref{fact:polar-decomposition}.

	\item
	For a subset $M \subseteq \B(\H),$ $M'$ denotes the \textbf{commutant}, which is the set of operators in $\B(\H)$ that commute with every operator in $M$. (See definition \ref{def:commutant}.)

	\item
	A subalgebra $\A$ of $\B(\H)$ that is closed under adjoints and contains the identity is called a \textbf{von Neumann algebra} if $\A = \A''$; the symbol $\A$ is generally used to denote such an algebra. (See definition \ref{def:vN-algebra}.)
	
	\item The \textbf{center} of a von Neumann algebra is the set $\A \cap \A'$, and is denoted $\mathcal{Z}.$ (See definition \ref{def:center}.)
	
	\item Every time I write ``projector'' I mean ``orthogonal projector,'' i.e., a bounded operator $P$ with $P^2 = P = P^*.$\footnote{$P^*$ denotes the adjoint of $P$; see fact \ref{fact:adjoint}.}
	The projectors $\{P_{\alpha}\}$ are said to be ``pairwise orthogonal'' if they satisfy $P_{\alpha} P_{\beta} = \delta_{\alpha \beta} P_{\beta}.$
	
	\item A ``trace'' on the von Neumann algebra $\A$ is a $[0,\infty]$-valued function of the positive operators in $\A$,\footnote{An operator $T$ is said to be positive if it satisfies $\bra{\psi}T\ket{\psi} \geq 0$ for all $\ket{\psi} \in \H.$} satisfying appropriate linearity and cyclicity conditions (see definition \ref{def:trace}).
	A ``renormalized trace'' on $\A$ is a trace that satisfies additional conditions --- faithfulness, cleverness, and normality --- that make it a good tool for defining physical expectation values on $\A$ (see definition \ref{def:renormalized-trace}).
\end{itemize}

\section{Definitions and basic propositions concerning von Neumann algebras}
\label{sec:basic-facts}

I will now list basic definitions and facts that are important to understand for the remainder of these notes.
Some of these results are so fundamental that they can be found in the opening chapters of any text on functional analysis or operator theory.
For more obscure results I will sketch proofs or indicate where they can be found.
For intuition, the reader may wish to consult appendix A of \cite{harlow2017ryu}, which explains some of these facts in the special case of finite-dimensional von Neumann algebras.

It is not necessary to read this section in detail.
One may wish to skim through the statements before proceeding to section \ref{sec:factors}, and reference this section as needed throughout the rest of the notes.

\begin{definition} \label{def:bounded}
	Given a Hilbert space $\H$, a linear map $T : \H \to \H$ is \textbf{bounded} if there is a number $k$ such that for all unit vectors $\ket{\psi} \in \H$, we have
	\begin{equation}
		\lVert T \ket{\psi} \rVert \leq k.
	\end{equation}
	The infimum over all such $k$ is called the \textbf{operator norm} or just the \textbf{norm}, $\lVert T \rVert.$
	
	The space of all bounded operators is denoted $\B(\H).$
\end{definition}

\begin{fact}
	If $T$ and $S$ are bounded operators and $\alpha$ is a complex number, then we have
	\begin{align}
		\lVert \alpha T \rVert
			& = |\alpha| \lVert T \rVert, \\
		\lVert T + S \rVert
			& \leq \lVert T \rVert + \lVert S \rVert, \\
		\lVert T S \rVert
			& \leq \lVert T \rVert \lVert S \rVert.
		\end{align}
	Consequently, $\B(\H)$ is an algebra: if $T$ and $S$ are bounded operators and $\alpha$ is a complex number, then $\alpha T, T+S,$ and $TS$ are all bounded operators.
\end{fact}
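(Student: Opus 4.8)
The plan is to derive all three relations from a single auxiliary observation, namely that for \emph{every} vector $\ket{\psi}$ (not merely unit vectors) one has $\lVert T \ket{\psi} \rVert \leq \lVert T \rVert \, \lVert \ket{\psi} \rVert$. This follows from definition \ref{def:bounded} by rescaling: if $\ket{\psi} \neq 0$, then $\ket{\psi} / \lVert \ket{\psi} \rVert$ is a unit vector, to which the defining bound applies, and the case $\ket{\psi} = 0$ is trivial. Equivalently, $\lVert T \rVert$ equals the supremum of $\lVert T \ket{\psi} \rVert$ over all unit vectors $\ket{\psi}$. I would establish this supremum characterization at the outset, since each of the three claims is cleanest when phrased as a supremum over the unit sphere.

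For homogeneity, I would observe that for any unit vector $\lVert (\alpha T) \ket{\psi} \rVert = |\alpha| \, \lVert T \ket{\psi} \rVert$ by scalar-homogeneity of the Hilbert space norm; taking the supremum over unit vectors and pulling the constant $|\alpha|$ outside gives $\lVert \alpha T \rVert = |\alpha| \, \lVert T \rVert$, with the case $\alpha = 0$ immediate. For the triangle inequality I would apply the triangle inequality of the underlying Hilbert space norm pointwise, $\lVert (T+S) \ket{\psi} \rVert \leq \lVert T \ket{\psi} \rVert + \lVert S \ket{\psi} \rVert \leq \lVert T \rVert + \lVert S \rVert$ for every unit vector $\ket{\psi}$, and then take the supremum on the left. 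For submultiplicativity I would invoke the auxiliary bound twice: for a unit vector, $\lVert (TS) \ket{\psi} \rVert = \lVert T(S \ket{\psi}) \rVert \leq \lVert T \rVert \, \lVert S \ket{\psi} \rVert \leq \lVert T \rVert \, \lVert S \rVert$, where the last step uses $\lVert S \ket{\psi} \rVert \leq \lVert S \rVert$ since $\ket{\psi}$ is a unit vector; the supremum over unit vectors then finishes the claim.

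The three estimates immediately yield the algebra statement: they show $\lVert \alpha T \rVert$, $\lVert T + S \rVert$, and $\lVert TS \rVert$ are all finite whenever $\lVert T \rVert$ and $\lVert S \rVert$ are, so each of $\alpha T$, $T + S$, and $TS$ is again bounded and hence lies in $\B(\H)$. There is no substantive obstacle here; the only point requiring a little care is the passage from the infimum-over-bounds definition of the norm to the supremum-over-the-unit-sphere characterization, together with the general-vector estimate $\lVert T \ket{\psi} \rVert \leq \lVert T \rVert \, \lVert \ket{\psi} \rVert$, which is precisely what makes the submultiplicativity step go through.
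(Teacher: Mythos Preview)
Your argument is correct and is the standard textbook proof. The paper, however, does not actually prove this statement: it is recorded as a basic fact without any accompanying proof or sketch, presumably because it appears in the opening chapter of any functional analysis text. So there is nothing to compare against; your proposal simply fills in what the paper deliberately omits.
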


\begin{fact} \label{fact:adjoint}
	For any bounded operator $T$, there exists a bounded operator $T^*$ called its \textbf{adjoint} satisfying
	\begin{equation}
		\braket{x}{T y} = \braket{T^* x}{y}.
	\end{equation}
	For $T,S$ bounded operators and $\alpha$ a complex number, we have $T^{**} =T,$ $(TS)^* = S^* T^*,$ $(T + S)^* = T^* + S^*,$ and $(\alpha S)^* = \bar{\alpha} S^*.$
	
	We also have $\lVert T^* \rVert = \lVert T \rVert.$
\end{fact}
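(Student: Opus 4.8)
The plan is to construct $T^*$ using the Riesz representation theorem and then verify every algebraic property directly from the defining relation. First I would fix a vector $\ket{x} \in \H$ and consider the map $\ket{y} \mapsto \braket{x}{Ty}$. This is a linear functional, and it is bounded because $|\braket{x}{Ty}| \leq \lVert x \rVert \, \lVert Ty \rVert \leq \lVert T \rVert \, \lVert x \rVert \, \lVert y \rVert$ by Cauchy--Schwarz together with the definition of the operator norm. The Riesz representation theorem then produces a unique vector---which I name $T^* \ket{x}$---such that $\braket{x}{Ty} = \braket{T^* x}{y}$ for every $\ket{y}$. This defines $T^*$ as a map on $\H$ and is exactly the defining relation in the statement.

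Next I would verify that $T^*$ is linear and bounded. Linearity follows from uniqueness in Riesz: for scalars $a,b$ and vectors $\ket{x_1}, \ket{x_2}$, both $T^*(a\ket{x_1} + b\ket{x_2})$ and $a T^*\ket{x_1} + b T^*\ket{x_2}$ satisfy the same defining relation against every $\ket{y}$ (using conjugate-linearity of the inner product in its first slot), so non-degeneracy of the inner product forces them to agree. For boundedness I would use the standard estimate $\lVert T^* x \rVert^2 = \braket{T^* x}{T^* x} = \braket{x}{T T^* x} \leq \lVert T \rVert \, \lVert x \rVert \, \lVert T^* x \rVert$, which yields $\lVert T^* x \rVert \leq \lVert T \rVert \, \lVert x \rVert$ and hence $\lVert T^* \rVert \leq \lVert T \rVert$.

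The algebraic identities all follow one pattern: write the defining relation for the claimed adjoint, rewrite the left-hand inner product by peeling off operators one at a time, and invoke non-degeneracy. For example, $\braket{x}{TSy} = \braket{T^* x}{Sy} = \braket{S^* T^* x}{y}$ identifies $(TS)^* = S^* T^*$; the additive identity is analogous, and the scalar identity $(\alpha S)^* = \bar{\alpha} S^*$ produces a complex conjugate precisely because the inner product is conjugate-linear in the first argument. The relation $T^{**} = T$ comes from applying the defining relation twice and using the Hermitian symmetry $\braket{a}{b} = \overline{\braket{b}{a}}$.

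Finally, the norm equality is the slickest step. Having established $\lVert T^* \rVert \leq \lVert T \rVert$ for an arbitrary bounded operator, I would apply this bound with $T^*$ in place of $T$ to get $\lVert T^{**} \rVert \leq \lVert T^* \rVert$, then substitute $T^{**} = T$ to obtain $\lVert T \rVert \leq \lVert T^* \rVert$; the two inequalities give $\lVert T^* \rVert = \lVert T \rVert$. I expect the only genuine content to be the appeal to Riesz representation for existence---everything afterward is bookkeeping driven by non-degeneracy of the inner product---so the one thing demanding care is the placement of complex conjugates dictated by the inner-product convention.
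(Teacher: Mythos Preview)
Your proof is correct and follows the standard textbook construction via the Riesz representation theorem. The paper does not actually provide a proof of this fact; it is stated without proof as one of the ``results so fundamental that they can be found in the opening chapters of any text on functional analysis or operator theory,'' so there is nothing to compare against and your argument would serve perfectly well as a fill-in.
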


\begin{definition} \label{def:*-subalgebra}
	A subset of $\B(\H)$ is said to be a \textbf{$*$-subalgebra} if it is closed under scalar multiplication, operator multiplication, operator addition, and adjoints, and contains the identity operator.
\end{definition}

\begin{definition} \label{def:commutant}
	Given a Hilbert space $\H$ and a subset $M \subseteq \B(\H)$, the \textbf{commutant} $M'$ is the set of all bounded operators that commute with everything in $M$, i.e.,
	\begin{equation}
		M' \equiv \{T \in \B(\H) \text{ such that } T S = S T \text{ for all } S \in M \}.
	\end{equation}
\end{definition}

\begin{definition} \label{def:vN-algebra}
	A $*$-subalgebra $\A \subseteq \B(\H)$ is a \textbf{von Neumann algebra} if it is equal to its own double commutant, i.e., $\A = \A''.$
\end{definition}

\begin{fact} \label{fact:spectral-theorem}
	Let $\A$ be a von Neumann algebra and $T \in \A$ be a normal operator (i.e., $TT^* = T^* T$, for example this is satisfied when $T = T^*$).
	For any bounded function $f : \comps \to \comps$, there exists a corresponding bounded operator $f(T)$ in $\A$.
	This operator is characterized by the following properties.
	\begin{itemize}
		\item For the identity function $\text{id} : \comps \to \comps,$ we have $\text{id}(T) = T.$
		\item For any bounded $f : \comps \to \comps,$ we have $\bar{f}(T) = f(T)^*,$ where $\bar{f}$ denotes the complex conjugate of $f$.
		For example, for $f(z) = i z,$ we have $\bar{f}(z) = - i \bar{z}.$
		\item For any bounded functions $f, g : \comps \to \comps,$ we have $(fg)(T) = f(T) g(T)$ and $(f+g)(T) = f(T) + g(T),$ as well as $(\alpha f)(T) = \alpha f(T)$ for any complex number $\alpha.$
		\item If $f, g : \comps \to \comps$ agree on the spectrum of $T$ up to a set of measure zero (see appendix \ref{app:spectral-theory} for explanation of these terms), then we have $f(T) = g(T).$
	\end{itemize}
\end{fact}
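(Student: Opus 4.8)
The plan is to construct $f(T)$ in two stages --- first for continuous $f$, then for bounded Borel $f$ --- and to extract the von Neumann membership $f(T) \in \A$ from the double-commutant hypothesis $\A = \A''$ at the very end.

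First I would set up the continuous functional calculus. Because $T$ is normal, the norm-closed $*$-subalgebra of $\B(\H)$ generated by $T$ and $1$ is commutative (it is built from polynomials in $T$ and $T^*$, which commute), so by the Gelfand--Naimark theorem it is isometrically $*$-isomorphic to $C(\sigma(T))$, the continuous functions on the spectrum of $T$, with the coordinate function $z \mapsto z$ sent to $T$. This immediately supplies $f(T)$ for continuous $f$ together with all four algebraic properties in the statement: identity, conjugation, sums, and products simply transport the pointwise algebra of $C(\sigma(T))$ through the isomorphism, and the norm bound $\lVert f(T) \rVert = \sup|f|$ comes for free from isometry.

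Second, I would extend to bounded Borel $f$ by duality. For each pair of vectors $\ket{\phi}, \ket{\psi}$, the functional $f \mapsto \braket{\phi}{f(T)\psi}$ is bounded on $C(\sigma(T))$, so by the Riesz--Markov representation theorem it is integration against a complex regular Borel measure $\mu_{\phi,\psi}$ on $\sigma(T)$. For bounded Borel $f$ I then define $f(T)$ through the sesquilinear form $\braket{\phi}{f(T)\psi} = \int f \, d\mu_{\phi,\psi}$, and the estimate $\lVert f(T)\rVert \le \sup|f|$ shows this form is realized by a genuine bounded operator. The algebraic properties pass from the continuous to the Borel case by a bounded-pointwise-convergence argument; the cleanest packaging is to note that the indicator functions produce a projection-valued spectral measure $E$ with $f(T) = \int f\, dE$, from which the ``agree up to measure zero'' clause is immediate, since $f = g$ off an $E$-null set forces $\int (f-g)\, dE = 0$.

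The step I expect to be the real content --- and the reason this is stated for von Neumann algebras rather than merely for $\B(\H)$ --- is showing $f(T) \in \A$, which I would argue by commutants. Fix any $S \in \{T, T^*\}'$. Since $S$ commutes with $T$ and $T^*$ it commutes with every polynomial in them, hence with their norm limits, so $S$ commutes with every continuous $f(T)$; feeding this through the weak-limit construction of the previous paragraph shows $S$ also commutes with every bounded Borel $f(T)$, whence $f(T) \in \{T,T^*\}''$. Finally, because $T \in \A$ and $\A$ is a $*$-algebra we have $T^* \in \A$, so $\A' \subseteq \{T, T^*\}'$; taking commutants reverses this to $\{T, T^*\}'' \subseteq \A'' = \A$, giving $f(T) \in \A$. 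The point requiring care is exactly this interchange between ``commutes with all continuous functions of $T$'' and ``commutes with all Borel functions of $T$,'' which is what forces the Borel construction to respect the weak topology in which $\A$ is closed.
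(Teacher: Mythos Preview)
Your proposal is correct and follows essentially the same route that the paper sketches: the paper's own ``proof'' here is only a pointer to standard references and to appendix \ref{app:spectral-theory}, where exactly your two-stage construction appears --- Gelfand--Naimark for the continuous functional calculus (theorem \ref{thm:C*-spectral}), then Riesz--Markov to extend to $L^{\infty}(\sigma(T),\mu)$. Your explicit double-commutant argument for $f(T) \in \A$ is a clean addition that the paper leaves implicit in its identification of $\A(T)$ with $L^{\infty}$; otherwise the approaches coincide.
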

\begin{proof}[Sketch of proof]
	The proof of this statement is highly nontrivial, and in fact it is basically the entire content of the spectral theorem.
	While many analysis-oriented texts will frame the spectral theorem as a statement about representations of normal operators as multiplication operators on some function space $L^2(X)$, in a more algebraically focused reference like \cite{douglas1998banach}, the fact given above is taken as the core statement of the spectral theorem; the ``multiplication operator'' story is then derived as a consequence.
	
	Proofs of the spectral theorem in this form can be found in chapter 4 of \cite{douglas1998banach} and chapter 12 of \cite{rudin1991functional}, and further explanation of the algebraic framework for thinking about the spectral theorem is given in appendix \ref{app:spectral-theory}.
\end{proof}

\begin{remark}
	The ``multiplication operator'' version of the spectral theorem is the one that appeals most naturally to finite-dimensional intuition, because it has the flavor of ``diagonalizing an operator'' by finding a representation in which the operator acts by scalar multiplication.
	The reason for thinking of fact \ref{fact:spectral-theorem} as a version of the spectral theorem is that the main motivation for diagonalizing operators is so that we can apply functions to those operators by acting on eigenvalues.
	The algebraic version of the spectral theorem tells us how to apply those functions directly without needing to think about diagonalization.
	
	The translation between the ``algebraic'' and ``diagonalization'' versions of the spectral theorem works as follows.
	By the final bullet point in fact \ref{fact:spectral-theorem}, it suffices to consider bounded functions $f : \sigma(T) \to \comps,$ where $\sigma(T) \subseteq \comps$ denotes the spectrum of $T$.
	(See appendix \ref{app:spectral-theory} for definitions.)
	For any open subset $\omega \subseteq \sigma(T)$, we may consider the characteristic function $\chi_{\omega}$ that takes the value $1$ on $\omega$ and takes the value zero on $\sigma(T) - \omega.$
	It is easy to see from fact \ref{fact:spectral-theorem} that $\chi_{\omega}(T)$ is an orthogonal projector; the operator $\chi_{\omega}(T)$ is called a \textit{spectral projection} of $T.$
	It is also easy to see that the constant function $c_{1} : \sigma(T) \to \comps$ defined by $c_1(z) = 1$ is such that $c_1(T)$ is the identity operator on $\H.$
	If the spectrum of $T$ is discrete, then every point $\lambda \in \sigma(T)$ is an open set in $\sigma(T)$, and so there is a spectral projection $\chi_{\lambda}(T)$ satisfying $T \chi_{\lambda}(T) = \lambda \chi_{\lambda}(T).$
	We have $\sum_{\lambda} \chi_{\lambda} = c_1,$ hence $\sum_{\lambda} \chi_{\lambda}(T) = 1.$
	This is the usual form of the spectral theorem for operators with discrete spectra --- there exist projection operators $\chi_{\lambda}(T)$, which project onto subspaces where $T$ acts as multiplication by $\lambda,$ and such that the spectral subspaces $\chi_{\lambda}(T) \H$ collectively span the full Hilbert space $\H.$
	
	In the general case, where $T$ does not have a discrete spectrum, it does not necessarily have spectral projectors corresponding to individual eigenvalues.
	Instead, the operators $\chi_{\omega}(T)$ must project onto continuous subsets of $\sigma(T)$.
	In this case, one can still think of $\mu : \omega \mapsto \chi_{\omega}(T)$ as a \textit{projection-valued measure} on $\sigma(T)$, and write the formal integral $\int d\mu\, \chi_{\omega}(T) = 1$, which can be thought of as a completeness relation for the spectral projections of $T$.
	However, the interpretation of this formula is not as intuitive as in the discrete-spectrum case, making the algebraic version of the spectral theorem given in fact \ref{fact:spectral-theorem} superior for completely general bounded operators $T$.
	For more on projection-valued measures, see chapter 12 of \cite{rudin1991functional}.
\end{remark}

\begin{fact} \label{fact:unitaries}
	If $\A$ is a von Neumann algebra, then any $T \in \A$ can be written as a linear combination of two Hermitian operators in $\A$ or four unitary operators in $\A$. This is useful because one can usually prove statements about general operators in a von Neumann algebra by proving those statements for Hermitian or unitary operators and then showing that those statements are preserved under linear combinations.
\end{fact}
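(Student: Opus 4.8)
The plan is to prove the two claims in turn, reducing the four-unitary statement to the two-Hermitian statement. For the Hermitian decomposition (the Cartesian decomposition), given $T \in \A$ I would set $A = \tfrac{1}{2}(T + T^*)$ and $B = \tfrac{1}{2i}(T - T^*)$, so that $T = A + iB$ by inspection. A one-line computation using the adjoint identities of fact \ref{fact:adjoint} — namely $(T+S)^* = T^* + S^*$, $(\alpha S)^* = \bar{\alpha} S^*$, and $T^{**} = T$ — shows $A^* = A$ and $B^* = B$. Since $\A$ is a $*$-subalgebra, it is closed under adjoints, sums, and scalar multiples, so $A, B \in \A$; this settles the first claim.

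For the unitary decomposition, I would observe that because $T = A + iB$ with $A,B$ Hermitian, it suffices to write an arbitrary Hermitian element of $\A$ as a combination of two unitaries in $\A$ and then apply this to $A$ and to $B$. Given Hermitian $A \in \A$, I would first normalize by replacing $A$ with $A/\lVert A \rVert$ and absorbing the scalar into the final linear combination (the case $A = 0$ being trivial, since $0 = \tfrac{1}{2}(1) + \tfrac{1}{2}(-1)$), so that I may assume $\lVert A \rVert \leq 1$. Then the spectrum $\sigma(A)$ lies in $[-1,1]$, the function $x \mapsto \sqrt{1 - x^2}$ is bounded and real there, and fact \ref{fact:spectral-theorem} produces a Hermitian operator $\sqrt{1 - A^2} \in \A$. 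Setting $U_{\pm} = A \pm i\sqrt{1 - A^2}$, I would check $U_{\pm} \in \A$ and, using that $A$ and $\sqrt{1 - A^2}$ commute (both are functions of $A$, so multiplicativity $(fg)(A) = f(A)g(A)$ in fact \ref{fact:spectral-theorem} applies), that the cross terms cancel and $U_{\pm}^* U_{\pm} = U_{\pm} U_{\pm}^* = A^2 + (1 - A^2) = 1$. Hence $U_{\pm}$ are unitary and $A = \tfrac{1}{2} U_+ + \tfrac{1}{2} U_-$, giving four unitaries for $T$ in total.

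The main obstacle is the single nontrivial ingredient: the existence of the self-adjoint square root $\sqrt{1 - A^2}$ \emph{inside} $\A$, together with the facts that it remains in the algebra and commutes with $A$. This is precisely what the functional-calculus form of the spectral theorem (fact \ref{fact:spectral-theorem}) supplies, and everything else is bookkeeping with the adjoint and norm identities. The normalization to $\lVert A \rVert \leq 1$ is what guarantees $1 - A^2 \geq 0$, so that the square root is a genuine bounded real-valued function on $\sigma(A)$ and the resulting operator is Hermitian.
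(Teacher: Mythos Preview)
Your proof is correct and follows essentially the same approach as the paper: the Cartesian decomposition is identical, and your unitary construction $U_\pm = A \pm i\sqrt{1-A^2}$ after normalizing $\lVert A \rVert \leq 1$ is exactly the paper's $U = (S + i\sqrt{\lVert S\rVert^2 - S^2})/\lVert S\rVert$ with the norm absorbed beforehand rather than carried through. The paper likewise appeals to fact \ref{fact:spectral-theorem} to place the square root inside $\A$ and to verify unitarity, so the arguments coincide up to cosmetic bookkeeping.
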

\begin{proof}
	The representation of $T$ as a linear combination of two Hermitian operators in $\A$ is provided by the standard decomposition
	\begin{equation}
		T = \frac{T+ T^*}{2} + i \frac{T- T^*}{2i}.
	\end{equation}

	In light of this, one can prove the claim about unitary operators by showing that every Hermitian $S \in \A$ can be written as a linear combination of two unitary operators in $\A$. This is furnished by the decomposition
	\begin{equation} \label{eq:T-as-unitary-sum}
		S = \lVert S \rVert \frac{U + U^*}{2}
	\end{equation}
	with
	\begin{equation}
		U = \frac{S + i \sqrt{\lVert S \rVert^2 - S^2}}{\lVert S \rVert}
	\end{equation}
	Some work must be done to show that this expression makes sense and that it indeed defines a unitary operator.
	In light of fact \ref{fact:spectral-theorem}, we see that $U$ can be interpreted as $\hat{f}(S)$ where $\hat{f} : \comps \to \comps$ is any bounded extension to $\comps$ of the function $f : [-\lVert S \rVert, \lVert S \rVert] \to \comps$ given by
	\begin{equation}
		f(x) = \frac{x + i \sqrt{\lVert S \rVert^2 - x^2}}{\lVert S \rVert}.
	\end{equation}
	So $\hat{f}(S)$ is in $\A$; its unitarity and the identity \eqref{eq:T-as-unitary-sum} follow from the properties listed in fact \ref{fact:spectral-theorem}.
	Note that the spectrum property listed in fact \ref{fact:spectral-theorem} means that the behavior of $\hat{f}$ away from $[-\lVert S \rVert, \lVert S \rVert]$ is irrelevant for defining $\hat{f}(S)$ thanks to the fact that $S$ is Hermitian and so its spectrum is contained in $[-\lVert S \rVert, \lVert S \rVert].$ (See appendix \ref{app:spectral-theory}.)
\end{proof}

\begin{fact} \label{fact:commutant-vN}
	If $\A$ is a von Neumann algebra, then so is $\A'.$
\end{fact}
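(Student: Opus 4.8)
The plan is to verify the two defining properties of a von Neumann algebra for $\A'$: that it is a $*$-subalgebra of $\B(\H)$, and that it equals its own double commutant.

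First I would check that $\A'$ is a $*$-subalgebra. The identity commutes with everything, so $1 \in \A'$. If $T_1, T_2 \in \A'$ and $S \in \A$, then closure under scalar multiplication, addition, and multiplication follows by direct computation; for instance $(T_1 T_2) S = T_1 (T_2 S) = T_1 (S T_2) = (T_1 S) T_2 = S (T_1 T_2)$. The only property requiring genuine thought is closure under adjoints. If $T \in \A'$, then $TS = ST$ for every $S \in \A$; taking adjoints via fact \ref{fact:adjoint} gives $S^* T^* = T^* S^*$. Here I would use that $\A$ is self-adjoint: as $S$ ranges over $\A$, so does $S^*$, so $T^*$ commutes with all of $\A$ and hence $T^* \in \A'$. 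This is the one place where the hypothesis that $\A$ is a $*$-subalgebra (and not merely an algebra) is essential.

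Next I would establish $\A' = \A'''$ using two elementary facts about the commutant operation that hold for an arbitrary subset $M \subseteq \B(\H)$: (i) $M \subseteq M''$, which is immediate since every element of $M$ commutes with every element of $M'$ by the definition of $M'$; and (ii) the commutant is inclusion-reversing, i.e. $M \subseteq N \Rightarrow N' \subseteq M'$, which is equally direct from the definition. Applying (i) to the set $\A'$ gives $\A' \subseteq \A'''$. Applying (i) to $\A$ gives $\A \subseteq \A''$, and then (ii) yields $\A''' = (\A'')' \subseteq \A'$. Combining the two inclusions gives $\A' = \A'''$, so $\A'$ equals its own double commutant, and together with the first step this shows $\A'$ is a von Neumann algebra.

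It is worth emphasizing where the difficulty does and does not lie. The double-commutant equality $\A' = \A'''$ never uses the hypothesis $\A = \A''$; it is a purely formal consequence of facts (i) and (ii), valid for the commutant of \emph{any} subset of $\B(\H)$. The hypothesis that $\A$ is a von Neumann algebra enters only through self-adjointness, which is exactly what guarantees that $\A'$ is closed under adjoints. Thus the main, and only mildly subtle, point is the adjoint-closure step; everything else is routine.
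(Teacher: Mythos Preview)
Your proof is correct, and it is in fact more elementary than the route the paper takes. The paper argues via the topological characterization: it invokes the double commutant theorem (a $*$-subalgebra is a von Neumann algebra iff it is closed in the weak operator topology), observes that the commutant of any adjoint-closed set is a weakly closed $*$-subalgebra, and concludes. You instead verify $\A' = \A'''$ directly from the two formal properties $M \subseteq M''$ and inclusion-reversal of the commutant, which requires no topology whatsoever. Your observation that this triple-commutant identity holds for the commutant of \emph{any} subset, and that the von Neumann hypothesis on $\A$ is used only for self-adjointness (to get adjoint-closure of $\A'$), is exactly right --- and it shows that the paper's more general claim ``if $\A$ is any adjoint-closed subset then $\A'$ is a von Neumann algebra'' also follows from your argument, despite the paper calling that proof ``nontrivial.'' The paper's approach does buy something, though: it simultaneously establishes that commutants are weakly (hence strongly) closed, a fact the paper uses elsewhere (e.g.\ in the additivity lemma and the discussion of operator topologies).
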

\begin{proof}[Sketch of proof]
	This result is actually more general; if $\A$ is any subset of $\B(\H)$ closed under adjoints, then $\A'$ is a von Neumann algebra.
	The proof of this statement is nontrivial. 
	It uses the double commutant theorem, which says that any $*$-subalgebra of $\B(\H)$ that is topologically closed with respect to the ``weak operator topology''\footnote{The weak operator topology is explained in appendix \ref{app:operator-topologies}.} is a von Neumann algebra.
	One proves the fact quoted above by showing that the commutant of any adjoint-closed subset of $\B(\H)$ is a $*$-subalgebra that is closed in the weak operator topology, then appealing to the double commutant theorem.
	
	A proof of the double commutant theorem can be found in section 21 of \cite{conway2000course}.
	The claim about commutants of adjoint-closed sets being weak-operator-topology closed $*$-subalgebras is easy to prove from the definition of the weak operator topology, which is given in appendix \ref{app:operator-topologies}.
\end{proof}

\begin{fact}
	The intersection of any collection of von Neumann algebras is a von Neumann algebra.
\end{fact}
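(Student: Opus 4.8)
The plan is to verify directly the two conditions in Definition \ref{def:vN-algebra}: that the intersection is a $*$-subalgebra of $\B(\H)$, and that it coincides with its own double commutant. Let $\{\A_i\}_{i \in I}$ be an arbitrary family of von Neumann algebras on $\H$, and set $\A = \bigcap_{i \in I} \A_i$. The $*$-subalgebra condition is routine, and the real content --- which is still short --- lies in establishing $\A = \A''$ using only the inclusion-reversing behavior of the commutant.

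First I would check that $\A$ is a $*$-subalgebra. If $T, S \in \A$ and $\alpha \in \comps$, then $T$ and $S$ lie in every $\A_i$, so $\alpha T$, $T + S$, $TS$, and $T^*$ lie in every $\A_i$ as well, since each $\A_i$ is closed under these operations; hence all of these operators lie in $\A$. The identity lies in each $\A_i$ and therefore in $\A$. This establishes the first defining condition.

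For the second condition, I would use two elementary properties of the commutant that follow immediately from Definition \ref{def:commutant}: it is inclusion-reversing (if $M \subseteq N$ then $N' \subseteq M'$, since any operator commuting with all of $N$ in particular commutes with all of $M$), and any set is contained in its double commutant ($M \subseteq M''$). Since $\A \subseteq \A_i$ for each $i$, applying the commutant twice gives $\A'' \subseteq \A_i'' = \A_i$, where the final equality uses that $\A_i$ is itself a von Neumann algebra. As this holds for every $i$, I conclude $\A'' \subseteq \bigcap_{i \in I} \A_i = \A$. Combined with the general inclusion $\A \subseteq \A''$, this yields $\A = \A''$, completing the argument.

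I do not expect a serious obstacle here: the statement reduces to the monotonicity properties of the commutant map. The only point requiring care is that the double-commutant equality follows purely from the inclusion-reversing property together with $\A_i = \A_i''$, so one does not need to invoke the heavier double commutant theorem underlying Fact \ref{fact:commutant-vN}. In particular, the argument is insensitive to the cardinality of $I$, since both intersections and the antitonicity of the commutant hold for arbitrary families.
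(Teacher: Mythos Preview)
Your argument is correct and is precisely the ``fairly straightforward manipulations starting with the definition of a von Neumann algebra'' that the paper alludes to in its sketch; the paper itself does not spell out the details but simply refers to Dixmier. Your observation that only the antitonicity of the commutant and the identity $\A_i = \A_i''$ are needed, rather than the full double commutant theorem, is exactly right.
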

\begin{proof}[Sketch of proof]
	This can be shown by fairly straightforward manipulations starting with the definition of a von Neumann algebra.
	See proposition 1 in chapter I.1 of \cite{dixmier2011neumann}.
\end{proof}

\begin{definition} \label{def:center}
	Given a von Neumann algebra $\A$, the intersection $\Z = \A \cap \A'$ is a von Neumann algebra called the \textbf{center} of $\A$.
\end{definition}

\begin{definition}
	A von Neumann algebra $\A$ is a \textbf{factor} if $\Z$ consists only of scalar multiples of the identity.
\end{definition}

\begin{fact}
	$\A$ is a factor if and only if $\A$ and $\A'$ together generate all of $\B(\H),$ i.e., if the smallest von Neumann algebra containing both of them --- that is, $(\A \cup \A')''$ --- is equal to $\B(\H)$.
\end{fact}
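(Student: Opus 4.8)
The plan is to reduce the statement to a short computation of commutants, after which the only substantive input is the determination of the commutant of $\B(\H)$. First I would compute $(\A \cup \A')'$ directly from the definition of the commutant. An operator commutes with every element of $\A \cup \A'$ exactly when it commutes with everything in $\A$ (placing it in $\A'$) and with everything in $\A'$ (placing it in $\A''$), so $(\A \cup \A')' = \A' \cap \A''$. Because $\A$ is a von Neumann algebra, $\A'' = \A$, and this becomes $\A' \cap \A = \Z$. Taking commutants of both sides then gives $(\A \cup \A')'' = \Z'$, so the claim to be proved is equivalent to the assertion that $\Z' = \B(\H)$ if and only if $\Z = \comps 1$.

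The forward implication is immediate: if $\A$ is a factor then $\Z = \comps 1$, and every bounded operator commutes with scalar multiples of the identity, so $\Z' = \B(\H)$. For the converse I would suppose $\Z' = \B(\H)$ and take commutants once more. Since $\Z$ is a von Neumann algebra (definition \ref{def:center}), we have $\Z = \Z'' = (\Z')' = \B(\H)'$, so it remains only to identify $\B(\H)' = \comps 1$.

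This last identification is the one genuinely nontrivial step, and I expect it to be the main obstacle. The argument I have in mind is to show that any $T$ commuting with all of $\B(\H)$ acts as a scalar on every vector. For a unit vector $\ket{\psi}$, the rank-one projector $P_\psi = \ket{\psi}\bra{\psi}$ lies in $\B(\H)$, so $T P_\psi = P_\psi T$; applying this to $\ket{\psi}$ shows that $T\ket{\psi} = P_\psi T \ket{\psi}$ lies in the span of $\ket{\psi}$, i.e. $T\ket{\psi} = \lambda_\psi \ket{\psi}$. Thus every vector is an eigenvector of $T$. Comparing the eigenvalue equations for $\ket{\psi}$, $\ket{\phi}$, and $\ket{\psi} + \ket{\phi}$ for any two linearly independent vectors forces $\lambda_\psi = \lambda_\phi$, so the eigenvalue is a single constant $\lambda$ and $T = \lambda 1$. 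This establishes $\B(\H)' = \comps 1$, completing the converse via $\Z = \B(\H)' = \comps 1$ and hence proving the equivalence.
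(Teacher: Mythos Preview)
Your argument is correct and, in fact, more complete than what the paper supplies. The paper's sketch addresses only the parenthetical assertion---that $(\A \cup \A')''$ is the smallest von Neumann algebra containing $\A$ and $\A'$---by invoking the general fact that the von Neumann algebra generated by a self-adjoint set $M$ is $(M \cup M^*)''$, and then refers to Dixmier. It does not actually argue the equivalence ``$\A$ is a factor $\Leftrightarrow (\A \cup \A')'' = \B(\H)$.''

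Your route is different and more direct: rather than identifying the generated algebra, you compute the commutant $(\A \cup \A')' = \A' \cap \A'' = \Z$ in one line, reduce the claim to $\Z' = \B(\H) \Leftrightarrow \Z = \comps 1$, and then handle the only nontrivial ingredient, $\B(\H)' = \comps 1$, by an elementary rank-one projector argument. This buys you a self-contained proof with no external reference, whereas the paper's approach leaves the main content to the cited source. The two are compatible---your commutant computation is essentially the dual of the paper's ``generated algebra'' statement---but yours makes the logical structure of the equivalence explicit.
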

\begin{proof}[Sketch of proof]
	The statement that $(\A \cup \A')''$ is the smallest von Neumann algebra containing both $\A$ and $\A'$ is an application of the more general fact that the smallest von Neumann algebra containing a set $M \subseteq \B(\H)$ is $(M \cup M^*)''$.
	This is straightforward to prove from the definitions, and an explicit proof can be found in the paragraph before proposition 1 in chapter I.1 of \cite{dixmier2011neumann}.
\end{proof}

\begin{fact} \label{fact:polar-decomposition}
	A bounded operator $T \in \B(\H)$ has a \textbf{polar decomposition}, which is an expression of the form $T = V |T|$ where $|T| = \sqrt{T^* T}$ and $V$ is a partial isometry\footnote{A partial isometry is an operator $V$ such that $V^* V$ is an orthogonal projection. Equivalently, it is an isometry when its domain is restricted to the orthogonal complement of its kernel.} with initial space $\ker(V)^{\perp} = \bar{\im(|T|)}$\footnote{An overline over a set denotes the topological closure of that set.} and final space $\im(V) = \bar{\im(T)}.$
	This decomposition is unique in the sense that if $T = U P$ is another decomposition into a partial isometry $U$ and a positive operator $P$, then we have $U = V$ and $P = |T|$ so long as $\ker(U)^{\perp}$ is equal to $\bar{\im(P)}.$
	
	If $\A$ is a von Neumann algebra and $T$ is in $\A$ with polar decomposition $T = V |T|,$ then both $|T|$ and $V$ are in $\A$.
\end{fact}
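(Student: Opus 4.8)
The plan is to treat the three assertions in turn, spending most of the effort on the final claim about membership in $\A$, which is where the genuine content lies; the existence and uniqueness of the polar decomposition are standard Hilbert space facts.

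First I would establish existence. Define $|T| = \sqrt{T^* T}$ through the functional calculus of fact \ref{fact:spectral-theorem}, applied to the positive operator $T^* T$ and (a bounded extension of) the square-root function, which is bounded on the spectrum $\sigma(T^*T) \subseteq [0, \lVert T \rVert^2]$. The key computation is the norm identity $\lVert |T| \ket{\psi} \rVert^2 = \bra{\psi} |T|^2 \ket{\psi} = \bra{\psi} T^* T \ket{\psi} = \lVert T \ket{\psi} \rVert^2$, valid for all $\ket{\psi} \in \H$. This shows at once that $\ker(|T|) = \ker(T)$ and that the assignment $|T| \ket{\psi} \mapsto T \ket{\psi}$ is a well-defined linear isometry from $\im(|T|)$ onto $\im(T)$. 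Extending it by continuity to $\overline{\im(|T|)}$ and declaring it zero on the orthogonal complement $\overline{\im(|T|)}^{\perp} = \ker(|T|)$ produces a partial isometry $V$ with the stated initial space $\overline{\im(|T|)}$ and final space $\overline{\im(T)}$, and by construction $V|T| = T$.

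For uniqueness I would suppose $T = U P$ with $P \geq 0$, $U$ a partial isometry, and $\ker(U)^{\perp} = \overline{\im(P)}$. Then $T^* T = P U^* U P = P^2$, because $U^* U$ is the projection onto the initial space $\ker(U)^{\perp} = \overline{\im(P)}$ and hence fixes $\im(P)$; uniqueness of positive square roots gives $P = \sqrt{T^* T} = |T|$. The equation $U |T| = V |T|$ then forces $U = V$ on $\overline{\im(|T|)}$, while both operators vanish on its complement, so $U = V$.

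The main obstacle, and the part genuinely special to von Neumann algebras, is showing $|T|, V \in \A$. That $|T| \in \A$ is immediate: $T \in \A$ gives $T^* T \in \A$, and fact \ref{fact:spectral-theorem} guarantees that the bounded function $\sqrt{T^* T}$ of $T^* T$ again lies in $\A$. For $V$, since $\A = \A''$ it suffices to show $V$ commutes with every operator in $\A'$, and by fact \ref{fact:unitaries} (applied to the von Neumann algebra $\A'$) it is enough to check this against unitaries $U' \in \A'$. Such a $U'$ commutes with both $T$ and $|T|$, the latter because $|T| \in \A$. I would then verify $U' V = V U'$ separately on the two summands of $\H = \overline{\im(|T|)} \oplus \ker(|T|)$. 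On vectors $|T| \ket{\psi}$ one computes $U' V |T| \ket{\psi} = U' T \ket{\psi} = T U' \ket{\psi} = V |T| U' \ket{\psi} = V U' |T| \ket{\psi}$, so the two operators agree on $\im(|T|)$ and hence on its closure by continuity; on $\ker(|T|) = \ker(V)$ one uses that $U'$ preserves $\ker(|T|)$ (since it commutes with $|T|$) to get $V U' \ket{\xi} = 0 = U' V \ket{\xi}$. Thus $U' V = V U'$ on all of $\H$, giving $V \in \A'' = \A$. The one subtlety to watch is the careful bookkeeping of initial and final spaces, so that the two pieces of this argument genuinely exhaust $\H$.
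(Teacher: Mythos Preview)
Your proof is correct. The approach to existence, uniqueness, and $|T| \in \A$ matches the paper's (which simply defers existence/uniqueness to standard references and cites fact \ref{fact:spectral-theorem} for $|T| \in \A$). The argument for $V \in \A$ differs in its final mechanism. Both you and the paper reduce to showing that $V$ commutes with every unitary $U' \in \A'$, but the paper then writes $T = (U')^* T U' = (U')^* V U' \, |T|$ and invokes the \emph{uniqueness} clause of the polar decomposition to conclude $(U')^* V U' = V$; this requires a side verification that $\ker((U')^* V U') = \ker(V)$, which the paper handles by observing that $V^* V \in \A$ via the spectral theorem. You instead verify $U' V = V U'$ directly on the two summands $\overline{\im(|T|)} \oplus \ker(|T|)$ of $\H$, using only that $U'$ commutes with $T$ and $|T|$. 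Your route is slightly more hands-on but avoids the circularity of reusing uniqueness inside the proof and sidesteps the kernel-matching check; the paper's route is a clean one-line conjugation once that check is done. Both are standard and equally valid.
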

\begin{proof}
	The existence and uniqueness of the polar decomposition can be found in any text on operator theory.
	The fact that $|T|$ is in $\A$ follows from fact \ref{fact:spectral-theorem}.
	To show that $V$ is in $\A$, one shows that $V$ commutes with every unitary in $\A'$ and then appeals to fact \ref{fact:unitaries}.
	To show that $V$ commutes with every unitary in $\A'$, we let $U$ be one such unitary,and write
	\begin{equation}
		T = U^* T U = U^* V |T| U = U^* V U |T|.
	\end{equation}
	One can then show $U^* V U = V$ using uniqueness of the polar decomposition, for which it must be shown that the kernel of $U^* V U$ is the same as the kernel of $V$. $(1 - V^* V)$ is the projector onto the kernel of $V$, so if one shows that $U$ commutes with $V^* V$, then one has
	\begin{equation}
		U^* V U (1 - V^* V) = 0 \quad \text{and} \quad U^* V U (V^* V) = U^* V U,
	\end{equation}
	hence $\ker(V) \subseteq \ker(U^* V U)$ and $\ker(V)^{\perp} \subseteq \ker(U^* V U)^{\perp},$ so we may conclude $\ker(V) = \ker(U^* V U).$
	To show that $U$ commutes with $V^* V,$ we note that $V^* V$ is the projector onto the support of $T$, which is certainly in $\A$ as via fact \ref{fact:spectral-theorem} we have $V^* V = f(T)$ with
	\begin{equation}
		f(x) = \begin{cases}
						0 & x = 0,\\
						1 & \text{otherwise}.
				\end{cases}
	\end{equation}
	So $U$, being in $\A',$ commutes with $V^* V.$
\end{proof}

\begin{fact} \label{fact:invariant-commutant}
	If $\A$ is a von Neumann algebra and $P$ is an orthogonal projector in $\B(\H),$ then $P \H$ is an invariant subspace for $\A$ (i.e., $\A P \H \subseteq P \H$) if and only if we have $P \in \A'.$
\end{fact}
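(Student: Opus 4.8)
The plan is to prove the two implications separately, extracting the decisive leverage from the fact that $\A$ is closed under adjoints (definition \ref{def:vN-algebra}).

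For the forward direction, I would assume $P \in \A'$ and verify invariance by a one-line computation: for any $A \in \A$ and any $x \in P\H$ (so that $Px = x$), the commutation $AP = PA$ gives $Ax = APx = PAx \in P\H$. Hence $\A P\H \subseteq P\H$, and this direction requires nothing beyond the definition of the commutant.

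For the converse, the key is to translate the geometric invariance condition into an algebraic identity. Invariance $\A P\H \subseteq P\H$ says precisely that for every $A \in \A$ the vector $APx$ lies in $P\H$ for all $x$, i.e.\ $P$ fixes each $APx$; equivalently, as operators,
\begin{equation}
	PAP = AP \quad \text{for all } A \in \A.
\end{equation}
Now I would use that $\A$ is a $*$-algebra: since $A^* \in \A$ as well, the same identity applied to $A^*$ reads $PA^*P = A^*P$. Taking adjoints of this equation (and using $P^* = P$) yields $PAP = PA$. Combining the two displayed identities gives $AP = PAP = PA$, so $P$ commutes with every $A \in \A$, which is exactly $P \in \A'$.

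The argument is short, and the only real subtlety --- the step I would flag as the crux --- is recognizing that invariance alone gives only the ``upper-triangular'' relation $PAP = AP$, and that one must feed $A^*$ back through the $*$-closure of $\A$ to upgrade this to genuine commutation. Without adjoint-closure the statement would fail, since a subspace can be invariant for an algebra without its orthogonal complement also being invariant, so this is precisely where all the content of the lemma resides.
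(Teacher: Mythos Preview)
Your proof is correct and follows essentially the same approach as the paper: both derive the ``upper-triangular'' identity $PAP = AP$ from invariance and then use adjoint-closure to obtain the companion identity $PAP = PA$. The only cosmetic difference is that the paper first reduces to Hermitian $T \in \A$ (via fact \ref{fact:unitaries}) so that taking the adjoint of $PTP = TP$ directly yields $PT = PTP$, whereas you apply the invariance identity to $A^*$ instead; your route is slightly more direct since it avoids invoking the Hermitian reduction.
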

\begin{proof}
	If $P$ is in $\A'$ and $T$ is in $\A$, then we clearly have $T P \ket{x} = P T \ket{x}$ for all $\ket{x} \in \H$, hence $\A P \H \subseteq P \H.$
	
	Conversely, suppose we have $\A P \H \subseteq P \H$.
	We want to show that $P$ commutes with $\A$.
	By fact \ref{fact:unitaries}, it suffices to show that $P$ commutes with every Hermitian operator in $\A$.
	So let $T \in \A$ be Hermitian; the inclusion $\A P \H \subseteq P \H$ implies
	\begin{equation}
		P T P = T P.
	\end{equation}
	Taking adjoints gives
	\begin{equation}
		P T = P T P = T P,
	\end{equation}
	so $P$ commutes with $T$.
\end{proof}

\begin{fact} \label{fact:projector-infima-and-suprema}
	If $\{P_{\alpha}\}$ is a set of projectors in the von Neumann algebra $\A$, then the supremum\footnote{The supremum of a family of positive operators is the least positive operator that dominates it. An analogous definition holds for the infimum. Existence of these suprema and infima is part of the proof.} $\sup_{\alpha} \{P_{\alpha}\}$ and infimum $\inf_{\alpha} \{P_{\alpha}\}$ are projectors in $\A$.
\end{fact}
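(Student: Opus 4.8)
The plan is to realize the supremum and infimum geometrically, as projections onto the closed span and the intersection of the ranges of the $P_\alpha$, and then to prove membership in $\A$ using the invariant-subspace characterization of fact \ref{fact:invariant-commutant}. Throughout I use the standard correspondence between projectors and closed subspaces: for projectors $P, Q$ one has $P \leq Q$ (meaning $Q - P$ is positive) if and only if $P\H \subseteq Q\H$, which follows from fact \ref{fact:spectral-theorem} together with the observation that $P \leq Q$ forces $QP = P$. Write $M_\alpha = P_\alpha \H$ for the range of each projector.

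For the supremum, let $M = \overline{\operatorname{span}\left( \bigcup_\alpha M_\alpha \right)}$ be the smallest closed subspace containing every $M_\alpha$, and let $P$ be the orthogonal projection onto $M$. Since $M_\alpha \subseteq M$ we have $P \geq P_\alpha$ for every $\alpha$; and if $R$ is any projector with $R \geq P_\alpha$ for all $\alpha$, then $R\H$ is a closed subspace containing $\bigcup_\alpha M_\alpha$, hence containing $M$, so $R \geq P$. Thus $P$ is the least projector dominating the family, which also establishes existence. For the infimum, let $N = \bigcap_\alpha M_\alpha$, which is automatically closed, and let $Q$ be the projection onto $N$; the dual argument shows $Q$ is the greatest projector dominated by every $P_\alpha$. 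Alternatively, the infimum follows from the supremum via the duality $\inf_\alpha P_\alpha = 1 - \sup_\alpha (1 - P_\alpha)$, each $1 - P_\alpha$ being a projector in $\A$ since $1 \in \A$.

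The main work is the membership $P, Q \in \A$, and the crucial input is fact \ref{fact:invariant-commutant}. Since $\A'$ is itself a von Neumann algebra (fact \ref{fact:commutant-vN}) with $(\A')' = \A'' = \A$, that fact applied to $\A'$ says that a projector $R$ lies in $\A$ if and only if $R\H$ is invariant under $\A'$. Because each $P_\alpha \in \A$, the same fact gives that each $M_\alpha = P_\alpha \H$ is $\A'$-invariant. It then suffices to check that the subspace constructions preserve $\A'$-invariance: a finite sum of invariant subspaces is invariant, and the closure of an invariant subspace is invariant because every operator in $\A'$ is continuous, so that invariance survives the limit; hence $M$ is $\A'$-invariant and $P \in \A$. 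Likewise an intersection of invariant subspaces is invariant, so $N$ is $\A'$-invariant and $Q \in \A$.

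One subtlety worth flagging concerns the word \emph{supremum}: the object produced above is the least upper bound in the lattice of projectors, and it should not be conflated with a least upper bound in the poset of all positive operators, which in general fails to exist even for a pair of orthogonal rank-one projectors (the self-adjoint operators form an anti-lattice, so least upper bounds exist only for comparable pairs). For the purposes of this fact, ``supremum'' and ``infimum'' are correctly understood as the projection-lattice join and meet realized by $M$ and $N$, and with that reading the only genuine obstacle is the $\A$-membership, where fact \ref{fact:invariant-commutant} does all the heavy lifting and the lone analytic point is the continuity argument for closures.
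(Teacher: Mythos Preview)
Your proof is correct and follows essentially the same route as the paper's: realize the supremum and infimum as the projections onto $\overline{\sum_\alpha P_\alpha \H}$ and $\bigcap_\alpha P_\alpha \H$, verify these are the lattice join and meet, and then obtain membership in $\A$ via fact~\ref{fact:invariant-commutant} by checking that closed spans and intersections of $\A'$-invariant subspaces remain $\A'$-invariant. Your closing remark about the anti-lattice structure of self-adjoint operators is a worthwhile sharpening: the paper's footnote speaks of the ``least positive operator'' dominating the family, but as you note (and as one can check already for two orthogonal rank-one projections in $\mathbb{C}^2$), no such least upper bound exists among all positive operators, so the statement only makes sense as the projection-lattice join---which is exactly what both you and the paper actually construct.
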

\begin{proof}
	Since each $P_{\alpha}$ is an orthogonal projector, its corresponding subspace $P_{\alpha} \H$ is a closed subspace of $\H$.
	The subspace $\bar{\sum_{\alpha} P_{\alpha} \H}$, which is the closure of all finite linear combinations of vectors in the various $P_{\alpha} \H$ subspaces, is a closed subspace.\footnote{This is important because of the basic fact in Hilbert space theory that a subspace of $\H$ has an orthogonal projector if and only if it is topologically closed.}
	We will call the orthogonal projector onto that subspace $\vee_{\alpha} P_{\alpha}.$
	The subspace $\cap_{\alpha} P_{\alpha} \H$ is also closed, so it has an orthogonal projector that we will call $\wedge_{\alpha} P_{\alpha}.$
	
	It is a straightforward exercise to show that $\vee_{\alpha} P_{\alpha}$ is the supremum of $\{P_{\alpha}\},$ and $\wedge_{\alpha} P_{\alpha}$ is its infimum.\footnote{For example, we clearly have $\wedge_{\alpha} P_{\alpha} \leq P_{\beta}$ for any $P_{\beta},$ and if $Q$ is a projector with $Q \leq P_{\alpha}$ for all $P_{\alpha},$ then we clearly have $Q \leq \wedge_{\alpha} P_{\alpha}.$ This establishes the ``infimum'' claim; the proof of the ``supremum'' claim is analogous.}
	What we need to show is that if all $\{P_{\alpha}\}$ projectors are in $\A,$ then $\vee_{\alpha} P_{\alpha}$ and $\wedge_{\alpha} P_{\alpha}$ are in $\A$ as well.
	By fact \ref{fact:invariant-commutant}, we can show this by showing that the subspaces $\vee_{\alpha} P_{\alpha} \H$ and $\wedge_{\alpha} P_{\alpha} \H$ are invariant subspaces for $\A'.$
	But this follows almost immediately from the observation (also from fact \ref{fact:invariant-commutant}) that $P_{\alpha} \in \A$ implies that each $P_{\alpha} \H$ is an invariant subspace for $\A'.$
	This last step is left as an exercise.
\end{proof}

\section{Why factors?}
\label{sec:factors}

In the remainder of these notes, excepting section \ref{sec:non-factor}, we will limit our attention to factors --- those von Neumann algebras for which $\Z  = \A \cap \A'$ consists only of scalar multiples of the identity.
There are two reasons for doing this, which I will explain in the next two subsections.
The first is that factors show up generically in quantum field theory; the second is that even when a von Neumann algebra is not a factor, it can be decomposed into factors.
Readers who are already happy restricting their attention to factors are encouraged to skip directly to section \ref{sec:big-picture}.

\subsection{Factors appear in quantum field theory}

Suppose we are studying quantum field theory in Minkowski spacetime, and we consider the von Neumann algebra $\A_\R$ of operators associated to a Rindler wedge $\R$.
Formally, we think of operators in $\A_R$ as being bounded functions of operators of the form
\begin{equation}
	\phi[f] = \int_{R} d^{d} x \sqrt{g} f(x) \phi(x),
\end{equation}
where $\phi(x)$ is a field and $f(x)$ is some function whose support is a compact set in $\R.$
See figure \ref{fig:smearing-1}.
For this discussion, it is very important that $\phi(x)$ itself is not actually an operator on Hilbert space, even an unbounded one, but rather an \textit{operator-valued distribution}.
This is because one of the fundamental properties of quantum field theory is that the leading singularity of the two-point function is universal among all states in the Hilbert space, that is, we have
\begin{equation}
	\bra{\Psi} \phi(x) \phi(x') \ket{\Psi} \sim_{x \to x'} \bra{\Omega}\phi(x) \phi(x') \ket{\Omega} + \text{subleading terms}.
\end{equation}
Consequently, for any $\ket{\Psi} \in \H,$ the symbol $\phi(x) \ket{\Psi}$ cannot represent a vector in the Hilbert space, because it would need to have infinite norm.
By contrast, smeared objects like $\phi[f]$ can be actual (generally unbounded) operators on $\H$, as the smearing function $f(x)$ can be used to smooth out the singularities, provided that its support is sufficiently nice --- see \cite{wittenenergy} for further discussion.

\begin{figure}[h]
	\centering
	\includegraphics{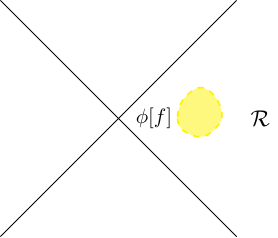}
	\caption{In quantum field theory, a field $\phi(x)$ is an operator-valued distribution, which can be used to create an operator $\phi[f]$ by smearing against a compactly supported function $f$.}
	\label{fig:smearing-1}
\end{figure}

The center of $\A_R$ will consist of observables that we can create by smearing field operators in $\R,$ such that the smeared operator commutes with every other observable that can be made up of field operators in $\R$.
The identity operator has this property, since it is the trivial object constructed by smearing no field operators.
Should we expect any nontrivial central operator to exist?
If the smearing function $f$ contains any points in the interior of $\R$, then $\phi[f]$ will not be central, because then we could consider another smearing function $g$ whose support is timelike separated from the support of $f$, and we would not expect $\phi[f]$ to commute with $\phi[g]$ unless all field operators commute at timelike separation, which would be a very strange property for a field theory to have.
See figure \ref{fig:smearing-2}.
The only way we could hope to construct a central operator in $\A_R$ would be to localize it to the bifurcation surface $\B = \R \cap \L$, where $\L$ is the complementary Rindler wedge.
I.e., there must be some field $\chi(x)$ and some smearing function $b(x)$ localized to $\B$ such that $\chi[b]$ is an operator in the theory.
Even this would not be enough, of course; we would have to verify that $\chi[b]$ commutes with every operator in $\A_R$ whose smearing region includes portions of the Rindler horizon.
But the requirement that there exist a nontrivial central operator $\chi[b]$ localized to the bifurcation surface $\B = \R \cap \L$ is already highly suspect.

\begin{figure}[h]
	\centering
	\includegraphics{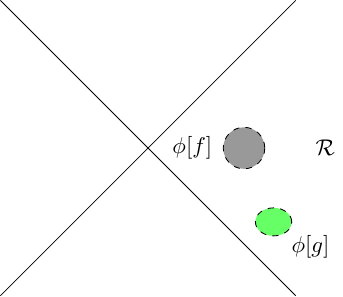}
	\caption{When the supports of $f$ and $g$ are timelike separated, we should not expect $\phi[f]$ and $\phi[g]$ to commute.}
	\label{fig:smearing-2}
\end{figure}

In a theory with charge, like quantum electrodynamics (QED), we might hope that there exists some operator $Q_{\B}$ localized to $\B$ that measures the total flux of the electromagnetic field through $\B$.
In \textit{pure} quantum electrodynamics, with no matter fields to carry charge, we actually can expect such an operator to exist.
This is because if $\B'$ is a surface slightly perturbed from $\B$, then using Stokes' theorem the difference between $Q_{\B}$ and $Q_{\B'}$ can be written as the integral of a current operator over a region lying between $\B$ and $\B'.$
In pure QED, the current operator associated with the electromagnetic field is identically zero, so $Q_{\B}$ and $Q_{\B'}$ represent the exact same physical quantity.
Any smearing of $Q_{\B}$ in directions orthogonal to $\B$ can be written as an integral over some continuous family of deformations $Q_{\B'}$; since each of these is equivalent to $Q_{\B},$ any smearing of $Q_{\B}$ can be written in terms of $Q_{\B}$ itself.
Consequently, after we ``smear $Q_{\B}$ to make it a real operator,'' we will end up with some multiple of $Q_{\B}$, and we can conclude that $Q_{\B}$ was an operator all along.\footnote{This argument was explained to me by Daniel Harlow.}

In a theory with matter fields that carry charge, however, no such argument can be made.
$Q_{\B}$ and $Q_{\B'}$ will no longer represent the same physical quantity, as their difference acts nontrivially on states that have electric charge in a region between $\B$ and $\B'.$
Can we argue that $Q_{\B}$ is not actually an operator in the theory with charge carriers, and consequently argue that the local algebras in a theory with charge carriers are factors?
I don't know how to do this rigorously, but my intuition is as follows.
If $Q_{\B}$ were a true operator in the theory, then we should expect $Q_{\B'}$ to be a true operator as well.
The difference between them, which is expressible as an integral of the current over a region between $\B$ and $\B'$, would then have to be an operator.
But we should not expect the integral of a current over a generic codimension-1 region to be an operator in the theory, since its two-point function is expected to have universal divergences associated to the edge of the region.
This is similar to the lore that a half-sided boost, which can be expressed formally as an integral of the stress-energy tensor over a partial Cauchy slice, is not representable as an operator in any quantum field theory.

\subsection{Every non-factor algebra is a combination of factors}
\label{subsec:factor-decomposition}

Suppose the heuristic arguments of the previous subsection fail, or we are interested in studying the algebraic structure of a theory like pure QED with charge but no charge carriers.
Then we may not be justified in restricting our attention to factors in describing the algebra of operators corresponding to a region of spacetime.
Luckily, there is a theorem due to von Neumann \cite{von1949rings} which states that every von Neumann algebra\footnote{As far as I know, the theorem only holds for von Neumann algebras on separable Hilbert spaces, but these are the ones of interest in physics.} can be expressed as a combination of factors.
To be mathematically precise, every von Neumann algebra can be expressed as a \textit{direct integral} of factors.
The definition of a direct integral is itself fairly technical, and the proof of von Neumann's reduction theorem is even harder to understand, so I will try to provide some intuition.

Suppose we have a von Neumann algebra $\A$ whose center $\Z = \A \cap \A'$ is nontrivial.
Note that $\Z$ is an abelian von Neumann algebra.
It is an interesting fact that on a separable Hilbert space $\H$, every abelian von Neumann algebra is generated by a single Hermitian operator; that is, there exists some Hermitian $T \in \A$ such that $\A = \{T\}'',$ the double commutant of the singleton set $\{T\}.$
Equivalently, as explained in the proof of fact \ref{fact:commutant-vN}, $\A$ consists of all weak limits\footnote{See appendix \ref{app:operator-topologies}.} of polynomials in $T$.
This highly nontrivial theorem was proved by von Neumann in \cite{v1930algebra}, with an English translation available in \cite{muraskin1995neumann}.
It is given as exercise 3f in chapter I.7.3 of \cite{dixmier2011neumann}, where some tips are given as to how it can be proved.
To express $\A$ in terms of factors, the idea is to decompose $\H$ into eigenspaces of $T$, to show that the restriction of $\A$ to each eigenspace is a factor, and to represent $\A$ as the direct sum over these restrictions, i.e., as the set of operators that are block diagonal with respect to the spectral decomposition of $T$.
While this prescription isn't technically accurate in all cases --- since $T$ might have a continuous spectrum, in which case it has no eigenvalues\footnote{This terminology is explained more thoroughly in appendix \ref{app:spectral-theory}. The point is that the spectrum of $T$ consists of all those numbers $\lambda$ for which the operator $T - \lambda$ is not invertible. If the failure of invertibility is because $T - \lambda$ is not injective, then there is an eigenvector $\ket{\lambda}$ with $T \ket{\lambda} = \lambda \ket{\lambda}.$ However, if the failure of invertibility is because $T - \lambda$ is not surjective, then there need not be a corresponding eigenvector, so $\lambda$ need not be an eigenvalue.} --- I find the intuition helpful even in the general case.

The spectrum\footnote{See appendix \ref{app:spectral-theory}.} of $T$, denoted $\sigma(T),$ is some subset of the real line, and the spectral theorem asserts the existence of a measure $\mu$ on $\sigma(T)$ such that $\Z$ is in one-to-one correspondence with the bounded functions of $\sigma(T)$ quotiented by the equivalence relation that sets two functions equal if they differ only on a set of measure zero.
If the spectrum of $T$ happens to be a countable set of discrete eigenvalues, then each point in $\sigma(T)$ has nonzero measure and corresponds to an eigenspace of $T$.
In this case, $\Z$ is the set of block-diagonal operators that act as a multiple of the identity within each eigenspace.
Let us call the eigenspaces $X_j$ and their orthogonal projectors $P_j.$
The Hilbert space $\H$ can be written as the direct sum $\H = \oplus_j X_j.$
It is a general fact about von Neumann algebras that for any orthogonal projection $P \in \A$ whose image is the subspace $X$, the algebra $P \A P$ restricted to $X$ is a von Neumann algebra with center $P \Z P$ --- this is proved, for example, as proposition 43.8 of \cite{conway2000course}.
So each $P_j \A P_j$ is a von Neumann algebra whose center, $P_j \Z P_j,$ is trivial.
That is, each $P_j \A P_j$ is a factor.
One can then show that the original von Neumann algebra $\A$ is equal to the direct sum $\A = \oplus_{j} P_j \A P_j$ --- for every $a \in \A$ we have $a = \sum_{j,k} P_j a P_k,$ and since each $P_j$ is central we have
\begin{equation}
	a = \sum_{j,k} P_j a P_k = \sum_{j,k} P_j P_k a = \sum_{j} P_j P_j a = \sum_{j} P_j a P_j.
\end{equation}

Thus far, I have explained how a general von Neumann algebra can be written as a direct sum of factors in the case where its center is generated by a self-adjoint operator $T$ with discrete spectrum.
In the general case, $T$ might have a completely continuous spectrum, and $\mu$ might assign measure zero to any individual point in the spectrum while assigning nonzero measure to open subsets of the spectrum --- this is the case where $T$ has no eigenvalues at all.
There is still a spectral projection $P_{\omega}$ associated with any open subspace $\omega$ of the spectrum of $T$, which projects onto a subspace $X_{\omega} \subseteq \H.$
We can pick some orthonormal set of spectral projections $P_{\omega_j}$ and decompose $\H$ and $\A$ as $\H = \oplus X_{\omega_j}$ and $\A = \oplus P_{\omega_j} \A P_{\omega_j},$ but there is no reason to expect that the restriction of $P_{\omega_j} \A P_{\omega_j}$ to $X_{\omega_j}$ will be a factor.
In fact, it generally will not be a factor, as if $\omega$ is an open subset of $\sigma(T)$ and $\omega' \subsetneq \omega$ is a proper open subset of $\omega,$ then $P_{\omega'}$ will be a nontrivial operator in the center of $P_{\omega} \A P_{\omega}.$
Intuitively, when $T$ has a continuous spectrum, we can always ``shrink its spectral subspaces further,'' which prevents $P_{\omega} \A P_{\omega}$ from being a factor.

Due to this ``shrinking'' issue, it seems that the only way we could expect to express $\A$ in terms of factors would be to assign, for each $\lambda \in \sigma(T)$, a von Neumann factor $\A_{\lambda}$ and a Hilbert space $X_{\lambda}$ such that, in some appropriate technical sense, we can write the expressions:
\begin{align}
	\H
		& = \int_{\sigma(T)} d\mu\, X_{\lambda}, \\
	\A
		& = \int_{\sigma(T)} d\mu\, \A_{\lambda}.
\end{align}
Ideally, we would like to do this in such a way that the expressions descend naturally to any open subset of the spectrum:
\begin{align}
	X_{\omega}
		& = \int_{\omega}\, d\mu X_{\lambda}, \label{eq:hilbert-space-integral} \\
	P_{\omega} \A P_{\omega}
		& = \int_{\omega}\, d\mu \A_{\lambda}. \label{eq:von-neumann-integral}
\end{align}
We are faced with a pretty big conceptual issue when we try to do this: what Hilbert space $X_{\lambda}$ could we possibly hope to associate with the point $\lambda \in \sigma(T),$ when that point is not an eigenvalue?
There is no eigenspace associated with $\lambda,$ so we can't hope to proceed by choosing the $X_{\lambda}$ Hilbert spaces to be eigenspaces of $T$ (except in the case discussed above where the spectrum of $T$ consists only of discrete eigenvalues).

As far as I can tell, there is no truly ``physical'' answer to this question.
The actual construction is fairly ad hoc; I will sketch a simplified version of it in this paragraph, then provide a reference for further study.
The idea is to remember that due to the spectral theorem, we can map $\H$ unitarily onto the function space $L^2(\sigma(T), \mu).$\footnote{Technically, it may be unitarily equivalent to a function space over several copies of $\sigma(T)$ with slightly different measures, but I will ignore this subtlety --- hence why this is a ``simplified version'' of the construction. See e.g. theorem VII.3 of \cite{reed1980functional}.}
Every vector $\ket{x} \in \H$ then corresponds to some equivalence class of square integrable functions on $\sigma(T)$; for each $\ket{x},$ we will pick some representative $f_x$ within that class.
For any given $\lambda \in \sigma(T),$ we can then define a quadratic form $q_{\lambda}$ on $\H$ given by
\begin{equation}
	q_{\lambda}(x, y) = \bar{f_x(\lambda)} f_y(\lambda).
\end{equation}
This form has all the properties to be an inner product except positive definiteness; it is possible to have $\bar{f_x(\lambda)} f_x(\lambda) = 0$ at a given value of $\lambda$ even if $\ket{x}$ is not the zero vector.
We can define a Hilbert space $X_{\lambda}$ by taking the quotient of $\H$ by all null states of the quadratic form $q_{\lambda},$ which turns $q_{\lambda}$ into an inner product, and then taking the completion of the quotient in that inner product.
The von Neumann algebra $\A$ can be taken to act on $X_{\lambda}$ by passing its action on $\H$ through this quotient; the corresponding algebra can be shown to be a von Neumann algebra $\A_{\lambda},$ whose center $\Z_{\lambda}$ is obtained by passing $\Z$ through the quotient in a similar way.
One then defines continuous integrals of Hilbert spaces and algebras in such a way that the expressions in equations \eqref{eq:hilbert-space-integral} and \eqref{eq:von-neumann-integral} hold, and finds that the center $\Z$ consists exactly of those operators in $\A$ that act as a multiple of the identity on each $X_{\lambda}$.
Consequently, each $\Z_{\lambda}$ consists only of multiples of the identity, and each $\A_{\lambda}$ is a factor.
In doing all of this, it is necessary to be very careful about the usual issues that arise in measure theory --- which functions are measurable, which measurable functions differ only on sets of measure zero, etc.

I would love to tell you that I have a good, intuitive way of thinking about this construction, but I don't.
With a little meditation, you can convince yourself that in the case where the spectrum of $T$ is discrete, the quadratic form $q_{x, y}(\lambda)$ is just given by
\begin{equation}
	q_{x, y}(\lambda) = \bra{x} P_{\lambda} \ket{y},
\end{equation}
with $P_{\lambda}$ the projector onto the $\lambda$-eigenspace of $T$.
From this, it is straightforward to see that the general construction described in the previous paragraph should reduce to the ``direct sum of eigenspaces'' case when $T$ has a discrete spectrum.
I think it is best to take the discrete-spectrum case as the sensible one, and treat the continuous-spectrum generalization as a formal toolkit that ensures we can apply our discrete-spectrum techniques in more general settings.
For the truly intrepid reader, a detailed discussion of this material can be found in chapter II.6 of \cite{dixmier2011neumann}, supplemented by chapters II.1-II.3.

\section{A fundamental example, and the big picture}
\label{sec:big-picture}

Suppose I hand you a quantum system with Hilbert space $\H$ and ask you: ``What operators can you access?''
While the question is partially a philosophical one --- what do I mean by ``access''? --- a reasonable answer would be: ``I have access to $\B(\H)$, the space of bounded operators on $\H$.''
I might then ask, ``Do you have access to any density matrices?''
To this, you would respond ``Yes! There are density matrices in $\B(\H)$.
In fact, given any orthonormal sequence $\ket{\psi_n} \in \H$ and any sequence $p_n \in [0, 1]$ with $\sum_{n} p_n = 1,$ the density matrix $\rho = \sum_{n} p_n \ketbra{\psi_n}$ is in $\B(\H)$.''

So far, our conversation has been both pleasant and simple.
Now, though, I am going to complicate it: I tell you, ``There's another quantum system fifty light years away with Hilbert space $\H'.$ What operators do you have access to?''
Again, this question is philosophical, but since you are now aware of the existence of another quantum system, you might respond, ``I have access to $\B(\H) \otimes 1_{\H'},$ the operators that act as a bounded operator on $\H$ and as the identity on $\H'.$''
I ask you again, ``Do you have access to any density matrices?''
This question is suddenly harder to answer! If $1_{\H'}$ has finite dimension $d$, then the answer is yes, because for any density matrix $\rho \in \B(\H)$ you have access to the operator
\begin{equation}
	\frac{\rho}{d} \otimes 1_{\H'} = \rho \otimes \frac{1_{\H'}}{d},
\end{equation}
which is a density matrix.
But if $\H'$ is infinite dimensional, then the algebra $\B(\H) \otimes 1_{\H'}$ contains no density matrices, because there is no way to normalize $\rho \otimes 1_{\H'}$ to make it a density matrix on the combined system.
I.e., there is no way to rescale this operator to make it have unit trace.

What has happened?
Before I told you about the existence of an infinite-dimensional $\H',$ you would have said that your accessible algebra of observables contained density matrices, but once I have told you about the existence of $\H'$, you are forced to say that your algebra of observables contains no density matrices.
It might seem like we are getting caught up in semantics, but this puzzle is at the core of the type classification of von Neumann algebras and their role in quantum physics.
Its resolution is that while $\rho \otimes 1_{\H'}$ might not be a density matrix for the combined system $\H \otimes \H'$ in the case that $\H'$ is infinite-dimensional, it is an \textit{effective} density matrix as far as you are concerned, because you only have access to the observables in $\B(\H) \otimes 1_{\H'}.$
Within this algebra of observables, there is a consistent way to assign expectation values $\langle A \otimes 1_{\H'} \rangle_{\rho \otimes 1_{\H'}}$ that gives $\rho \otimes 1_{\H'}$ all the properties of a quantum state.
For any bounded operator $A \otimes 1_{\H'},$ we define its \textit{effective} expectation value in the ``state'' $\rho \otimes 1_{\H'}$ in the obvious way:
\begin{equation}
	\langle A \otimes 1_{\H'} \rangle_{\rho \otimes 1_{\H'}} \equiv \tr_{\H} (\rho A).
\end{equation}

The slogan I would like to advance is that while the operator $\rho \otimes 1_{\H'}$ is not a true density matrix, it is a \textit{renormalizable} density matrix with respect to the algebra $\B(\H) \otimes 1_{\H'}$ --- within this algebra, there is a consistent way of assigning expectation values to $\rho \otimes 1_{\H'}$ that gives it the properties of a quantum state.
Furthermore, this renormalization procedure is not special to the particular operator $\rho \otimes 1_{\H'}$: for any other $\H$-density matrix $\sigma,$ we assign expectation values to the operator $\sigma \otimes 1_{\H'}$ in a completely analogous way:
\begin{equation}
	\langle A \otimes 1_{\H'} \rangle_{\sigma \otimes 1_{\H'}} \equiv \tr_{\H} (\sigma A).
\end{equation}

Now, let us move beyond the special case where we are given an access to an algebra of the form $\B(\H) \otimes 1_{\H'},$ and suppose that we are given access to some general von Neumann algebra $\A$.
This could be, for example, the von Neumann algebra associated to a subregion in the vacuum sector of a quantum field theory.
Within this algebra $\A$ there is a set of positive operators that we will denote $\A_+.$
Are any of these operators density matrices?
Maybe not.
But perhaps, as above, there are some operators in $\A_+$ that are effective density matrices for the observables in $\A$.
We can ask: does there exist some consistent renormalization scheme on $\A$ that makes these operators in $\A_+$ into quantum states?

We will discuss the details of what is meant by a ``renormalization scheme'' in the following sections.
In particular, we will see that in a von Neumann factor, there is really only one consistent renormalization scheme that acts on every operator in the same way.
With that in mind, I now claim that the type classification of von Neumann factors can be thought of in the following terms:
\begin{itemize}
	\item A factor $\A$ is \textbf{type I} if its renormalization scheme turns some (possibly all) operators in $\A_+$ into pure states, and some into mixed states.
	I.e., it contains ``renormalizable pure and mixed states.''
	\item A factor $\A$ is \textbf{type II} if its renormalization scheme turns some (possibly all) operators in $\A_+$ into mixed states, but none into pure states.
	I.e., it contains ``renormalizable mixed states but no renormalizable pure states.''
	\item A factor $\A$ is \textbf{type III} if, even after renormalization, it contains no density operators.
	I.e., it contains ``no renormalizable states.''
\end{itemize}

A complementary distinction can be made:
\begin{itemize}
	\item A factor $\A$ is \textbf{finite} if its renormalization scheme makes every operator in $\A_+$ into a density matrix.
	I.e., every positive operator in $\A$ is a renormalizable state.
	\item A factor $\A$ is \textbf{infinite} if its renormalization scheme leaves at least one operator in $\A_+$ unnormalizable.
	I.e., it contains at least one positive operator that is not a renormalizable state.
\end{itemize}

Note that based on the above definitions, every type III factor is infinite, while type I and II factors can be either finite or infinite.

Let us conclude this section with some terminology that will be elaborated in the following sections.
A finite factor of type II is said to be \textbf{type II$_1$}; an infinite factor of type II is said to be \textbf{type II$_{\infty}.$}
A finite factor of type I is said to be \textbf{type I$_n$}, where $n$ is an integer that encodes some additional information about the factor.\footnote{I give a precise definition of $n$ in subsection \ref{subsec:algebraic-mixed-states}.}
An infinite factor of type I is said to be \textbf{type I$_{\infty}.$}

Several useful examples of factors of various types can be found in \cite[section 6]{witten2018aps}; in particular, to calibrate your intuition for algebra types before embarking on a careful study of their definitions, it may be useful to check \cite[section 6.3]{witten2018aps} to see an example of a type II$_1$ algebra that arises in the thermodynamic limit of two spin chains that are maximally entangled with one other.

\section{The algebraic classification of factors}
\label{sec:projector-types}

For ease of reading, I have split this section into three subsections.
The first explains why, in our quest to understand the ``renormalizable density matrices'' in a von Neumann algebra $\A$, we should start by studying certain algebraic properties of the projectors in $\A.$
The second subsection details some essential features of this algebraic structure; proofs are given for completeness, but it is fine to read only the statements without understanding the proofs.
The third subsection reframes the heuristic type classification of section \ref{sec:big-picture} in terms of the structure of the projectors in $\A$.

\subsection{The relative dimension of projectors}

How do we know whether a positive operator is a density matrix?
There are a few ways of answering this.
The most naive would be to say, ``a positive operator is a density matrix if it has finite trace.''\footnote{You might want to say that it must have trace equal to one, but this isn't really necessary, as for an unnormalized density matrix we can define expectation values using the formula $\langle A \rangle_{\rho} = \tr(\rho A)/\tr(\rho).$}
We will come back to this idea in section \ref{sec:traces}, but first it will be helpful for us to cook up a more algebraic test for whether a given positive operator is a density matrix.

Every positive operator can be approximated arbitrarily well\footnote{Here ``approximated'' is meant in the sense of the norm topology. Using spectral theory, a bounded positive operator can be thought of as the identity function on a compact measure space, corresponding to the spectrum of the operator. Linear combinations of spectral projections are simple functions on the spectrum --- ``simple'' here is a technical term, meaning a finite linear combination of indicator functions. The statement that a positive operator can be approximated arbitrarily well with simple functions is then the standard measure theory fact that simple functions are dense in any $L^{\infty}$ space. See appendix \ref{app:spectral-theory}.} by positive linear combinations of its spectral projections, so we should ask first what it means for a projection to be a density matrix, and then ask about whether the property ``being a density matrix'' is preserved under linear combinations and limits.
The answer is simple: a projection $P$ is a density matrix if and only if it has finite rank.
Given the example studied in section \ref{sec:big-picture}, though, we know that a projection can be an ``effective'' density matrix even if it is not finite-rank.
If $P$ is a finite-rank projector on the Hilbert space $\H$, and $\H'$ is an infinite-dimensional Hilbert space, then the projector $P \otimes 1_{\H'}$ is not a finite-rank projector, but it is effectively finite-rank from the perspective of the algebra $\B(\H) \otimes 1_{\H'}.$

This motivates us to come up with a definition of what it means for $P$ to be finite-rank with respect to some von Neumann algebra $\A$ containing $P$.
Since we want the definition to reference the structure of $\A$, it must be inherently algebraic.
A good algebraic characterization of finite-rank projectors is that a projector $P \in \B(\H)$ is finite-rank if and only if every proper sub-projection $P' \lneq P$ has cardinality strictly different from the cardinality of $P$.
(An infinite-rank projection does not have this property, since throwing away a one-dimensional subspace results in an infinite-rank, proper sub-projection.)
A good algebraic characterization of the cardinality of a projector is that two projectors $P$ and $Q$ project onto subspaces of the same dimension if and only if there exists a partial isometry $V$ with initial space $P \H$ and final space $Q \H$, i.e., if there exists a partial isometry with $V^* V = P$ and $V V^* = Q.$

To avoid confusion, we will now start reserving the term ``rank of a projector'' for the literal dimension of the projector's image, and use the term ``dimension of a projector'' to refer to a more abstract concept that can depend on an algebra containing that projector.
The considerations of the preceding paragraph motivate the following definition:
\begin{definition} \label{def:projector-equivalence}
	Let $\A$ be a von Neumann algebra.
	Two projections $P, Q \in \A$ are said to have \textbf{the same dimension relative to $\A$} if there exists a partial isometry $V \in \A$ with $V^* V = P, V V^* = Q.$
	Often we will just say $P$ and $Q$ are \textbf{equivalent},\footnote{It is straightforward to check that this is actually an equivalence relation, i.e., that we have (i) $P \sim P$, (ii) $P \sim Q$ and $Q \sim R$ implies $P \sim R$, (iii) $P \sim Q$ implies $Q \sim P$.} without explicitly referencing $\A$ or $V$, and write $P \sim Q.$
	
	A projection $P$ in a von Neumann algebra $\A$ is \textbf{finite-dimensional relative to $\A$}, or simply \textbf{finite}, if every proper subprojection $Q \in \A, Q \lneq P$ is inequivalent to $P$.
	Conversely, $P$ is \textbf{infinite-dimensional relative to $\A$}, or simply \textbf{infinite}, if there exists a projector $Q \in \A$ with $Q \lneq P$ and $Q \sim P.$
\end{definition}

Notice that every infinite projector $P \in \A$ must be infinite-rank in the Hilbert space sense; this is because if $P$ is infinite in $\A$ it is clearly infinite in $\B(\H)$, and the algebraic definitions of ``finite'' versus ``infinite'' in $\B(\H)$ correspond to the standard notions of finite and infinite rank.
However, there can exist projectors of infinite rank that are finite relative to a von Neumann algebra $\A$.
One way this can happen is if $P \in \A$ is an infinite-rank projector, but none of its infinite-rank proper subprojectors are contained in $\A$, so that the algebra $\A$ has no way of seeing that $P$ has infinite rank.
This happens, e.g., for the trivial von Neumann algebra consisting of scalar multiples of the identity in an infinite-dimensional Hilbert space.
Alternatively, even if $\A$ contains an infinite-rank proper subprojector $Q \lneq P$, it might not contain any partial isometry $V$ with $V^* V = P$ and $V V^* = Q,$ so $\A$ cannot see that $Q$ and $P$ have the same rank!
In either of these cases, $P$ will be finite relative to $\A$, even though it is an infinite-rank projector.

We have a good definition now of what it means for a projector $P$ to be ``effectively finite dimensional'' with respect to $\A$, but we still haven't shown that this definition satisfies our original goal of characterizing the projectors $P$ that can be consistently renormalized into maximally mixed states on subspaces of $\H$.
To do this, we will need to introduce traces; this will be done in section \ref{sec:traces}.
First, though, I will explain some general properties of the projector equivalence relation that will be very important for developing the theory.
I will then give a preliminary classification of factors in terms of what kinds of projectors they contain, which will be physically justified in section \ref{sec:traces} when we show that finite projectors can be renormalized consistently into density matrices.

\subsection{Properties of relative dimension}
\label{sec:relative-dimension}

This subsection is written in ``definition-lemma-theorem-proof'' format so that you can just skim the statements if you like, without needing to read all of the proofs.

\begin{lemma}[Additivity of equivalence] \label{lem:additivity-lemma}
	If $\{P_j\}$ and $\{Q_j\}$ are pairwise-orthogonal sequences of projectors in $\A$ with $P_j \sim Q_j,$ then the projectors $\sum_{j} P_j$ and $\sum_j Q_j$ are in $\A$ and are equivalent.
\end{lemma}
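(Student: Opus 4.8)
The plan is to build a single partial isometry $V \in \A$ that implements the equivalence $\sum_j P_j \sim \sum_j Q_j$ by stitching together the partial isometries that witness the individual equivalences $P_j \sim Q_j$. First I would record that $P \equiv \sum_j P_j$ and $Q \equiv \sum_j Q_j$ are genuinely projectors in $\A$: since $\{P_j\}$ is pairwise orthogonal, the partial sums $\sum_{j=1}^N P_j$ form an increasing sequence of projectors bounded above by $1$, so they converge strongly to the supremum $\vee_j P_j$, which lies in $\A$ by fact \ref{fact:projector-infima-and-suprema}; the same applies to $Q$. For each $j$, choose a partial isometry $V_j \in \A$ with $V_j^* V_j = P_j$ and $V_j V_j^* = Q_j$, and take as the candidate $V \equiv \sum_j V_j$.

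The first real step is to show this sum converges strongly to an operator of norm at most $1$. The key structural input is that the initial spaces $P_j \H$ are pairwise orthogonal and the final spaces $Q_j \H$ are pairwise orthogonal. Since $V_j$ is isometric on $P_j \H$ and annihilates its complement, for any $\ket{x} \in \H$ the partial sums satisfy
\begin{equation}
	\Big\lVert \sum_{j=1}^N V_j \ket{x} \Big\rVert^2 = \sum_{j=1}^N \lVert V_j \ket{x} \rVert^2 = \sum_{j=1}^N \lVert P_j \ket{x} \rVert^2 \leq \lVert x \rVert^2,
\end{equation}
where the first equality uses orthogonality of the final spaces. This bound shows the tails vanish, so $\sum_{j=1}^N V_j \ket{x}$ is Cauchy and converges; hence $V \equiv \sum_j V_j$ exists as a strong limit with $\lVert V \rVert \leq 1$. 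Because strong convergence implies weak convergence and $\A$ is closed in the weak operator topology (it equals its own double commutant), we conclude $V \in \A$.

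Next I would verify that $V$ is the desired partial isometry by computing $V^* V$ and $V V^*$ directly at the level of inner products. Using that $V_i x \in Q_i \H$ and $V_j y \in Q_j \H$ are orthogonal for $i \neq j$, one finds
\begin{equation}
	\braket{V x}{V y} = \sum_{i,j} \braket{V_i x}{V_j y} = \sum_j \braket{x}{V_j^* V_j y} = \sum_j \braket{x}{P_j y} = \braket{x}{P y},
\end{equation}
so $V^* V = P$; the symmetric computation on $\braket{V^* x}{V^* y}$, now using orthogonality of the initial spaces $P_j \H$, gives $V V^* = Q$. Since $V^* V = P$ is a projection, $V$ is a partial isometry with initial space $P \H$ and final space $Q \H$, establishing $P \sim Q$.

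The main obstacle, and the point deserving the most care, is the convergence argument and the accompanying interchange of the infinite sum with adjoints and products: a priori $\big(\sum_i V_i\big)^* \big(\sum_j V_j\big)$ need not reduce to $\sum_j V_j^* V_j$. What rescues this is precisely the double orthogonality: the cross terms $V_i^* V_j$ (for $i \neq j$) vanish because the final space $Q_j \H$ of $V_j$ lies in the kernel of $V_i^*$, and likewise $V_i V_j^* = 0$ because the final space $P_j \H$ of $V_j^*$ lies in the kernel of $V_i$. Carrying out the computation on inner products, as above, sidesteps any delicate questions about multiplying strong limits and makes these cancellations rigorous.
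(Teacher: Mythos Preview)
Your proof is correct and follows essentially the same approach as the paper: construct $V=\sum_j V_j$ as a strong limit using the double orthogonality of initial and final spaces, verify $V\in\A$ via closedness in the weak/strong operator topology, and then check $V^*V=P$ and $VV^*=Q$. The only cosmetic difference is that the paper computes $V^*V$ by appealing to strong-continuity of one-sided multiplication to push the limit through, whereas you do the equivalent computation at the level of inner products; both arguments exploit the same vanishing of cross terms $V_i^*V_j=0$ for $i\neq j$.
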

\begin{proof}
	One might first worry about the definition of the operator $\sum_j P_j$; it is defined as the projector onto the direct sum $\oplus_{j} P_j \H.$
	It is straightforward to check that for any $\ket{x}\in \H$, we have
	\begin{equation} \label{eq:projector-sum-strong-convergence}
		\left( \sum_j P_j \right) \ket{x} = \sum_j P_j \ket{x}.
	\end{equation}
	At a technical level, this means that the operator $\sum_j P_j$ is the limit of its partial sums in the strong operator topology.
	As mentioned in appendix \ref{app:operator-topologies}, von Neumann algebras are closed in the strong operator topology, so $\sum_j P_j$ and analogously $\sum_j Q_j$ are projectors in $\A$.\footnote{We could also show that $\sum_j P_j$ and $\sum_j Q_j$ are projectors in $\A$ using fact \ref{fact:projector-infima-and-suprema}.}

	Given the assumptions of the lemma, we know there exists a family of partial isometries $V_j$ with $V_j^* V_j = P_j$ and $V_j V_j^* = Q_j.$
	We define the operator $\sum_{j} V_j$ by having it act on the vector $\ket{x}$ as
	\begin{equation}
		\left(\sum_j V_j \right) \ket{x} = \sum_j V_j \ket{x}.
	\end{equation}
	In order for this to be well defined, we need the sum $\sum_j V_j \ket{x}$ to converge for all $\ket{x}.$
	But since the $Q_j$ projectors are pairwise orthogonal, the final spaces of the $V_j$ partial isometries are pairwise orthogonal.
	The sum of a pairwise-orthogonal set of vectors converges if and only if the sum of the squared-norms converges, so we need only check the condition
	\begin{equation}
		\sum_j \braket{V_j x}{V_j x} < \infty.
	\end{equation}
	But since we have $V_j^* V_j = P_j =  P_j^* P_j,$ this is the same as checking that the sum $\sum_j \braket{P_j x}{P_j x}$ converges, and this holds as a consequence of equation \eqref{eq:projector-sum-strong-convergence}.
	By a similar argument, one can show that the operator
	\begin{equation}
		\left(\sum_j V_j^* \right) \ket{x} = \sum_j V_j^* \ket{x}
	\end{equation}
	is well defined, and that we have $\left(\sum_j V_j\right)^* = \sum_j V_j^*.$
	It is then not too hard to show the identities
	\begin{equation}
		\left(\sum_j V_j \right)^* \left(\sum_j V_j\right) = P.
	\end{equation}
	\begin{equation}
		\left(\sum_j V_j \right) \left(\sum_j V_j\right)^* = Q.
	\end{equation}
	At a technical level, this is done by noting that the series in these expressions are limits of partial sums in the strong operator topology, and then using the fact that left- or right-multiplication by a fixed operator is a continuous function in the strong operator topology.
	That is, we have
	\begin{align}
		\left(\sum_j V_j \right)^* \left(\sum_j V_j\right)
			& = \left(\sum_j V_j^* \right) \left(\sum_j V_j\right) \nonumber \\
			& = \sum_j V_j^* \left( \sum_k V_k \right) \nonumber \\
			& = \sum_j \sum_k \left( V_j^* V_k \right) \nonumber \\
			& = \sum_j V_j^* V_j \nonumber \\
			& = P.
	\end{align}
\end{proof}

\begin{definition}
	If $P, Q$ are projectors in a von Neumann algebra $\A$, then we write $P \precsim Q$ if there is a projector $R \in \A$ with $P \sim R \leq Q.$
\end{definition}

\begin{theorem}[The comparison theorem] \label{thm:comparison}
	In any von Neumann algebra $\A$, the relation $\precsim$ defined above is a partial order:
	\begin{itemize}
		\item For any $P$, we have $P \precsim P.$
		\item For any $P, Q, R$ with $P \precsim Q$ and $Q \precsim R,$ we have $P \precsim R$.
		\item For any $P, Q$ with $P \precsim Q$ and $Q \precsim P,$ we have $P \sim Q.$
	\end{itemize}
	
	If $\A$ is a factor, then $\precsim$ is a total order --- for any two projectors $P$ and $Q$, we have either $P \precsim Q$ or $Q \precsim P$ (or both).
\end{theorem}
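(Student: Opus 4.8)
The plan is to prove comparability by a maximality argument that directly exhibits one of the two required subordinations. First I would consider the collection of all families $\{(P_i, Q_i)\}_{i \in I}$ of pairs of projectors in $\A$ with $P_i \sim Q_i$, where the $\{P_i\}$ are pairwise orthogonal with each $P_i \leq P$ and the $\{Q_i\}$ are pairwise orthogonal with each $Q_i \leq Q$. Ordering these families by inclusion and noting that the union of a chain is again such a family, Zorn's lemma supplies a maximal family. Setting $P_0 = \sum_i P_i$ and $Q_0 = \sum_i Q_i$, the additivity lemma (lemma \ref{lem:additivity-lemma}) guarantees that $P_0, Q_0 \in \A$ and $P_0 \sim Q_0$, with $P_0 \leq P$ and $Q_0 \leq Q$. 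The whole statement then reduces to showing that the leftover projectors $\tilde P = P - P_0$ and $\tilde Q = Q - Q_0$ cannot both be nonzero: if $\tilde P = 0$ then $P = P_0 \sim Q_0 \leq Q$ gives $P \precsim Q$, while if $\tilde Q = 0$ then symmetrically $Q \precsim P$.

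To rule out both leftovers being nonzero I would argue by contradiction against maximality, which reduces matters to the following exhaustion lemma: in a factor, any two nonzero projectors $E, F \in \A$ admit nonzero subprojectors $E_1 \leq E$ and $F_1 \leq F$ with $E_1 \sim F_1$. Granting this, applying it to $\tilde P$ and $\tilde Q$ would produce a pair $(E_1, F_1)$ orthogonal to every $(P_i, Q_i)$ (since $E_1 \leq \tilde P \perp P_0$ and likewise for $F_1$), so adjoining $(E_1,F_1)$ would strictly enlarge the maximal family --- a contradiction. The exhaustion lemma I would prove via polar decomposition: if one can exhibit any $T \in \A$ with $E T F \neq 0$, then replacing $T$ by $ETF$ and taking its polar decomposition $T = V|T|$, fact \ref{fact:polar-decomposition} tells us $V \in \A$ is a partial isometry with $V^*V \leq F$ and $VV^* \leq E$ (these inequalities follow from $ET = T = TF$), and $V^*V, VV^*$ are the desired nonzero equivalent subprojectors.

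The crux --- and the step I expect to be the main obstacle --- is showing that in a factor $E \A F \neq 0$ whenever $E, F \neq 0$; this is exactly where triviality of the center $\Z$ enters. I would introduce the central support of $F$ as the projector onto the closed subspace $\overline{\A F \H}$. This subspace is invariant under $\A$, so the projector onto it lies in $\A'$ by fact \ref{fact:invariant-commutant}; it is also invariant under $\A'$, because for $T' \in \A'$ and $a \in \A$ one computes $T' a F x = a (T' F) x = a F (T' x) \in \A F \H$ using $F \in \A$. Hence, by fact \ref{fact:invariant-commutant} applied to the von Neumann algebra $\A'$ (fact \ref{fact:commutant-vN}) together with $(\A')' = \A$, the projector also lies in $\A$, so the central support is in $\Z$. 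In a factor the only projectors in $\Z$ are $0$ and $1$, and since $F \neq 0$ forces the central support to be nonzero, it must equal $1$, i.e.\ $\overline{\A F \H} = \H$. But if $E \A F = 0$ then $E$ annihilates $\A F \H$ and hence its closure $\H$, forcing $E = 0$; contrapositively $E \A F \neq 0$, which completes the exhaustion lemma and with it the theorem.

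By contrast, the three partial-order axioms stated earlier in the theorem are comparatively routine and I would dispatch them separately: reflexivity uses $V = P$; transitivity follows by composing partial isometries after restricting the first to the range carried by the second; and antisymmetry is the projection analogue of the Schr\"oder--Bernstein theorem, proved by iterating the additivity lemma (lemma \ref{lem:additivity-lemma}).
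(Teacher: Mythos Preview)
Your proposal is correct and follows essentially the same route as the paper: a Zorn's lemma argument producing a maximal equivalent pair beneath $P$ and $Q$, then invoking the ``exhaustion lemma'' (any two nonzero projectors in a factor have nonzero equivalent subprojectors) to contradict maximality if both remainders survive. The only substantive difference is that you actually \emph{prove} the exhaustion lemma --- via central supports and polar decomposition --- whereas the paper merely cites it as lemma~1.7 in chapter~V.1 of Takesaki; your argument for it is the standard one and is sound.
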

\begin{proof}[Sketch of proof]
	The first two bullet points aren't too hard to show directly from the definitions.
	The third bullet point is nontrivial, and the proof is clever; a readable account is given as proposition 1.3 in chapter V.1 of \cite{takesaki2001theory}.
	
	The final statement, that $\A$ being a factor implies $P \precsim Q$ or $Q \precsim P,$ uses a lemma that in a factor, for any two nonzero projections $P, Q$ there exist some projections $P', Q'$ with $P \geq P' \sim Q' \leq Q$; this can be found e.g. as lemma 1.7 in chapter V.1 of \cite{takesaki2001theory}.
	One then uses Zorn's lemma to construct some projectors $\tilde{P}, \tilde{Q}$ satisfying $P \geq \tilde{P} \sim \tilde{Q} \leq Q$ that are maximal with respect to this property.
	If we have $P = \tilde{P}$ then we have $P \precsim Q$, and similarly if we have $Q = \tilde{Q}$ we have $Q \precsim P.$
	In either case, the theorem would be proved.
	The third possibility is that both $P - \tilde{P}$ and $Q - \tilde{Q}$ are nonzero.
	But in this case, applying lemma 1.7 of \cite{takesaki2001theory} again, we can product nonzero projectors $R \leq P - \tilde{P}$ and $S \leq Q - \tilde{Q}$ with $R \sim S.$
	In this case, by applying the additivity lemma (\ref{lem:additivity-lemma}), we can show that $\tilde{P} + R$ and $\tilde{Q} + S$ satisfy $P \geq \tilde{P} + R \sim \tilde{Q} + S \leq Q,$ which contradicts maximality of the pair $(\tilde{P}, \tilde{Q})$.
\end{proof}

\begin{prop}[Finiteness is preserved under equivalence] \label{prop:finiteness-equivalence}
	If $P$ and $Q$ are projectors in $\A$ with $P \sim Q$ and $P$ is finite, then $Q$ is finite.
\end{prop}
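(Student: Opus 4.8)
The plan is to argue by contradiction, using the equivalence $P \sim Q$ to transport subprojection structure from $Q$ back to $P$. So I would assume $P \sim Q$, that $P$ is finite, and that $Q$ is \emph{infinite}, and derive a contradiction by manufacturing a proper subprojection of $P$ that is equivalent to $P$. First I would fix a partial isometry $V \in \A$ implementing $P \sim Q$, so that $V^* V = P$ and $V V^* = Q$, and record the identities I will lean on throughout: since the initial space of $V$ is $P\H$ we have $VP = V$, and consequently $V^* Q V = V^* V V^* V = P$ and $Q V = V V^* V = V$. The operator $V$ is the device that conjugates subprojections of $Q$ into subprojections of $P$.

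By infiniteness of $Q$ there is a projector $S \in \A$ with $S \lneq Q$ and $S \sim Q$. I would then set $R = V^* S V$ and check it has the right structure. It is self-adjoint because $S$ is; it is idempotent because $R^2 = V^* S (V V^*) S V = V^* S Q S V = V^* S V = R$, using $SQ = QS = S$ (which holds since $S \le Q$); and it lies below $P$ because $0 \le S \le Q$ gives $0 \le V^* S V \le V^* Q V = P$. Finally $R \sim S$ via the operator $U = S V \in \A$: indeed $U^* U = V^* S S V = V^* S V = R$ and $U U^* = S V V^* S = S Q S = S$, so $U$ is a partial isometry with $U^* U = R$ and $U U^* = S$.

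Combining $R \sim S$ with the given $S \sim Q$ and with $Q \sim P$ (via $V$), transitivity of $\sim$ yields $R \sim P$. It then remains to show $R$ is a \emph{proper} subprojection, i.e. $R \ne P$, and this is the step I expect to be the crux. If instead $R = P$, then $V^*(Q - S) V = V^* Q V - V^* S V = P - R = 0$; writing $Q - S = (Q-S)^{1/2}(Q-S)^{1/2}$, positivity forces $(Q-S)^{1/2} V = 0$ (since $\lVert (Q-S)^{1/2} V \ket{x} \rVert^2 = \bra{x} V^*(Q-S)V \ket{x} = 0$ for every $\ket{x}$), hence $(Q-S) V = 0$, i.e. $S V = Q V = V$. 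Multiplying on the right by $V^*$ gives $S Q = Q$, and combined with $S Q = S$ this yields $S = Q$, contradicting $S \lneq Q$. Therefore $R \lneq P$, so $R$ is a proper subprojection of $P$ in $\A$ with $R \sim P$, making $P$ infinite and contradicting the hypothesis. The only genuinely delicate point is this properness argument; everything else is bookkeeping with the relations $V^* V = P$, $V V^* = Q$, and $S Q = Q S = S$.
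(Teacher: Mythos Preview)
Your proof is correct and follows essentially the same contradiction argument as the paper: assume $Q$ is infinite, and use the partial isometry implementing $P \sim Q$ to transport the witnessing proper subprojection of $Q$ into a proper subprojection of $P$ equivalent to $P$. The paper's version is terser --- it simply asserts that the transported projector is a proper subprojection --- whereas you spell out the properness check carefully; aside from that and the reversed convention for $V$, the two arguments are the same.
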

\begin{proof}
	Suppose, toward contradiction, that $Q$ is infinite.
	Then there exists a projection $R \in \A$ with $R \lneq Q$ and $R \sim Q.$
	Let $V$ be a partial isometry in $\A$ furnishing the equivalence between $Q$ and $P$, satisfying $V^* V = Q$ and $VV^* = P.$
	Then $V R V^*$ is a projector with $V R V^* \lneq P$, and the partial isometry $V R$ furnishes the equivalence $R \sim V R V^*,$ since we have $V R V^* = (V R) (V R)^*$ and $R = (V R)^* (V R).$
	By transitivity of equivalence, we have $V R V^* \sim R \sim Q \sim P,$ so $V R V^* \lneq P$ and $V R V^* \sim P,$ which contradicts finiteness of $P.$
\end{proof}

\begin{prop}[Finiteness is hereditary]
	If $\A$ is a von Neumann algebra and $P \in \A$ is finite, then any $Q \in \A$ with $Q \precsim P$ is finite.
\end{prop}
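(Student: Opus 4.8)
The plan is to reduce the claim to two ingredients: the fact that finiteness is preserved under equivalence (proposition \ref{prop:finiteness-equivalence}), which is already proved, and the assertion that any subprojection of a finite projection is finite, which is the real work. First I would unwind the definition of $Q \precsim P$: this gives a projector $R \in \A$ with $Q \sim R$ and $R \leq P$. Since equivalence is symmetric, proposition \ref{prop:finiteness-equivalence} lets me conclude that $Q$ is finite the moment I know $R$ is finite. So the problem collapses to showing that $R$, a subprojection of the finite projector $P$, is finite.

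To handle this, I would argue by contradiction. Suppose $R$ is infinite, so that there is a projector $R' \in \A$ with $R' \lneq R$ and $R' \sim R$. The idea is to pad this equivalence by the orthogonal complement $P - R$ and thereby build a proper subprojection of $P$ that is equivalent to $P$. Since $R \leq P$ we have $R(P-R) = 0$, and since $R' \leq R$ we likewise get $R'(P-R) = 0$, so $\{R, P-R\}$ and $\{R', P-R\}$ are each pairwise-orthogonal families in $\A$. Using $R \sim R'$ together with the trivial equivalence $P-R \sim P-R$, the additivity lemma (\ref{lem:additivity-lemma}) yields
\begin{equation}
	P = R + (P-R) \sim R' + (P-R).
\end{equation}
Because $R' \lneq R$ while both are orthogonal to $P-R$, the projector $R' + (P-R)$ differs from $P$ by the nonzero projection $R - R'$, so $R' + (P-R) \lneq P$. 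This exhibits a proper subprojection of $P$ equivalent to $P$, contradicting finiteness of $P$. Hence $R$ is finite, and the reduction above makes $Q$ finite.

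The main obstacle is the orthogonality bookkeeping: I need to check carefully that the two families are pairwise orthogonal so that the additivity lemma applies, and that the padded projector $R' + (P-R)$ is a \emph{strict} subprojection of $P$. Both are short computations using $R' \leq R \leq P$, and once they are in place the contradiction with finiteness of $P$ is immediate, so I do not expect any genuinely hard step beyond this routine verification.
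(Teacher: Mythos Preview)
Your proof is correct and follows essentially the same route as the paper: reduce via proposition \ref{prop:finiteness-equivalence} to a subprojection of $P$, then pad an assumed equivalence witnessing infiniteness by the orthogonal complement and invoke the additivity lemma to contradict finiteness of $P$. The only difference is cosmetic---you keep the intermediate projector $R$ explicit rather than simply assuming $Q \leq P$ from the start---and you spell out the orthogonality checks the paper leaves implicit.
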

\begin{proof}
	In light of proposition \ref{prop:finiteness-equivalence}, it suffices to prove the theorem in the case $Q \leq P,$ since combining with proposition \ref{prop:finiteness-equivalence} yields the general case $Q \precsim P.$
	Suppose, toward contradiction, that $Q$ is infinite, so there exists $R \in \A$ with $R \lneq Q$ and $R \sim Q.$
	By the additivity lemma (lemma \ref{lem:additivity-lemma}), we have
	\begin{equation}
		R + (P - Q) \sim Q + (P - Q) = P.
	\end{equation}
	So $P$ is equivalent to $R + (P - Q),$ but $R \lneq Q$ implies $R + (P - Q) \lneq P,$ which contradicts the finiteness of $P$.
\end{proof}

\begin{definition}
	A factor is said to be \textbf{finite} if all of its projections are finite.
	It is \textbf{infinite} if it contains at least one infinite projection.
	
	In light of the preceding proposition, it suffices to check the identity operator --- a factor is finite if the identity operator is a finite projection, and infinite if the identity operator is an infinite projection.
	This is because for every $P \in \A$, we have $P \leq 1,$ so $P$ being infinite implies that $1$ is infinite and $1$ being finite implies that $P$ is finite.
\end{definition}

\begin{remark}
	We showed in proposition \ref{prop:finiteness-equivalence} that the finiteness or infiniteness of a projector is preserved under the equivalence relation introduced in definition \ref{def:projector-equivalence}.
	Is the converse statement true?
	That is, are any two finite projectors equivalent?
	Are any two infinite projectors equivalent?
	
	The answer to one of these questions is no --- there will generally exist multiple inequivalent finite projectors.
	By contrast, in a factor, all infinite projectors are equivalent.
	This statement will be important for our understanding of the renormalization schemes developed in section \ref{sec:traces}.
	Essentially it will tell us that we can consistently assign renormalized dimensions to projectors such that the set of equivalence classes is in one-to-one correspondence with the set of renormalized dimensions; in order for this to be possible, all infinite projectors, for which we will see the only reasonable renormalized dimension is $+\infty,$ must be equivalent.
\end{remark}

\begin{prop} \label{prop:infinite-projectors-equivalent}
	If $\A$ is a factor, then any two infinite projectors are equivalent.
\end{prop}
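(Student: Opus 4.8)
The plan is to use the comparison theorem (theorem \ref{thm:comparison}), which in a factor gives a total order on projectors under $\precsim$. Let $P$ and $Q$ be two infinite projectors in $\A$. By totality, we may assume without loss of generality that $P \precsim Q$, so there is a projector $R \in \A$ with $P \sim R \leq Q$. The goal is to upgrade this to $P \sim Q$, and the natural strategy is to show $Q \precsim P$ as well, so that the antisymmetry clause of the comparison theorem delivers $P \sim Q$. To do this I would exploit the defining property of an infinite projector: since $Q$ is infinite, it contains a proper subprojection equivalent to itself, which I can iterate to split $Q$ into infinitely many mutually equivalent pieces.

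The key construction is a \emph{halving}-type decomposition of $Q$. First I would produce a pairwise-orthogonal sequence of nonzero projectors $\{Q_j\}_{j \geq 1}$ in $\A$, all equivalent to one another, with $\sum_j Q_j \leq Q$. One way to obtain this: because $Q$ is infinite there is $Q_1' \lneq Q$ with $Q_1' \sim Q$; setting $E_1 = Q - Q_1' \neq 0$ and applying the equivalence $Q \sim Q_1'$ repeatedly, one transports $E_1$ into a sequence of disjoint copies inside $Q$, generating the desired infinite family of equivalent orthogonal subprojections. Using additivity of equivalence (lemma \ref{lem:additivity-lemma}), finite and countable sums of these behave well and stay in $\A$.

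Once $Q$ is written as containing infinitely many mutually equivalent copies, I would compare these copies against $P$ using the comparison theorem again. For each copy $Q_j$, either $Q_j \precsim P$ or $P \precsim Q_j$. If some $P \precsim Q_j$, then combined with $Q_j \leq Q$ and the already-available $P \precsim Q$ this is consistent, but the productive case is to leverage that there are infinitely many disjoint copies of a fixed projector inside $Q$: the sum of countably many copies of any fixed nonzero projector can be made to dominate $P$ in the $\precsim$ order, because $P$ itself can be cut (again by comparison and halving, since $P$ is infinite) into countably many orthogonal pieces each $\precsim$ one copy. Assembling these comparisons with additivity gives $P \precsim \sum_j Q_j \leq Q$ and, symmetrically, $Q \precsim P$, whence $P \sim Q$ by antisymmetry.

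The main obstacle I anticipate is the bookkeeping in the halving/iteration step: producing a genuinely pairwise-orthogonal infinite family of equivalent subprojections of an infinite projector, and verifying that the transported pieces remain orthogonal and equivalent at each stage. This requires care with the partial isometries furnishing the equivalences and repeated appeals to lemma \ref{lem:additivity-lemma}; a clean way to organize it is to first prove the auxiliary statement that any infinite projector $Q$ satisfies $Q \sim Q \oplus Q'$ for some $Q' \sim Q$ (so $Q$ ``absorbs'' a copy of itself), and then bootstrap. The comparison theorem does the heavy lifting of converting order relations into the desired equivalence, so the real work is setting up the $\precsim$ inequalities in both directions.
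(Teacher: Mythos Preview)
Your strategy matches the paper's: decompose an infinite projector into countably many pairwise-orthogonal, mutually equivalent nonzero pieces, then use the additivity lemma and the antisymmetry clause of the comparison theorem to finish. The paper streamlines this by first reducing to the single statement ``every infinite $P$ satisfies $P \sim 1$,'' but that is a cosmetic difference.

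The genuine gap is in your justification of the key step. You assert that ``$P$ itself can be cut (again by comparison and halving, since $P$ is infinite) into countably many orthogonal pieces each $\precsim$ one copy [$Q_1$].'' Halving alone does not give this: iterating the infiniteness of $P$ produces pieces all equivalent to some fixed nonzero $E$, but nothing controls the size of $E$ relative to $Q_1$, and the construction also leaves a remainder $P_\infty$ that must be handled. The paper fills precisely this hole with a maximality argument you do not mention: using Zorn's lemma and separability of $\H$, take a \emph{maximal} pairwise-orthogonal family $\{R_n\}$ of projectors with $R_n \precsim Q_1$; maximality together with the comparison theorem forces $\sum_n R_n = 1$, since otherwise the remainder $1-\sum_n R_n$ satisfies either $\precsim Q_1$ or $\succsim Q_1$, and in either case the family can be enlarged. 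Additivity then gives $1 = \sum_n R_n \precsim \sum_n Q_n \leq P$. Your phrase ``by comparison and halving'' gestures in the right direction, but the operative tool is maximality, not halving, and it should be named explicitly. A minor point: as written, your assembly paragraph re-derives $P \precsim \sum_j Q_j \leq Q$, which you already assumed without loss of generality; the substance is in the ``symmetrically, $Q \precsim P$'' clause, and that is exactly where the maximality argument is needed.
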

\begin{proof}
	In order for this statement to be nontrivial, $\A$ must contain at least one infinite projector, in which case the identity operator is an infinite projector.
	So it suffices to show that if the identity operator is infinite, then any infinite $P \in \A$ satisfies $P \sim 1.$
	Since $P$ is infinite, there exists some projector $P_1 \lneq P$ with $P_1 \sim P.$
	That is, there exists a partial isometry $V$ with $V V^*= P$ and $V^* V = P_1.$
	Consider now the sequence of operators $P_n = (V^*)^n (V)^n.$
	These operators are Hermitian, and they actually square to the identity, which can be seen via repeated application of the partial isometry identity $V V^* V = V$ and the projector nesting identity $P_1 P = P_1.$
	In fact, by applying these identities, one can also show the useful identities $V^n (V^*)^n = P$ and $V P = V.$
	
	Furthermore, the sequence $P_n$ is strictly decreasing; i.e., we have $P_1 \gneq P_2 \gneq P_3 \gneq \dots.$
	To show this, since $P_n$ is a sequence of nested projectors, it suffices to show that the kernel of $P_{n}$ is strictly contained within the kernel of $P_{n+1}.$
	But it is a general fact about bounded operators that the kernel of an operator $T$ is equal to the kernel of $T^* T,$ so we have
	\begin{equation}
		\ker(P_n) = \ker((V^*)^n V^n) = \ker(V^n).
	\end{equation}
	So we want to show $\ker(V^n) \subsetneq \ker(V^{n+1}).$
	The fact that this is an inclusion is obvious.
	To show that it is strict, we use the identity
	\begin{equation}
		V \ket{\psi} = V P \ket{\psi} = V^{n+1} (V^*)^n \ket{\psi}.
	\end{equation}
	So for any $\ket{\psi}$ in the kernel of $V$ (which is the same as the kernel of $P_1$), the state $(V^*)^n \ket{\psi}$ is in the kernel of $V^{n+1}.$
	So if we had $\ker(V^{n+1}) = \ker(V^n),$ then for any $\ket{\psi} \in \ker(V)$ we would have $V^n (V^*)^n \ket{\psi} = 0,$ i.e. $P \ket{\psi} = 0,$ which would give us the inclusion $\ker(P_1) \subseteq \ker(P).$
	This is forbidden by the assumption $P_1 \lneq P,$ so we can conclude that the sequence $P_n$ is strictly decreasing.
	
	Once we have this, we can define a new sequence of projectors $Q_n$ by $Q_1 = P - P_1,$ $Q_2 = P_1 - P_2,$ and so on.
	Taking $P_{\infty} = \lim_{n} P_n$ to be the projector onto $\cap_j P_j \H,$ we obtain the identity
	\begin{equation}
		P = P_{\infty} + \sum_{n} Q_n.
	\end{equation}
	Furthermore, all of the various $Q_n$ projectors are equivalent to one another, since one can check that the partial isometry $Q_n V$ satisfies $(Q_n V) (Q_n V)^* = Q_n$ and $(Q_n V)^* (Q_n V) = Q_{n+1}.$
	Now, we can use Zorn's lemma and the separability of $\H$ to produce a maximal sequence of pairwise-orthogonal projectors $\{R_n\}$ such that each $R_n$ satisfies $R_n \precsim Q_1.$
	Maximality tells us that we must have $\sum_{n} R_n = 1,$ since otherwise theorem \ref{thm:comparison} 	tells us that we have either $Q_1 \precsim 1 - \sum_n R_n$ or $1 - \sum_n R_n \precsim Q_1,$ and either of these possibilities contradicts maximality.
	Since we have $R_n \precsim Q_n$ for each $n,$ we can apply the additivity lemma (lemma \ref{lem:additivity-lemma}) to obtain
	\begin{equation}
		1 = \sum_{n} R_n \precsim \sum_{n} Q_n \leq P_{\infty} + \sum_{n} Q_n = P.
	\end{equation}
	So we have shown $1 \precsim P,$ and we automatically have $P \precsim 1,$ so theorem \ref{thm:comparison} tells us that we have $P \sim 1,$ which is what we wanted to prove all along.
\end{proof}

\subsection{Projectors, pure states, and mixed states}
\label{subsec:algebraic-mixed-states}

Several times in this section, I have foreshadowed that in section \ref{sec:traces} we will construct renormalization schemes that turn certain positive operators into density matrices.
Based on the discussion so far, the details of these renormalization schemes are not apparent.
The details will be explained in section \ref{sec:traces}; for now, I will simply tell you that after renormalization, every nonzero finite projector will become a density matrix, and every infinite projector will remain unnormalizable.
In fact, the renormalization schemes we construct in section \ref{sec:traces} will take this idea as their starting point --- they will assign an effective dimension to each projector, such that finite projectors take on a finite effective dimension while infinite projectors remain infinite-dimensional even after renormalization.
The question of whether a generic positive operator can be renormalized to a density matrix is then a simple question of whether its spectral decomposition consists of finite projectors with sufficiently tame coefficients for $\rho$ to have finite renormalized trace.

Before we tackle the technical details of renormalization, let us take this general principle --- ``nonzero finite projectors are renormalizable density matrices, infinite projectors are unrenormalizable density matrices, and all renormalizable density matrices are made up of finite projectors'' --- and study its consequences for the type classification of von Neumann algebras.
This way of thinking leads us to a more concrete way of framing the classification of factors indicated at the end of section \ref{sec:big-picture}.
There, I said that a factor is type III if it contains no renormalizable density operators.
In our present language, this means that a factor is type III if it contains no nonzero finite projectors.
By contrast, a factor that contains  nonzero finite projectors must be either type I or II.
In section \ref{sec:big-picture}, I said that the difference between these two types is that a type I factor contains pure states, while a type II factor does not.
To make this distinction in our present language, we need to decide what it means for a finite projector to be pure or mixed.
The natural definition is that a nonzero finite projector is pure if it is minimal --- i.e. $P \in \A$ will correspond to a pure state after renormalization if there exists no nonzero $Q \in \A$ with $Q \lneq P.$
We will see in section \ref{sec:traces} that these projectors actually do turn into pure states under renormalization, but the intuition is obvious: a pure state is one that cannot be written as a convex combination of other states, and minimal projectors are the ones with this property.
So a factor with nonzero minimal projectors will contain pure states after renormalization, while a factor with no minimal projectors will still have density matrices after renormalization, but none of them will be pure.

Taking these ideas together, we are led to the following refined type classification of factors:
\begin{itemize}
	\item A factor $\A$ is \textbf{type I} if it contains a nonzero minimal projector. (A minimal projector is necessarily finite; this follows immediately from the definition of a finite projector.)
	\item A factor $\A$ is \textbf{type II} if it contains nonzero finite projectors, but no nonzero minimal projectors.
	\item A factor $\A$ is \textbf{type III} if it contains no nonzero finite projections.
\end{itemize}
Furthermore, we have:
\begin{itemize}
	\item A factor $\A$ is \textbf{finite} if all projections are finite, equivalently if the identity operator is finite.
	\item A factor $\A$ is \textbf{infinite} if at least one projection is infinite, equivalently if the identity operator is infinite.
\end{itemize}
Before proceeding, the reader may wish to cross-check these bullets against those at the end of section \ref{sec:big-picture}, to see how the ideas of ``finite vs. infinite projector'' correspond to the ideas of ``renormalizable vs. unrenormalizable density matrix.''
The reader may also wish to consult \cite[section 6]{witten2018aps} for helpful examples of algebras of the various types that arise in the thermodynamic limit of a pair of spin chains with a particular pattern of entanglement.

At the end of section \ref{sec:big-picture}, I indicated that finite type II factors are said to be of \textbf{type II$_1$}, that infinite type II factors are said to be of \textbf{type II$_{\infty},$} that infinite type I factors are said to be of \textbf{type I$_{\infty},$} and that finite type I factors are said to be of \textbf{type I$_n$}.
I have not yet indicated the origins of this mysterious integer $n.$
In our present language, $n$ is the cardinality of a maximal set of pairwise-orthogonal minimal projections in $\A$.
Such a set always exists, which can be shown using induction or using Zorn's lemma, and the cardinality $n$ is well defined.
We will see in section \ref{sec:standard-forms} that a type I factor can always be put in ``standard form'' as an algebra of the form $\B(\H) \otimes 1_{\H'},$ and $n$ can equivalently be taken as the dimension of the Hilbert space $\H$.

\section{Traces and renormalization}
\label{sec:traces}

Throughout these notes so far, I have claimed that the type classification of von Neumann factors can be thought of in terms of what ``renormalization schemes'' they allow.
The goal of this section is to explain what those schemes actually are.
In subsection \ref{subsec:def-of-trace}, I explain why a renormalization scheme on a von Neumann factor should be thought of in terms of a ``trace'' on that factor.
I list the basic properties of renormalized traces on factors, and explain how these mathematical properties justify the heuristic definition of factor types given at the end of section \ref{sec:big-picture}.
In subsection \ref{subsec:trace-construction}, I sketch the proof that every factor has a unique renormalized trace up to rescaling.

\subsection{What is a trace? How does it renormalize states?}
\label{subsec:def-of-trace}

Suppose we have a von Neumann factor $\A$, whose subset of positive operators is denoted $\A_+.$
What does it mean to renormalize $\A_+$?
How do we produce a general renormalization scheme so that certain operators in $\A_+$ become legitimate density matrices?
Our hint for how to proceed comes from the example discussed in section \ref{sec:big-picture}.
There, it was explained that even in the case that $\H'$ is infinite-dimensional, the operator $\rho_{\H} \otimes 1_{\H'}$ can be thought of as an effective density matrix for the algebra $\B(\H) \otimes 1_{\H'},$ since all expectation values of that algebra with respect to $\rho_{\H} \otimes 1_{\H'}$ can be defined in a way that gives this operator the statistics of a density matrix on $\B(\H).$
So what we really need is a way to define expectation values of operators in $\A$ with respect to an operator in $\A_+.$
Furthermore, we would like this rule for assigning expectation values to be universal.
That is, we want to renormalize all operators in $\A_+$ simultaneously, so our rule for assigning expectation values shouldn't be allowed to vary completely arbitrarily for different operators in $\A_+.$

Our hint for how to proceed is that before doing renormalization, we say $\rho \in \A_+$ is a density matrix if it satisfies $\tr(\rho) = 1,$ and we then assign expectation values via the rule $\langle T \rangle_{\rho} = \tr(\rho T) / \tr(\rho).$
So we can perform renormalization by coming up with a ``renormalized trace'' on $\A_+$ that has all of the right properties to be used for assigning expectation values.
Mathematicians use the following definition.
\begin{definition} \label{def:trace}
	A \textbf{trace} on the von Neumann algebra $\A$ is a map $\tau : \A_+ \to [0, \infty]$ satisfying:
	\begin{enumerate}[(i)]
		\item $\tau(\lambda T) = \lambda \tau(T)$ for all $T \in \A_+$ and all $\lambda \geq 0.$
		\item $\tau(T + S) = \tau(T) + \tau(S)$ for all $T, S \in \A_+.$ (Note that the sum of positive operators is always positive, so this property makes sense.)
		\item $\tau(U T U^*) = \tau(T)$ for all $T \in \A_+$ and all unitary $U \in \A.$
	\end{enumerate}
	
	Note that $\tau$ can assign infinite values to some operators; for example, the standard Hilbert space trace ``$\tr$'' is a trace on $\B(\H)$, but it assigns infinite trace to the identity in the case that $\H$ is infinite dimensional.
\end{definition}
These properties all certainly make sense to require.
Properties (i) and (ii) just say that the trace $\tau$ is linear in the appropriate sense for a map that is only defined on positive operators.
Property (iii) says that the trace is unchanged under conjugation by unitaries in $\A$, which means that our trace is ``basis independent'' with respect to $\A.$
Given properties (i) and (ii), property (iii) is actually equivalent to the requirement that for any $O \in \A,$ we have $\tau(O O^*) = \tau(O^* O)$; the proof of this statement is nontrivial, and is given in appendix \ref{app:trace-math}.

Many wonderful properties of $\tau$ can be proven just from definition \ref{def:trace}.
Proofs are given in appendix \ref{app:trace-math}; these properties are summarized in the following list.
\begin{enumerate}[(i)]
	\item Given a trace $\tau$ on the von Neumann algebra $\A,$ the set $\A_1$ of operators $T \in \A$ for which $\tau(|T|)$ is finite forms an \textit{ideal}.
	This means that sums and scalar products of operators in $\A_1$ remain in $\A_1,$ and for any $S \in \A$ and $T \in \A_1,$ we have $S T \in \A_1$ and $TS \in \A_1.$
	The operators in $\A_1$ are called the \textbf{trace-class operators} of $\A$ with respect to $\tau.$
	\item The trace $\tau$ extends naturally to a linear functional $\tau : \A_1 \to \comps$; for a given $T \in \A_1,$ one decomposes $T$ as a linear combination of four positive operators (the positive and negative parts of its Hermitian and antihermitian parts), then defines $\tau$ on $\A_1$ by linearity.
	\item The extended trace is cyclic on $\A_1$: if $T$ is in $\A_1$ and $S$ is in $\A$, then we have $\tau(T S) = \tau(S T).$
\end{enumerate}
These properties are very valuable, because they give us everything we need to assign expectation values.
If a positive operator $\rho \in \A_+$ is trace-class --- that is, if $\tau(\rho)$ is finite --- then for \textit{any} $T \in \A,$ the number $\tau(\rho T)$ is well defined, finite, and depends linearly on $T$.
So for any $\rho \in \A_+$ with $\tau(\rho)$ finite, we get a rule for assigning expectation values:
\begin{equation} \label{eq:expectation-rule}
	\langle T \rangle_{\rho} = \frac{\tau(\rho T)}{\tau(\rho)}.
\end{equation}
We see now why it makes sense to call $\tau$ a renormalization scheme --- it gives us a rule for assigning expectation values to any positive operator $\rho$ with $\tau(\rho)$ finite, even if the Hilbert space trace $\tr(\rho)$ is infinite!

Definition \ref{def:trace} is perfectly good for assigning expectation values to certain operators $\rho \in \A_+,$ but we will need to put a few more conditions on the map $\tau$ in order for those expectation values to be physical.
One condition is that while we might hope for our renormalized trace to make some unnormalizable density operators normalizable, we don't want it to annihilate any density matrices --- that is, we don't want to have $\tau(\rho) = 0$ for nonzero $\rho.$
Another condition, motivated by the discussion in section \ref{subsec:algebraic-mixed-states}, is that we want our trace to make algebraically finite projectors into normalizable states; that is, if $P \in \A$ is an algebraically finite projector, we want $\tau(P) < \infty.$
A final condition is that we would like our trace $\tau$ to be continuous in some appropriate sense. It is a little confusing what the right definition of continuity is given that the image of $\tau$ can include $+\infty$.
A good definition is to require that if we have some family $\{\rho_{\alpha}\}$ of positive operators with supremum $\rho = \sup_{\alpha} \rho_{\alpha},$ then we have $\tau(\rho) = \sup_{\alpha} \tau(\rho_{\alpha}).$
Another way to think about this condition is that it means the behavior of $\tau$ on a generic positive operator is completely determined by its behavior on projectors, since every positive operator can be written as a supremum over linear combinations of its spectral projectors.
This is explained further in subsection \ref{subsec:trace-construction}; the point is that we can think of $\tau$ as assigning an effective renormalized dimension to each projector $P \in \A,$ and then being extended by continuity to all of $\A_+.$

Putting these considerations together, we have the following definition:
\begin{definition} \label{def:renormalized-trace}
	A trace $\tau : \A_+ \to [0, \infty]$ will be called a \textbf{renormalized trace} if it satisfies the following three properties.
	\begin{enumerate}[(i)]
		\item If $\tau(\rho)$ vanishes, then we have $\rho=0.$ (This means that the trace is ``faithful.'')
		\item If $P$ is an algebraically finite projector (cf. definition \ref{def:projector-equivalence}), then we have $\tau(P) < \infty.$ (I have decided to call this property ``cleverness,'' i.e., a trace is ``clever'' if it knows that finite projectors should have finite trace. This is related to a notion called ``semifiniteness''; see remark \ref{rem:clever-trace}.)
		\item If $\{\rho_{\alpha}\}$ is a family of positive operators in $\A_+$ with a supremum $\rho = \sup_{\alpha} \rho_{\alpha},$ then we have $\tau(\rho) = \sup_{\alpha} \tau(\rho_{\alpha}).$ (This means that the trace is ``normal.'')
	\end{enumerate}
\end{definition}
\begin{remark} \label{rem:clever-trace}
	The idea of calling a renormalized trace ``clever'' is something I made up for these notes.
	Instead of considering a trace that is faithful, normal, and clever, mathematicians would generally consider a trace that is faithful, normal, and \textit{semifinite}.
	Semifiniteness means that for every nonzero $T \in \A$, there exists some nonzero $S\in \A$ with $S \leq T$ and $\tau(S) < \infty.$
	In a factor of type I or II, cleverness and semifiniteness are the same, because there is some spectral projector $P$ of $T$ and some positive number $\lambda$ such that we have $T \geq \lambda P,$ and $P$ must dominate some finite projector $F$, so if we take $S = \lambda F$ then we have $T \geq S$ and $\tau(S) < \infty.$
	I have chosen to introduce the notion of ``cleverness'' because it will allow us to say that every type III factor has a unique faithful, normal, clever trace, even though it has no faithful, normal, semifinite trace.
\end{remark}
We may now appeal to a beautiful theorem in the theory of von Neumann algebras which tells us that every von Neumann factor $\A$ has a unique renormalized trace up to rescaling.
I.e., if $\tau$ and $\tau'$ are two renormalized traces on the factor $\A$, then there exists some positive number $\lambda > 0$ with $\tau = \lambda \tau'.$
I will sketch the proof of this theorem in subsection \ref{subsec:trace-construction}; for now, let us explore its consequences.

First and foremost, the rule for assigning expectation values given by equation \eqref{eq:expectation-rule} is independent of the overall normalization of $\tau,$ so it is an intrinsic property of the von Neumann algebra that does not depend on how we scale the renormalized trace.
This makes it a real physical feature of $\A$!

Next, we note that the trace of two projectors that are equivalent in the sense of definition \ref{def:projector-equivalence} have the same trace.
This follows from the comment I made after definition \ref{def:trace}, which implies that that $\tau$ satisfies $\tau(VV^*) = \tau(V^* V)$ for every $V \in \A.$
Consequently, a renormalized trace assigns the value $+\infty$ to every infinite projector.
This is because if $P$ is an infinite projector in $\A$ (cf. definition \ref{def:projector-equivalence}), there exists a nontrivial subprojector $Q \in \A$ with $Q \lneq P$ and $Q \sim P$.
We then have
\begin{equation}
	\tau(P) = \tau(Q) + \tau(P - Q) = \tau(P) + \tau(P - Q).
\end{equation}
So if $\tau(P)$ is finite then we have $\tau(P - Q) = 0,$ which implies $P - Q = 0,$ hence $P = Q,$ which is a contradiction.

Using similar reasoning, we may also argue that if $P$ and $Q$ are finite with $\tau(P) = \tau(Q)$, then we have $P \sim Q.$
To show this, we use the comparison theorem (theorem \ref{thm:comparison}), which tells us that we have either $P \precsim Q$ or $Q \precsim P$; assuming without loss of generality $P \precsim Q,$ i.e. $P \sim R \leq Q,$ we obtain
\begin{equation}
	\tau(Q) = \tau(Q - R) + \tau(R) = \tau(Q - R) + \tau(P) = \tau(Q - R) + \tau(Q),
\end{equation}
hence $\tau(Q - R) = 0,$ which implies $Q = R$ and therefore $P \sim Q.$

We have observed, therefore, that the effective dimension assigned to a projector by $\tau$ is a useful invariant of its equivalence class.
If $P$ and $Q$ are finite, the statement $P \sim Q$ is equivalent to $\tau(P) = \tau(Q)$, and if $P$ is infinite, then we have $\tau(P) = \infty.$
If we appeal to proposition \ref{prop:infinite-projectors-equivalent}, which says that in a factor any two infinite projectors are equivalent, we may conclude that even in the infinite case, we have $\tau(P) = \tau(Q)$ if and only if $P \sim Q.$
These observations are summarized in the following theorem.
\begin{theorem}
	Let $\tau$ be a renormalized trace on the factor $\A$.
	For two projectors $P, Q \in \A$, we have $\tau(P) = \tau(Q)$ if and only if we have $P \sim Q.$
	In fact, $\tau$ is order preserving, i.e., we have $\tau(P) \leq \tau(Q)$ if and only if we have $P \precsim Q.$
	Consequently, $\tau$ is an order isomorphism from the ordered set of projector equivalence classes in $\A$ to some subset of $[0, \infty].$ 
\end{theorem}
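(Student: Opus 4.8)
The plan is to assemble the computations already carried out in the paragraphs preceding the theorem, where the equality statement was essentially established, and then to prove the order-preservation claim, from which the order-isomorphism conclusion follows formally.

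First I would record the ``if and only if'' for equality. For finite projectors, $P \sim Q$ gives $\tau(P) = \tau(Q)$ via the identity $\tau(VV^*) = \tau(V^*V)$, and conversely $\tau(P) = \tau(Q)$ forces $P \sim Q$ by the comparison-theorem argument already spelled out above. For infinite projectors, a renormalized trace assigns the value $+\infty$ to each, and proposition \ref{prop:infinite-projectors-equivalent} tells us that any two infinite projectors in a factor are equivalent; hence $\tau(P) = \tau(Q)$ and $P \sim Q$ hold simultaneously in that case. Running through the four combinations of finite and infinite $P, Q$ then yields the equality statement in full.

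Next I would establish order-preservation. In the forward direction, if $P \precsim Q$ then by definition there is a projector $R \in \A$ with $P \sim R \leq Q$; writing $Q = R + (Q - R)$ with $Q - R$ again a projector in $\A$, additivity gives $\tau(Q) = \tau(R) + \tau(Q - R) \geq \tau(R) = \tau(P)$, where the last equality is the equality statement applied to $P \sim R$. For the converse, suppose $\tau(P) \leq \tau(Q)$. Because $\A$ is a factor, the comparison theorem (theorem \ref{thm:comparison}) makes $\precsim$ a total order, so either $P \precsim Q$, in which case we are done, or $Q \precsim P$. In the latter case the forward direction gives $\tau(Q) \leq \tau(P)$, so $\tau(P) = \tau(Q)$, whence $P \sim Q$ by the equality statement and therefore $P \precsim Q$ as well. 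This appeal to totality is the one place where the factor hypothesis is genuinely used, and it is the only mildly delicate point in the argument.

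Finally, the order-isomorphism conclusion is formal. The equality statement shows that $\tau$ is constant on equivalence classes and separates distinct classes, so it descends to an injection on the set of classes. Equipping that set with the induced order $[P] \leq [Q]$ defined by $P \precsim Q$, the order-preservation statement shows that both $[P] \mapsto \tau(P)$ and its inverse are order-preserving. Hence $\tau$ descends to an order isomorphism from the poset of projector equivalence classes in $\A$ onto its image, a subset of $[0, \infty]$.
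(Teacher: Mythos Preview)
Your proposal is correct and follows essentially the same approach as the paper. The equality claim is handled exactly as in the text preceding the theorem (via $\tau(VV^*)=\tau(V^*V)$, the comparison-theorem argument for finite projectors, and proposition \ref{prop:infinite-projectors-equivalent} for infinite ones), and your order-preservation argument---additivity for the forward direction, totality of $\precsim$ plus the equality statement for the converse---is the natural completion of what the paper leaves implicit.
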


It is interesting to ask which subset of $[0, \infty]$ is obtained by acting on projectors in $\A$ with $\tau.$
In the type III case, every nonzero projector is infinite by definition, so the values assigned to projectors by $\tau$ are $\{0, \infty\}.$
In the type I case, one can show that any two nonzero minimal projectors are equivalent --- this follows immediately from the comparison theorem \ref{thm:comparison} --- and furthermore that any projector $P$ can be written as a sum of pairwise orthogonal nonzero minimal projectors --- this can be proven by Zorn's lemma, since if a maximal family of pairwise orthogonal nonzero minimal projectors $Q_j$ with $Q_j \leq P$ fails to satisfy $\sum_j Q_j = P,$ then $P - \sum_j Q_j$ has a nonzero minimal subprojector, which contradicts maximality of the family $Q_j$.
So the allowed values of the trace are determined entirely by the value that the trace assigns to a nonzero minimal projector.
If we call this value $\lambda,$ then the allowed values are $\{0, \lambda, 2 \lambda, \dots, n \lambda\},$ where $n$ can be finite or $\infty$ and denotes the maximal number of pairwise orthogonal minimal projectors in the algebra.
In the type II case, the trace assigns a value $\lambda$ to the identity operator which is finite in the II$_{1}$ case and infinite in the type II$_{\infty}$ case, and every possible real number between $0$ and $\lambda$ is an allowed value for the trace.
This is because for any $\lambda_0 \in [0, \lambda],$ we can set\footnote{Recall that in fact \ref{fact:projector-infima-and-suprema} we showed that $P_-$ and $P_+$ are projectors in $\A$.}
\begin{align}
	P_-
		& = \sup \{P \in \A \text{ a projector with } \tau(P) \leq \lambda_0\}, \\
	P_+
		& = \inf\{P \in \A \text{ a projector with } \tau(P) \geq \lambda_0\}.
\end{align}
If $P_+$ is equivalent to $P_-,$ then we have $\lambda_0 \leq \tau(P_+) = \tau(P_-) \leq \lambda_0,$ so $\tau(P_+) = \lambda_0$ and $\lambda_0$ is an allowed value for $\tau.$
If $P_+$ is not equivalent to $P_-,$ then by the comparison theorem (theorem \ref{thm:comparison}), we must have $P_+ \succnsim P_-,$ i.e. $P_+ \sim Q \gneq P_-.$
Thus $Q - P_-$ is a nonzero projector, and since it cannot be minimal (as we are in a type II factor), there must be a nonzero projector $R \in \A$ with $R \lneq Q - P_-$.
But then we have $\lambda_0 < \tau(P_- + R) < \tau(Q) = \tau(P_+),$ which contradicts the definition of $P_+.$

We summarize the results of the above paragraph in the following theorem.
\begin{theorem} \label{thm:projector-trace-values}
	In a type I$_n$ factor with renormalized trace $\tau,$ the possible values of the trace on projectors are $\{0, \lambda, 2 \lambda, \dots, n \lambda\}$, where $\lambda$ is an arbitrary constant related to the normalization of $\tau$.
	If the factor is type $I_{\infty},$ the possible values of the trace on projectors are $\{0, \lambda, \dots, \infty\}.$
	By convention, one often takes $\lambda = 1.$
	
	In a type II factor with renormalized trace $\tau,$ the possible values of the trace on projectors are $[0, \lambda],$ where $\lambda$ is infinity in the type II$_\infty$ case, or an arbitrary finite value related to the normalization of $\tau$ in the type II$_1$ case.
	Conventionally, in the type II$_1$  case, one takes $\lambda = 1.$
	
	In a type III factor with renormalized trace $\tau,$ the possible values of the trace on projectors are $\{0, \infty\}.$
\end{theorem}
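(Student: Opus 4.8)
The plan is to exploit the order isomorphism established in the preceding theorem: since $\tau$ descends to an order isomorphism from the poset of projector equivalence classes in $\A$ onto a subset of $[0,\infty]$, the task reduces to identifying that image, which I would do by analyzing the equivalence classes of projectors type by type. The normalization freedom in the constant $\lambda$ is immediate from the uniqueness-up-to-rescaling result quoted above, so I would fix a convenient reference projector in each case and read off $\lambda$ as its trace.

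The type III case is essentially definitional: every nonzero projector in a type III factor is infinite, so by the computation preceding the theorem (where $\tau(P) = \tau(Q) + \tau(P-Q) = \tau(P) + \tau(P-Q)$ forces $\tau(P)=\infty$ for infinite $P$), every nonzero projector has trace $\infty$, leaving the image $\{0,\infty\}$. For the type I case I would first show that any two nonzero minimal projectors are equivalent: given minimal $P,Q$, the comparison theorem (theorem \ref{thm:comparison}) yields, say, $P \sim R \leq Q$, and minimality of $Q$ forces $R = Q$ since $R \neq 0$, whence $P \sim Q$ and all minimal projectors share a common trace $\lambda$. Next, using the Zorn's-lemma decomposition of an arbitrary projector into a pairwise-orthogonal family of nonzero minimal projectors, together with finite additivity of $\tau$ and normality to pass to countable orthogonal sums, I would conclude that a projector splitting into $k$ minimal pieces has trace $k\lambda$; letting $k$ range from $0$ up to the maximal number $n$ of pairwise-orthogonal minimal projectors gives the image $\{0,\lambda,2\lambda,\dots,n\lambda\}$, with the top value $\infty$ precisely in the type I$_\infty$ case.

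The hard part will be the type II case, where I must show that $\tau$ hits \emph{every} value in $[0,\lambda]$ --- that is, that the image has no gaps. Here I would set $\lambda = \tau(1)$ (finite for II$_1$, infinite for II$_\infty$) and, for a target $\lambda_0 \in [0,\lambda]$, form the projectors
\begin{align}
	P_- &= \sup\{P \in \A \text{ a projector with } \tau(P) \leq \lambda_0\}, \\
	P_+ &= \inf\{P \in \A \text{ a projector with } \tau(P) \geq \lambda_0\},
\end{align}
which lie in $\A$ by fact \ref{fact:projector-infima-and-suprema} and satisfy $\tau(P_-) \leq \lambda_0 \leq \tau(P_+)$ by normality. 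If $P_+ \sim P_-$ then their traces coincide and must equal $\lambda_0$, so $\lambda_0$ is attained. The crux is ruling out the alternative: if $P_+ \not\sim P_-$, the comparison theorem gives $P_+ \succnsim P_-$, so $P_+ \sim Q \gneq P_-$ with $Q - P_-$ a nonzero projector; because $\A$ is type II this projector is non-minimal and hence dominates a strictly smaller nonzero projector $R \lneq Q - P_-$. Adjoining $R$ to $P_-$ then produces a projector whose trace lies strictly between $\tau(P_-)$ and $\tau(Q) = \tau(P_+)$, contradicting the extremal definitions of $P_-$ and $P_+$.

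The essential input --- and the reason this last argument fails for type I --- is the absence of minimal projectors, which guarantees that finite projectors can always be subdivided further. I expect the main obstacle to be making the ``strictly between'' inequalities precise (ensuring $\tau(P_-) \leq \lambda_0 < \tau(P_- + R) < \tau(P_+)$ genuinely holds) and handling the $\lambda = \infty$ endpoint cleanly in the type II$_\infty$ case, where the interval $[0,\infty]$ must still be covered despite $\tau(1)$ being infinite.
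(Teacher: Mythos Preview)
Your proposal is correct and follows essentially the same approach as the paper: the type III case by definition, the type I case via equivalence of minimal projectors and the Zorn's-lemma decomposition into minimal pieces, and the type II case via the identical $P_{\pm}$ construction and contradiction argument exploiting non-minimality. You even flag the same subtlety (the ``strictly between'' inequalities and the behavior of $\tau$ on $P_\pm$) that the paper's own sketch leaves somewhat implicit.
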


This classification helps us understand the heuristic given at the end of section \ref{sec:big-picture} for the type classification of factors in terms of the kinds of renormalizable states that it contains, and the relation of that heuristic to the formal definition of types given in section \ref{subsec:algebraic-mixed-states}.
If a factor is type III, then the renormalized trace is the trivial one that assigns $0$ to the zero operator and $+\infty$ to every nonzero positive operator.
Consequently, a type III factor has no renormalizable density operators.
If a factor is type I, then the renormalized trace assigns finite values to at least some projectors, so it has at least a few renormalizable density operators.
Furthermore, there exist minimal projectors, which are renormalizable, and which are pure in the sense that they cannot be expressed as positive linear combinations of other renormalizable density operators in the algebra.
Finally, in a factor of type II, there exist projection operators that are assigned finite trace by $\tau$ and hence are renormalizable, but every one of these can be expressed as linear combinations of smaller projectors that also correspond to renormalizable states.

Another useful fact to note is that the Hilbert space trace ``$\tr$'' is always faithful and normal, but not necessarily clever.
If the factor $\A$ contains a single positive operator whose Hilbert space trace is finite, then ``$\tr$'' is itself a renormalized trace, and every renormalized trace is proportional to the ordinary Hilbert space trace ``$\tr.$''
So things only get really interesting when every positive operator in $\A$ has infinite Hilbert space trace, since in this case the renormalized trace is different from ``$\tr.$''

As a final comment, I will describe an interesting fact about the density matrices associated to a renormalized trace.
The claim is that in a factor $\A$ of type I or II with renormalized trace $\tau$, a large class of algebraic states on $\A$ can be represented by formal limits of renormalizable density matrices.
In algebraic quantum theory, a state on $\A$ is a linear functional $\omega : \A \to \comps$ that satisfies $\omega(T^* T) \geq 0$ and $\omega(1) = 1.$
This state is said to be ``normal'' if it is continuous with respect to a particular topology called the ultraweak topology.\footnote{The ultraweak topology is not discussed in these notes, but see for example section 20 of \cite{conway2000course}.}
One can show that for any normal state $\omega : \A \to \comps,$ there exists a \textit{formal density operator $\rho_{\omega}$} satisfying $\omega(T) = \tau(\rho_{\omega} T)$ for all $T \in \A.$
The reason I say ``formal density operator'' instead of just ``density operator'' is because $\rho_{\omega}$ might not actually be an operator in $\A.$
The actual statement is that there exists some sequence $\rho_{\omega, n}$ of operators in $\A_1,$ which is Cauchy with respect to the norm $\lVert x \rVert_1 = \tau(|x|),$ such that we have $\omega(T) = \lim_{n} \tau(\rho_{\omega,n} T)$ for all $T \in \A$.
For further details, see theorem 2.18 and proposition 2.19 of chapter V in \cite{takesaki2001theory}.

\subsection{Construction of the unique trace on a factor}
\label{subsec:trace-construction}

In the preceding subsection, I claimed that every von Neumann factor admits a unique renormalized trace (see definition \ref{def:renormalized-trace}) up to rescaling.
In this subsection, I will sketch a proof of that claim.
Though I will not explain all of the details, I will give references for the details as needed.
The idea of the proof is to show that once the value of $\tau$ on a single nonzero finite projector has been chosen, its values on all other projectors are completely determined, and once its values on projectors are determined its value on any positive operator can be obtained via continuity.

First note that in any factor, the functional $\tau_{\infty} : \A_+ \to [0, \infty]$ that sends the zero operator to zero, and every nonzero positive operator to $+\infty$, satisfies every property of definition \ref{def:renormalized-trace} except for cleverness.
Cleverness requires that any algebraically finite projector has finite renormalized trace.
This is vacuously satisfied by $\tau_{\infty}$ in the case of a type III factor, since type III factors have no algebraically finite projectors!
So in the type III case, $\tau_{\infty}$ is a renormalized trace.
To see that it is the only one, let $\tau$ be a generic renormalized trace for a type III factor $\A$.
As explained in theorem \ref{thm:projector-trace-values}, every nonzero projector $P$ must satisfy $\tau(P) = \infty$.
If $\rho \in \A_+$ is a nonzero positive operator, then it has at least one nonzero spectral projection $P$ for which there is a positive number $\alpha$ satisfying $\rho \geq \alpha P.$
The basic properties of the trace given in definition \ref{def:trace} then guarantee $\tau(\rho) \geq \alpha \tau(P) = \infty,$ so in a type III factor any renormalized trace must assign $+\infty$ to every nonzero positive operator.
It is therefore equal to $\tau_{\infty}.$

The next simplest case to check is that of a type I factor.
A type I factor has nonzero minimal projectors, and as discussed in the paragraph preceding theorem \ref{thm:projector-trace-values}, all minimal projectors in a type I factor are equivalent.
So to construct a renormalized trace on $\tau,$ we may start by choosing a positive number $\lambda$ to assign as the renormalized trace of each nonzero minimal projector.
That is, if $P$ is a nonzero minimal projector, we will define $\tau(P) = \lambda.$
I also explained in the paragraph preceding theorem \ref{thm:projector-trace-values} that every projector $Q \in \A$ can be expressed as a sum of pairwise orthogonal minimal projectors; since the trace $\tau$ is supposed to be additive, we must assign $\tau(Q) = n \lambda,$ where $n$ is the number (possible infinity) of minimal projectors in the decomposition of $Q$.
These considerations uniquely determine the value of $\tau$ on every projector up to the choice of $\lambda,$ i.e., up to an overall scaling.
It is a general fact (see appendix \ref{app:spectral-theory}) that every positive operator $\rho$ is the supremum over all positive linear combinations of projections that satisfy
\begin{equation}
	\sum_{j=1}^{n} p_j P_j \leq \rho.
\end{equation}
If $\tau$ is to be additive, then it must satisfy
\begin{equation}
	\tau \left( \sum_{j=1}^n p_j P_j \right)
		= \sum_{j=1}^n p_j \tau(P_j),
\end{equation}
and if it is to be normal (see definition \ref{def:renormalized-trace}), then it must satisfy
\begin{equation}
	\tau(\rho)
		= \sup \left\{ \tau \left( \sum_{j} p_j P_j \right) | \sum_j p_j P_j \leq \rho \right\}
		= \sup \left\{ \sum_{j} p_j \tau(P_j) | \sum_j p_j P_j \leq \rho \right\}
\end{equation}
So once the value of $\tau$ is determined on projections, its value on every positive operator is determined by the additivity condition in definition \ref{def:trace} and the normality condition in definition \ref{def:renormalized-trace}.
Again, the only freedom we had in defining $\tau$ was to define its value on a minimal projector, which extends to an overall rescaling ambiguity for $\tau : \A_+ \to [0, \infty].$

So far, we have shown that if a renormalized trace $\tau$ on a type I factor exists, then it must be unique up to rescaling.
We have explained how to determine the value of $\tau$ on any projector after choosing what its value is on a minimal projector. and then how to use this data to determine the value of $\tau$ on any positive operator.
We have \textit{not} shown, however, that the function $\tau$ thus defined satisfies all the properties of the trace --- in particular, it is not guaranteed that $\tau$ is additive on the space of positive operators.
The basic issue is that if we have $\sum_{j} p_j P_j = \sum_{j} q_j Q_j$ for two distinct positive linear combinations of projectors, we have not yet established the identity $\sum_{j} p_j \tau(P_j) = \sum_j q_j \tau(Q_j).$
Showing this is very tricky, and I will not actually do it in this note.
I will have more to say about the proof of existence, but I will say it after explaining uniqueness of the renormalized trace on a type II factor, since the existence issues for type I and type II factors are equivalent.

Putting aside the type I case momentarily, let us discuss the case of a type II factor, which is significantly more involved.
I will sketch the proof in a heuristic way, but technical details can be found from proposition 1.3.5 to proposition 1.3.11 in \cite{sunder2012invitation}.
To start, note that a type II factor has at least one nonzero finite projector; let us fix one, and call it $P.$
We will construct a renormalized trace $\tau$ by picking an arbitrary positive value for $\tau(P)$ --- say, $\tau(P) = \lambda.$
We now want to show that if $\tau$ is to satisfy the conditions of a renormalized trace, then its value on any other projector is determined by its value on $P.$
The key tool here will be a version of the remainder theorem for division of natural numbers, which says that for any positive integers $p, q,$ there is an integer $a$ and an integer $r$ with $0 \leq r < q$, satisfying
\begin{equation}
	p = a q + r.
\end{equation}
We will analogously show that in a type II factor, for any finite nonzero projectors $P$ and $Q,$ the projector $P$ can be expressed as a finite sum over pairwise-orthogonal projectors $\{Q_j\}$ equivalent to $Q$, plus a ``remainder'' $R$ satisfying $R \precnsim Q$ and $R$ orthogonal to each $Q_j.$
I find it helpful to think of this in terms of a picture, sketched in figure \ref{fig:remainder-theorem}.
Without loss of generality, we may assume $Q \lneq P.$
Then I will represent $P$ as a box, roughly indicating a subspace of $\H$, and represent $Q$ as a smaller box, indicating the fact that $Q$ has cardinality strictly less than that of $P.$
We can think of the division algorithm for projectors as saying: ``put together as many copies of the box $Q$ as you can within the box $P$, then call whatever is left over $R$, and the comparison theorem implies $R \precnsim Q.$''

\begin{figure}[h!]
	\centering
	\includegraphics{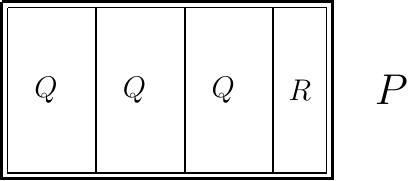}
	\caption{A sketch of the remainder theorem for finite projectors in a factor $\A$. For any finite $P \in \A$ and any $Q \in \A$ with $Q \precsim P,$ there exists a family $Q_1, \dots, Q_n$ of pairwise-orthogonal projectors equivalent to $Q$, and some projector $R \precnsim Q$ satisfying $R \perp Q_j,$ with $P = Q_1 + \dots + Q_n + R.$}
	\label{fig:remainder-theorem}
\end{figure}

In mathematical terms, the remainder theorem is proven by using induction to construct a maximal family $\{Q_j\}$ of subprojectors of $P$ that are pairwise orthogonal and equivalent to $Q$.
Maximality of this family, together with the comparison theorem (theorem \ref{thm:comparison}), then implies that the remainder $R = P - \sum_j Q_j$ satisfies $R \precnsim Q.$
We therefore have
\begin{equation}
	P = \sum_j Q_j + R.
\end{equation}
Furthermore, the total cardinality of the set $\{Q_j\}$ must be finite; for if it were countably infinite, the relations $Q_{n+1} \sim Q_n$ together with the additivity lemma (lemma \ref{lem:additivity-lemma}) would imply
\begin{equation}
	P = \sum_{j=1}^{\infty} Q_j + R \sim \sum_{j=2}^{\infty} Q_j + R \lneq P,
\end{equation}
contradicting the finiteness of $P$.
It is also not hard to show --- see for example proposition 1.3.5 of \cite{sunder2012invitation} --- that the cardinality of $\{Q_j\}$ is an intrinsic property of the projectors $P$ and $Q$ that is independent of the specific decomposition $P = \sum_{j=1}^{n} Q_j + R.$

The basic idea is now to show, using the fact that our factor is type II and thus contains no minimal projectors, that for every positive integer $n$ there exists a projector $F_n$ such that $P$ can be represented as the sum of $n$ mutually orthogonal projectors equivalent to $F_n,$ with no remainder.
One sets $F_n$ equal to the supremum over all subprojectors $S \leq P$ such that the division of $P$ by $S$ includes at least $n$ copies of $S$, then shows that this supremum divides $P$ exactly $n$ times with no remainder.
I will not show this explicitly, as the proof is not particularly enlightening, but it follows from the considerations in chapter 1.3 of \cite{sunder2012invitation} (though regrettably this result is not stated explicitly there).
If $\tau$ is to be additive, it must then satisfy $\tau(F_n) = \lambda / n.$
The value of $\tau(Q)$ for any finite projector $Q$ can then be determined by ``squeezing it'' between multiples of the projectors $F_n.$
For each $F_n,$ we can divide $Q$ by $F_n$ and write $Q$ as an integer number $m$ of copies of $F_n,$ plus a remainder.
The trace $\tau,$ if it exists, must satisfy $\tau(Q) \geq m \tau(F_n) = m \lambda / n,$ and $\tau(Q) < (m+1) \tau(F_n) = (m+1) \lambda / n.$
It is then straightforward to show --- again, consult \cite{sunder2012invitation} for details --- that as $n$ increases, the numbers $m \lambda / n$ and $(m+1) \lambda / n$ get arbitrarily close to one another, which forces $\tau(Q)$ to take on a specific value.

We have thus argued that on a type II factor, any renormalized trace has its value on any finite projector determined by its value on a single finite projector that is used as a reference.
To any infinite projector, we must assign trace infinity.
As in the type I case, the value of $\tau$ on a generic positive operator $\rho$ is then completely determined by the spectral theorem together with the additivity of the trace (see definition \ref{def:trace}) and its normality (see definition \ref{def:renormalized-trace}).
We have therefore established uniqueness of the renormalized trace on a type II factor up to rescaling, but we are left with the same existence issues as in the type I case.

The proof of existence amounts to showing that the function $\tau$, which is defined on $\A_+$ by defining its action on projectors and extending that action using the spectral theorem, is well defined and satisfies all the properties of definitions \ref{def:trace} and \ref{def:renormalized-trace}.
This turns out to be highly nontrivial.
A treatment of this issue in the case of a finite factor was given by Murray and von Neumann in \cite{murray1937rings}.
A much shorter treatment was given by Yeadon in \cite{yeadon1971new}, though the shortness of the proof is only possible by application to a technical result called the Ryll-Nardzewski fixed point theorem.
A complete treatment of the existence of the trace for arbitrary factors (in fact, for arbitrary von Neumann algebras) can be found in chapter V.2 of \cite{takesaki2001theory}.
This last treatment starts with Yeadon's proof of the trace for finite factors, and obtains the general case by an appropriate decomposition of an infinite algebra into finite pieces.
I do not recommend trying to understand these proofs in detail; it took me quite a long time to read them, and I do not feel that my understanding of the subject has been significantly enhanced by that adventure.

\section{The standard forms of certain factors}
\label{sec:standard-forms}

In this section, I show how to find the ``standard forms'' of the type I and type II$_{\infty}$ factors --- a type I factor being isomorphic to an algebra of the form $\B(\H) \otimes 1_{\H'},$ and a type II$_{\infty}$ factor being isomorphic to an algebra of the form $\B(\H) \otimes \A_{\H'}$ with $\A_{\H'}$ a type II$_1$ factor.
Comparing these two cases is conceptually helpful, as it demonstrates that the type I and II$_{\infty}$ cases are structurally similar --- both can be thought of as the algebra $\B(\H)$ together with some ``internal algebra,'' the difference being that in the type I case the internal algebra is trivial while in the type II$_{\infty}$ case the internal algebra is of type II$_1.$

\subsection{The standard form of a type I factor}

Consider first the case where we have a type I factor $\A$.
By definition, $\A$ has a minimal nonzero projector, which we will call $P$.
Using induction, we can construct a maximal family of pairwise-orthogonal projectors $\{P_j\},$ each of which is equivalent to $P$.
Maximality of this family, together with the comparison theorem (theorem \ref{thm:comparison}), implies that we have $\sum_{j} P_j = 1,$ for otherwise we would have $1 - \sum_j P_j \precnsim P,$ which would contradict the minimality of $P$.
I sketch this structure heuristically in figure \ref{fig:type-I-structure}, where I represent the algebra as a (possibly infinite) row of boxes of equal size.
Each box represents a projector equivalent to $P$, and all of the projectors are pairwise orthogonal.
The boxes are empty, to represent the fact that the projectors are minimal; the algebra has no structure at a level more granular than a single box.
By definition, if $n$ (possibly $n=\infty$) is the number of boxes in this figure, then the factor is of type I$_n.$

\begin{figure}[h!]
	\centering
	\includegraphics{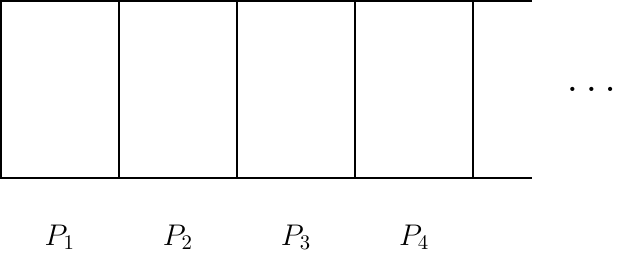}
	\caption{A heuristic drawing of the structure of a type I$_n$ factor. There is some set $P_1, P_2, \dots$ of pairwise-orthogonal minimal projectors that sum up to the identity.
	The total number of projectors in this family (possibly $\infty$) is $n.$}
	\label{fig:type-I-structure}
\end{figure}

If the minimal projector $P$ has rank one, then the projectors $\{P_j\}$ correspond to an orthonormal basis of $\H$, and $\A$ is the full algebra $\B(\H).$
What if $P$ has rank greater than one?
Well, \textit{morally}, \textit{algebraically}, $P$ has \textit{effective} rank one.
Can we make this precise?
Yes!
The idea is to construct a new Hilbert space, $\H',$ which has an orthonormal basis labeled by the projectors $P_j.$
Formally, we define a vector $\ket{P_j}$ for each projector $P_j,$ define the inner product so that these vectors are orthonormal, and take $\H'$ to be the space
\begin{equation}
	\H' = \left\{ \sum_{j} c_j \ket{P_j} | \sum_{j} |c_j|^2 < \infty\right\}.
\end{equation}
Relative to the original algebra, I like to think of the vectors in this Hilbert space as wavefunctions over the space of boxes in figure \ref{fig:type-I-structure}.
The leftover structure is just whatever is contained in a single box, which is like the trivial algebra made up of the identity operator acting on $P \H.$

To make all this formal, we note that since each $P_j$ is equivalent to $P$, there exist partial isometries $V_{j}$ mapping $P_j \H$ to $P \H.$
If we pick some orthonormal basis $\{\ket{e_\mu}\}$ for $P \H,$ then we obtain an orthonormal basis for each $P_j \H$ as $\{V_j^* \ket{e_{\mu}} \}.$
The collection $\{V_{j}^* \ket{e_{\mu}}\}$ for all $j$ and $\mu$ is then an orthonormal basis for $\H$, and the collection $\{\ket{P_j} \otimes \ket{e_\mu} \}$ is an orthonormal basis for $\H' \otimes P \H$, so we can obtain a unitary map $U$ from $\H$ to $\H' \otimes P\H$ via the identification
\begin{equation}
	U \left( V_{j}^* \ket{e_{\mu}} \right)
		= \ket{P_j} \otimes \ket{e_\mu}.
\end{equation}
If we conjugate our original algebra $\A$ by $U$, we obtain an algebra $U \A U^*$ that acts on $\H' \otimes P \H$.
We can understand this algebra by computing the matrix elements
\begin{align}
	(\bra{P_j} \otimes \bra{e_\mu}) U T U^* (\ket{P_k} \otimes \ket{e_\nu})
		& = \bra{V_j^* e_{\mu}} T \ket{V_k^* e_{\nu}} \nonumber \\
		& = \bra{e_\mu} V_j T V_{k}^* \ket{e_\nu}.
\end{align}
The operator $V_j T V_{k}^*$ is an operator in $\A$ that acts only within the subspace $P \H$.
Consequently, it must be a scalar multiple of $P.$
If it were not, then we could split it into its hermitian and antihermitian parts, and find that the spectral projections of these parts --- which are in $\A$ by fact \ref{fact:spectral-theorem} --- include a proper nonzero subprojection of $P$, contradicting the assumption that $P$ is minimal.
We may therefore write $V_j T V_{k}^* = T_{jk} P$, and conclude
\begin{align}
	(\bra{P_j} \otimes \bra{e_\mu}) U T U^* (\ket{P_k} \otimes \ket{e_\nu})
	& = T_{jk} \delta_{\mu \nu}.
\end{align}
From this formula, we find that $U \A U^*$ is contained within the algebra $\B(\H') \otimes 1_{P \H}.$
The reverse inclusion is easy, since for any operator $S$ in $\B(\H') \otimes 1_{P \H},$ we can compute its matrix elements $S_{jk} \delta_{\mu \nu}$ and specify an operator $T$ in $\B(\H)$ via the formula $V_j T V_{k}^* = S_{jk},$ which guarantees that we have $U T U^* = S.$

So via unitary conjugation, any type I$_n$ factor can be mapped to the algebra $\B(\H') \otimes 1_{P \H},$ where $\H'$ is a Hilbert space of dimension $n$, and $1_{P \H}$ keeps track of the trivial internal structure of the minimal projectors.

\subsection{The standard form of a type II$_{\infty}$ factor}

A type II$_{\infty}$ factor $\A$ has no minimal projector, but it does contain at least one finite projector $P$.
As in the preceding subsection, we can decompose the identity operator as a sum $1 = \sum_j P_j,$ where each $P_j$ is equivalent to $P$ and the $\{P_j\}$ projectors are pairwise orthogonal.
In analogy with figure \ref{fig:type-I-structure}, I will represent the II$_{\infty}$ algebra in figure \ref{fig:type-II-structure} as a row of boxes, each one corresponding to one of the projectors $P_j.$
Since the factor is infinite, there are infinitely many boxes.
While in figure \ref{fig:type-I-structure} the boxes were empty, to indicate that there was no structure to the algebra within each box, in figure \ref{fig:type-II-structure} the boxes are ruled to indicate that there is further granularity to the algebra within each projector $P_j.$
In fact, we will now see that each box contains a type II$_1$ factor.

\begin{figure}[h!]
	\centering
	\includegraphics{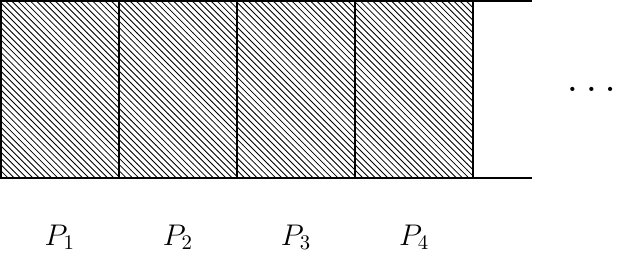}
	\caption{A heuristic drawing of the structure of a type II$_\infty$ factor. There is some infinite sequence $P_1, P_2, \dots$ of pairwise-orthogonal finite projectors that sum up to the identity.
	These projectors are not minimal; they have internal structure.
	In particular, each algebra $P_j \A P_j,$ considered as an algebra acting on $P_j \H,$ is a type II$_1$ factor.}
	\label{fig:type-II-structure}
\end{figure}

This is because for any projector $P \in \A$, the algebra $P \A P$, considered as an algebra acting on $P \H,$ is a von Neumann factor.\footnote{We already used this fact previously in section \ref{subsec:factor-decomposition}. It  can be found as proposition 43.8 of \cite{conway2000course}.}
The algebra $P \A P$ contains no minimal projector, since any minimal projector in $P \A P$ would also be minimal in $\A$.
Furthermore, its identity operator is $P$, which is a finite projector.
So $P \A P,$ considered as a von Neumann algebra on $P\H,$ is a type II$_1$ factor.

Proceeding exactly as in the previous subsection, we can construct a Hilbert space $\H'$ whose orthonormal basis is labeled by the projectors $P_j.$
There is a family of partial isometries $V_j$ mapping $P_j \H$ onto $P \H$, and if $\{\ket{e_\mu}\}$ is a basis for $P\H$ then $\{V_j^* \ket{e_{\mu}}\}$ is a basis for $\H$.
We construct the unitary identification from $\H$ to $\H' \otimes P \H$ given by $U : V_j^* \ket{e_\mu} \mapsto \ket{P_j} \otimes \ket{e_\mu}.$

Following exactly the same logic as in the preceding section, one can show that the algebra $U \A U^*$ acts on $\H' \otimes P\H$ as
\begin{equation}
	U \A U^* = \B(\H) \otimes P \A P.
\end{equation}
So $\A$ is unitarily equivalent to a tensor product of an algebra of bounded operators $\B(\H),$ which I think of as representing how $\A$ is spread about among the different boxes in figure \ref{fig:type-II-structure}, together with a type II$_1$ factor.

Note that this decomposition is somewhat less canonical than the one given in the preceding subsection.
This is because while there is a unique equivalence class of minimal projectors in a type I factor, there are many inequivalent finite projectors in a type II$_{\infty}.$
Nevertheless, I find this decomposition useful for understanding structural similarities and differences between the type I and type II cases.
Both are represented by algebras of bounded operators on a Hilbert space, but type I has a trivial internal structure while type II$_\infty$ has the internal structure of a type II$_1$ factor.

\section{Miscellany}
\label{sec:miscellany}

\subsection{Modular theory}
\label{sec:modular-theory}

This subsection is intended only for readers who are already familiar with modular theory, so I will not explain its fundamentals.
A nice explanation of the basics of modular theory can be found in \cite{witten2018aps}.
The point is that given a von Neumann algebra $\A$ with a cyclic, separating vector $\ket{\Omega},$ one can construct an unbounded, self-adjoint, positive operator $\Delta_{\Omega}$ called the ``modular operator'' satisfying two important properties:
\begin{enumerate}[(i)]
	\item For any $t \in \reals,$ we have $\Delta_{\Omega}^{it} \A \Delta_{\Omega}^{-it} = \A.$
	\item We have $\Delta_{\Omega} \ket{\Omega} = \ket{\Omega}.$
\end{enumerate}
These conditions tell us that the flow $\Delta^{it}$ is a symmetry for the expectation values of $\A$ in the state $\ket{\Omega}.$

There are some interesting relationships between the unbounded operator $\Delta_{\Omega}$ and the type classification of von Neumann algebras.
Namely, there is an algebraic property of $\Delta_{\Omega}$ that can be used to tell whether or not a factor is type III (though it does not distinguish between type I and II).
It is also possible to use algebraic properties of $\Delta_{\Omega}$ to study the different kinds of type III factors; we will return to this point in the next subsection.
The essential theorem is the following:
\begin{theorem}
	Let $\A$ be a factor, and $\ket{\Omega}$ a cyclic and separating vector with modular operator $\Delta_{\Omega}.$
	If the spectrum of $\Delta_{\Omega}$ does not include zero, then $\A$ is not type III.
\end{theorem}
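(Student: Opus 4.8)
The plan is to prove the stronger statement that $\A$ is \emph{finite} (type I$_n$ or II$_1$), which in particular rules out type III. The first step is to extract boundedness from the hypothesis. Since $\Delta_\Omega$ is positive and self-adjoint its spectrum lies in $[0,\infty)$, and the hypothesis $0 \notin \sigma(\Delta_\Omega)$ says exactly that $\Delta_\Omega^{-1}$ is bounded, i.e. $\Delta_\Omega \geq c$ for some $c > 0$. I would then invoke the modular relation $J \Delta_\Omega J = \Delta_\Omega^{-1}$: because $J$ is an antiunitary, $\Delta_\Omega^{-1}$ and $\Delta_\Omega$ have the same spectrum, and since $\sigma(\Delta_\Omega)$ is closed and invariant under $\lambda \mapsto 1/\lambda$, an unbounded spectrum would force $0$ to be a limit point and hence an element of $\sigma(\Delta_\Omega)$. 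Thus $0 \notin \sigma(\Delta_\Omega)$ gives $c \leq \Delta_\Omega \leq C$ for some $0 < c \leq C$.

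The second step turns this two-sided bound into an estimate on the state $\omega(A) = \bra{\Omega} A \ket{\Omega}$. Writing $S = J \Delta_\Omega^{1/2}$ for the Tomita operator, so that $A^* \ket{\Omega} = S A \ket{\Omega} = J \Delta_\Omega^{1/2} A \ket{\Omega}$, and using that $J$ is isometric, one computes
\begin{equation}
	\omega(A A^*) = \lVert A^* \ket{\Omega} \rVert^2 = \lVert \Delta_\Omega^{1/2} A \ket{\Omega} \rVert^2 = \bra{\Omega} A^* \Delta_\Omega A \ket{\Omega} \leq C\, \bra{\Omega} A^* A \ket{\Omega} = C\, \omega(A^* A).
\end{equation}
Applying this with $A$ replaced by $u A^{1/2}$ for a unitary $u \in \A$ and a positive $A \in \A_+$ shows that the unitarily rotated states $\omega_u(\cdot) \equiv \omega(u \cdot u^*)$ satisfy $\omega_u(A) \leq C\, \omega(A)$ for all $A \in \A_+$; that is, the whole unitary orbit of $\omega$ is dominated by the single normal functional $C \omega$.

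The third step produces the trace. The domination $\omega_u \leq C \omega$ confines the orbit $\{\omega_u : u \in \mathcal{U}(\A)\}$ to the order interval $[0, C\omega]$ in the predual $\A_*$, which is weakly compact. The unitary group acts on the weakly closed convex hull of this orbit by affine, weakly continuous isometries, so the Ryll--Nardzewski fixed point theorem (the same tool invoked in section \ref{sec:traces} for the existence of the trace) furnishes a fixed point $\tau$: a normal state with $\tau(u A u^*) = \tau(A)$ for every unitary $u$, i.e. a normal tracial state. Since the support projection of a trace is invariant under all unitaries it is central, hence in a factor equal to $0$ or $1$; as $\tau(1) = 1 \neq 0$ it equals $1$, so $\tau$ is faithful. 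A factor carrying a faithful normal tracial state is finite, and by the results of section \ref{sec:traces} a finite factor is type I$_n$ or II$_1$, and in particular not type III.

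I expect the fixed point step to be the main obstacle, since it is the analytic heart of the argument. The estimate of the second step is exactly what guarantees the weak compactness that Ryll--Nardzewski requires, and this is precisely where the absence of $0$ from the spectrum does its work: without the bound, the unitary orbit can drift toward singular states and no normal trace survives in the limit, which is the hallmark of the type III case.
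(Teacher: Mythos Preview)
Your argument is correct and in fact proves more than the paper does: you conclude that $\A$ is \emph{finite}, whereas the paper only concludes ``not type III.'' The first step---extracting two-sided boundedness of $\Delta_{\Omega}$ from $0 \notin \sigma(\Delta_{\Omega})$ via $J\Delta_{\Omega}J = \Delta_{\Omega}^{-1}$---is the same in both. After that the approaches diverge. The paper cites a structural result (Sakai) that a bounded modular flow is inner, builds the commutant flow $V(t) = U(t)^* \Delta_{\Omega}^{it}$, and defines a (generally only semifinite) trace from its generator, deferring the analytic details to Takesaki. You instead turn the bound $\Delta_{\Omega} \leq C$ directly into the order domination $\omega(u\,\cdot\,u^*) \leq C\,\omega$, confine the unitary orbit of $\omega$ to the weakly compact interval $[0, C\omega] \subset \A_*$, and extract a faithful normal tracial \emph{state} via Ryll--Nardzewski. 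This is essentially Yeadon's method (which the paper mentions in section \ref{subsec:trace-construction} but does not deploy here), and it is more self-contained: it avoids the inner-implementation theorem and the generator construction entirely. The paper's route, on the other hand, makes visible the connection between boundedness of $\Delta_{\Omega}$ and innerness of the modular automorphism group, which feeds into the discussion of $S(\A)$ and the Connes classification that follows. So your proof is shorter and sharper for this particular statement, while the paper's sketch is better aligned with the modular-theoretic narrative of the surrounding section.
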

\begin{proof}[Sketch of proof]
	The proof of this theorem is rather complicated; I will only sketch its major steps, then provide a reference for further reading.
	
	Suppose we have a modular operator $\Delta_{\Omega}$ whose spectrum does not include zero.
	The idea is to use $\Delta_{\Omega}$ to construct a renormalized trace on $\A$ (see definition \ref{def:renormalized-trace}) that assigns finite trace to at least one bounded operator in $\A$.
	By the discussion in section \ref{sec:traces}, this will imply that $\A$ is not type III.
	
	Since $\Delta_{\Omega}$ does not contain zero in its spectrum, the operator $\Delta_{\Omega}^{-1}$ exists and is bounded.
	It is a general fact in modular theory that any modular operator is related to its inverse by antiunitary conjugation; consequently, since $\Delta_{\Omega}^{-1}$ is bounded, $\Delta_{\Omega}$ must be bounded as well.
	One can then appeal to a technical result --- see e.g. corollary 4.1.14 of \cite{sakai2012c} --- which states that there must exist some unitary flow $U(t) \in \A$ with $U(t) T U(t)^* = \Delta_{\Omega}^{it} T \Delta_{\Omega}^{-it}.$\footnote{This is nontrivial because $\Delta_{\Omega}^{it}$ is not in $\A$; the result says that the modular flow can be produced using unitaries that lie within $\A$ itself.}
	One then shows that the flow $V(t) = U(t)^* \Delta^{it}$ is also unitary, and lies in $\A'.$
	One constructs a Hamiltonian $H'$ generating this flow, i.e., satisfying $V(t) = e^{i t H'}.$
	There is then a special set of vectors in the domain of $\exp(H'/2)$ such that one can define the functional
	\begin{equation}
		\tau(\ketbra{x}{y}) = \braket{e^{H'/2} y}{e^{H'/2} x}
	\end{equation}
	for any $\ket{x}, \ket{y}$ in this set.
	The set of vectors on which this functional is defined is sufficiently dense in $\H$ for $\tau$ to be defined on all of $\A_+$, and one must then show that the extended trace satisfies all the conditions of definition \ref{def:renormalized-trace}.
	Details can be found in section 14 of \cite{takesaki2006tomita}.
\end{proof}

It is important to note that a single von Neumann algebra could have multiple cyclic and separating vectors.
So what the above theorem is really saying is, if a factor is type III, then the modular operator associated with \textit{any} cyclic and separating state must have zero in its spectrum.
This observation leads us to two natural questions:
\begin{enumerate}[(i)]
	\item Is the converse true?
	That is, if every modular operator associated to a cyclic and separating state has zero in its spectrum, then is the factor necessarily type III?
	\item If so, then is there a way to determine whether a factor is type III using only a single modular operator, rather than checking the spectra of every possible modular operator?
\end{enumerate}

The answer to the first question is \textit{sort of}.
We can give the answer ``yes'' provided that we enlarge our definition of what we mean by a modular operator.
While we have discussed so far modular flows associated with cyclic and separating vectors in $\H$, there is a more general notion of a modular flow associated with a \textit{faithful semifinite weight}.
I will describe this briefly; more information can be found in chapter 2 of \cite{sunder2012invitation}.

A faithful, semifinite weight is sort of like an algebraic state, except that it need not assign finite expectation values to all operators in the algebra.
Formally, it is a map $\phi : \A_+ \to [0, \infty]$ that satisfies conditions (i) and (ii) of definition \ref{def:trace}, such that $\phi(T) = 0$ iff $T = 0,$ and such that the set $\A_{\phi, +}$ of elements with $\phi(T) < \infty$ is appropriately dense in $\A_+$; see section 2.4 of \cite{sunder2012invitation} for more details.
Similarly to the case of a trace, one can define an ideal $\A_{\phi}$ of Hilbert-Schmidt operators (cf. appendix \ref{app:trace-math}).
One then defines an inner product on the Hilbert-Schmidt operators by $\braket{X}{Y} = \phi(X^* Y),$ and turns $\A_{\phi}$ into a Hilbert space using the GNS construction.
It turns out that one can define a modular operator $\Delta_{\phi}$ acting on this Hilbert space associated with the weight $\phi,$ and since $\A_{\phi}$ carries a natural action of $\A$ via multiplication on the left, conjugation by $\Delta_{\phi}^{it}$ can be used to produce a flow on $\A$.

The idea is that if $\A$ is type I or II, then its renormalized trace \textit{is} a faithful semifinite weight, and the modular operator associated with it is just the identity.\footnote{The modular operator $\Delta_{\Omega}$ is defined so that for $a, b \in \A$ we have
\begin{equation}
	\bra{a \Omega} \Delta_{\Omega} \ket{b\Omega} = \braket{b^* \Omega}{a^* \Omega} = \bra{\Omega} b a^* \ket{\Omega}.
\end{equation}
If $\ket{\Omega}$ is obtained by performing the GNS construction on a trace, then by cyclicity of the trace we have
\begin{equation}
	\bra{a \Omega} \Delta_{\Omega} \ket{b\Omega} = \bra{\Omega} b a^* \ket{\Omega} = \bra{\Omega} a^* b \ket{\Omega} = \braket{a \Omega}{b \Omega}.
\end{equation}
Since $\ket{\Omega}$ is cyclic for $\A$, vectors of the form $a \ket{\Omega}$ are dense in the appropriate Hilbert space, so $\Delta_{\Omega}$ must be the identity operator.
}
So if $\A$ is type I or II, then in our enlarged notion of what a modular operator can be, there necessarily exists a modular operator for which zero is not in the spectrum, namely, the modular operator associated with the trace.
So we may say with confidence: ``A factor is type III if and only if every modular operator associated to a faithful semifinite weight has zero in its spectrum.''

We now return to the second question stated above.
Can the question of whether a factor is type III be determined from a single modular operator, without needing to check all of them?
The answer is yes, but the construction is quite complicated, and I will not try to describe it here.
The idea is that there is a subset of the nonnegative reals associated with any von Neumann algebra $\A$, often denoted $S(\A)$, which consists of the intersection of the spectra of all possible modular operators for $\A$.
This set was studied by Connes in \cite{connes1973classification}, where he gave a way to reconstruct this set using information about how the modular flow of a single modular operator acts on projectors in $\A$.
Details of the correspondence are reviewed in chapters 3.2 and 3.3 of \cite{sunder2012invitation}.
Knowing the set $S(\A)$ is certainly sufficient to know whether there exists a modular operator without zero in its spectrum, so this data can be used to determine whether a factor is of type III.

\subsection{Type III$_1$ factors in quantum field theory}
\label{sec:type-III1}

In the previous subsection, I mentioned the set $S(\A)$ associated with a von Neumann algebra $\A$, which is defined as the intersection over all spectra of all possible modular operators associated to $\A$.
In \cite{connes1973classification}, Connes defined a factor to be of type III$_1$ if $S(\A)$ is the full set $[0, \infty),$ that is, if every modular operator associated to $\A$ has all the nonnegative reals in its spectrum.
It is commonly believed that in quantum field theory, the von Neumann factors associated with subregions should generally be of type III$_1.$
Why is this the case?

The starting point of the argument is the observation due to Bisognano and Wichmann \cite{bisognano1975duality, bisognano1976duality} that if $\ket{\Omega}$ is the vacuum state of a quantum field theory in Minkowski spacetime, and $\A$ is the von Neumann algebra of observables associated to a Rindler wedge, then the modular flow $\Delta_{\Omega}^{it}$ corresponds to the Killing boosts that are null on the Rindler horizon.
In this case, $\Delta_{\Omega}$ clearly has spectrum $[0, \infty)$.
If $\ket{\Psi}$ is another cyclic and separating state in the vacuum sector of the theory, then the modular flow generated by $\Delta_{\Psi}$ is not expected to be expressible in a simple closed form.
But because every state in a quantum field theory is supposed to ``look like the vacuum at short distances,'' the modular flow generated by $\Delta_{\Psi}$ is expected to look \textit{approximately} like a boost near the bifurcation surface of the Rindler wedge.
From this heuristic, one tries to argue that $\Delta_{\Psi}$ should also have spectrum covering all of $[0, \infty),$ and therefore to conclude that the von Neumann algebra associated with the Rindler wedge in a general quantum field theory is of type III$_1.$
Since this argument is very local in the sense that it really only cares about physics in a small neighborhood of the bifurcation surface of the Rindler wedge, it is expected to apply to more general regions, since every smooth domain of dependence looks ``approximately Rindler'' near its edge.

While this intuition is useful, it is insufficient for a few reasons.
First, of course, all the hand-wavy notions of ``approximately Rindler'' and ``approximately like a boost'' need to be made precise for us to be sure that they are meaningful; we are firmly in the realm of speculation, and a little more math is needed to be sure that we are actually saying something that makes sense.
Second, we have only discussed modular flows associated to Hilbert space states in the vacuum sector, but the set $S(\A)$ is defined in terms of the modular flows associated with \textit{all} faithful semifinite weights.
A great deal of work has been done to argue that every modular flow associated to a quantum field theory algebra should have $[0, \infty)$ as its spectrum.
The unifying idea is to come up with a precise definition of what it means for a theory to ``look conformally invariant in the UV,'' and then to show that in any algebraic quantum field theory with this property, any modular operator can be compared with the vacuum modular operator in such a way that establishes its spectrum as being $[0, \infty).$
Broad reviews of the literature on this topic can be found in \cite{yngvason2005role} and in chapter V.6 of \cite{haag2012local}.
Relevant research papers include \cite{araki1964type, driessler1977type, fredenhagen1985modular, buchholz1987universal, buchholz1995scaling}.

\subsection{Non-factor algebras}
\label{sec:non-factor}

What can one say about the classification of von Neumann algebras that are not factors?
The simplest thing is to appeal to section \ref{subsec:factor-decomposition}, where I explained how any von Neumann algebra can be decomposed in terms of factors.
One can then say that a von Neumann algebra $\A$ is type I (respectively II or III) if every factor appearing in this decomposition is type I (respectively II or III).
Obviously, then, not every von Neumann algebra has a definite type.
For example, the direct sum of a factor of type I and a factor of type II is neither type I nor type II!

There is a more algebraic way of defining the type classification of general von Neumann algebras, which mimics the classification given in terms of projectors in section \ref{sec:projector-types}.
There, we said that a factor is of type I if it has a nonzero minimal projector, of type II if it has a nonzero finite projector but no nonzero minimal projectors, and of type III if it has no nonzero finite projectors.
We can frame this in a way that suggests generalization by saying:
\begin{definition}
\leavevmode \vspace{-\baselineskip}
\begin{enumerate}[(i)]
	\item A factor is of type I if there is a nonzero minimal projector dominated by the identity.
	\item A factor is of type II if there is a nonzero finite projector dominated by the identity, but no nonzero minimal projector dominated by the identity.
	\item A factor is of type III if there is no nonzero finite projector dominated by the identity.
\end{enumerate}
\end{definition}
When $\A$ is a factor, the identity operator is the only nonzero projection in the center of $\A$.
When $\A$ has a nontrivial center, we might be tempted to generalize the type classification as follows:
\begin{definition}[Tentative, and incorrect] \label{def:bad-definition}
	\leavevmode \vspace{-\baselineskip}
\begin{enumerate}[(i)]
	\item \textit{(Not correct.)} A von Neumann algebra is of type I if every nonzero central projection dominates some nonzero minimal projection.
	\item \textit{(Not correct.)} A von Neumann algebra is of type II if it has no nonzero minimal projections, but every nonzero central projection dominates some nonzero finite projection.
	\item A von Neumann algebra is of type III if it has no nonzero finite projections.
\end{enumerate}
\end{definition}
We should observe immediately that there is some asymmetry in these definitions that was not present in the factor case.
Namely, an algebra can be somewhere ``between types I and II'' under these definitions if it does have \textit{some} nonzero minimal projections, but not enough nonzero minimal projections for every central projection to dominate one.
This is actually a feature, not a bug; it is what allows for us to have von Neumann algebras which do not fit into any type, as described in the first paragraph of this subsection.
There is a problem with these definitions, however, which is that the notion of a ``minimal projection'' stops being the right one to consider once we generalize to non-factor algebras.
That is to say, if we try to define the types of general von Neumann algebras using definition \ref{def:bad-definition}, it will not coincide with the natural definition that a von Neumann algebra has a given type if and only if its decomposition into factors is made up entirely of factors of that type.
Instead of a \textit{minimal projector}, one must instead consider an \textit{abelian projector}, which is a projector $P \in \A$ such that the algebra $P \A P$ is abelian.
It is not hard to show that if $\A$ is a factor, then $P$ is abelian if and only if it is minimal.
To define types that match with the factor decomposition of an algebra, it is necessary to use the following definition.
\begin{definition}
	\leavevmode \vspace{-\baselineskip}
	\begin{enumerate}[(i)]
		\item A von Neumann algebra is of type I if every nonzero central projection dominates some nonzero abelian projection.
		\item A von Neumann algebra is of type II if it has no nonzero abelian projections, but every nonzero central projection dominates some nonzero finite projection.
		\item A von Neumann algebra is of type III if it has no nonzero finite projections.
	\end{enumerate}
\end{definition}
For details, the reader is encouraged to consult chapter V.1 of \cite{takesaki2001theory}.

In this more general setting, the issue of the renormalized trace (see definition \ref{def:renormalized-trace}) is more subtle.
The definition of a renormalized trace as being ``faithful, clever, and normal'' is no longer appropriate.
One simply gives up on trying to define a trace on von Neumann algebras that include type III subfactors, which liberates us from considering the ``cleverness'' property of a trace (see remark \ref{rem:clever-trace}).
If a von Neumann algebra has no type III element in its factor decomposition, then it is called semifinite, and one can show that it has a trace that is faithful, normal, and semifinite (see again remark \ref{rem:clever-trace}).
This trace is no longer unique up to rescaling.
The general theorem, given as theorem 2.31 in \cite{takesaki2001theory}, is that if $\tau_1$ and $\tau_2$ are two faithful, normal, semifinite traces, then there exists some central element $z \in \Z$ with $0 \leq z \leq 1$ satisfying
\begin{equation}
	\tau_1(T) = (\tau_1 + \tau_2)(z T)
\end{equation}
and
\begin{equation}
	\tau_2(T) = (\tau_1 + \tau_2)((1-z) T).
\end{equation}
It is not hard to see that when $\A$ is a factor, and $z$ is just a scalar multiple of the identity, this reduces to the claim made in section \ref{sec:traces} that the trace is unique up to rescaling.

\acknowledgments{It is a pleasure to thank Daniel Harlow for helpful conversations about central operators in gauge theories, and Adam Levine for letting me test some of this article's pedagogy on him.
I also thank Barton Zwiebach and Christian Ascione for comments on an earlier version of this manuscript.
I especially thank Netta Engelhardt for encouraging me to use the first few months of my postdoc to learn something new.
My research is supported by the Templeton Foundation via the Black Hole Initiative.}

\appendix

\section{Some mathematical background}
\label{app:math-background}

The rigorous study of von Neumann algebras can be extremely technical.
This is a consequence of the fact that while many of the results are algebraic or topological in nature, the techniques needed to prove them are analytic.
To deal with this issue, I have tried throughout these notes to give precise statements and references without including all the gory mathematical detail of the proofs.
While most of the technical material is not necessarily illuminating, I have identified two subjects that I think are really worthwhile to learn if you want to understand how to think properly about von Neumann algebras.
These topics are \textit{nets and topology} and \textit{the algebraic spectral theorem}.
This appendix explains those subjects at a mathematical level.

\subsection{Nets and topology}

Von Neumann algebras are algebras of operators that are closed in a particular topology --- \textbf{the weak operator topology}.
It is essential to understand this topology in order to understand not only how von Neumann algebras work, but \textit{why they are defined the way they are.}
The standard topologies on bounded operators can be formulated very easily by specifying the conditions under which a net of operators converges --- ``when does $T_{\alpha}$ converge to $T$?''
A net is a generalization of a sequence, and it has the nice property that any topology on any space is completely specified by its convergent nets; this is not true for sequences, as distinct topologies can have the same convergent sequences.
We will discuss nets in section \ref{sec:nets} in order to describe the standard topologies on operators in section \ref{app:operator-topologies}.

\subsubsection{Nets}

\label{sec:nets}

A sequence in a space $X$ is a map $x : \mathbb{N} \to X,$ where $\mathbb{N} = \{1, 2, \dots\}$ is the set of natural numbers.
We write points in the image of this map as $x_n \equiv x(n).$
The natural numbers come with an ordering, $1 < 2 < \dots,$ and we think of this ordering as being inherited by the points $\{x_n\},$ so that we say ``$\{x_n\}$ is eventually contained within the set $S \subseteq X$ if there is some integer $N$ such that $n \geq N$ implies $x_n \in S.$''
A \textbf{net} is a set of points in $X$ that has an ordering that may be different from the ordering of the natural numbers, but for which the notion of the set being ``eventually contained in $S$'' still makes sense.

The starting point is a \textbf{directed set} $\Lambda,$ which is a set with a special kind of partial order.
For any two points $\lambda_1, \lambda_2 \in \Lambda,$ we might have $\lambda_1 \leq \lambda_2,$ or we might have $\lambda_2 \leq \lambda_1,$ or we might not be able to compare the elements at all -- this is what makes the order \textit{partial}.
A typical example of a partial order is a tree, where we say two vertices $v_1$ and $v_2$ are comparable if they lie in the same branch of the tree, and we write $v_2 \leq v_1$ if $v_2$ lies further from the root than $v_1.$ See figure \ref{fig:tree-directed-set}.

\begin{figure}[h!]
	\centering
	\includegraphics{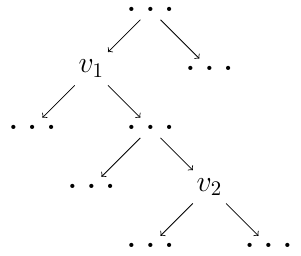}
	\caption{Any tree induces a partial order on its vertices, where we say $v_2 \leq v_1$ if $v_2$ lies further from the root than $v_1$.
	This partial order has the property that for any two vertices $v$ and $v',$ there exists some third vertex $v*$ that is closer to the root than either of them.
	Any partially ordered set with this property is called a \textit{directed set}.}
	\label{fig:tree-directed-set}
\end{figure}

If the set $\Lambda$ has a partial order that can be sketched as a tree, then it has a special property that is not true of a generic partial order: for any $\lambda_1, \lambda_2 \in \Lambda,$ there exists some third element $\lambda_* \in \Lambda$ that is greater than both of them: $\lambda_1 \leq \lambda_*, \lambda_2 \leq \lambda_*.$
See again figure \ref{fig:tree-directed-set}.
A \textbf{directed set} is a partially ordered set with this property.
A \textbf{net} in a space $X$ is a map $x : \Lambda \to X,$ with points in the image denoted by $x_{\lambda} \equiv x(\lambda).$
We can think of the points $\{x_{\lambda}\}$ as inheriting the order on $\Lambda,$ so we say ``$\{x_\lambda\}$ is eventually contained within the set $S \subseteq X$ if there is some $\lambda_0 \in \Lambda$ such that $\lambda \geq \lambda_0$ implies $x_{\lambda} \in S.$''
The ``directed set'' property of $\Lambda$ is important for making sense of this definition, because it guarantees that for any $\lambda \in \Lambda,$ there exists some $\tilde{\lambda}$ with $\tilde{\lambda} \geq \lambda_0, \tilde{\lambda} \geq \lambda,$ so every $x_{\lambda}$ is followed in the order by some $x_{\tilde{\lambda}}$ that is in $S$, and furthermore every point in the net following $x_{\tilde{\lambda}}$ is in $S$.

When $X$ is a topological space, we say \textbf{the net $x_{\lambda}$ converges to the point $x \in X$} if for every open set $U$ containing $x$, $\{x_{\lambda}\}$ is eventually contained within $U$.

Nets have two very useful properties, which I will now state.
For proofs, I encourage you to consult chapter 2 of \cite{kelley2017general}.
\begin{enumerate}[(i)]
	\item A topology is completely determined by its convergent nets.
	That is, if $X$ is a space and $\mathcal{T}_1, \mathcal{T}_2$ are two topologies on that space such that we have $x_{\lambda} \to x$ in $\mathcal{T}_1$ if and only if $x_{\lambda} \to x$ in $\mathcal{T}_2,$ then we have $\mathcal{T}_1 = \mathcal{T}_2.$
	
	This is a consequence of the more general fact that we have\footnote{Remember that a topology is a list of open sets; $\mathcal{T}_1 \subseteq \mathcal{T}_2$ means that every set which is open with respect to $\mathcal{T}_1$ is open with respect to $\mathcal{T}_2.$ In this case, we say that $\mathcal{T}_1$ is \textbf{weaker} or \textbf{coarser} than $\mathcal{T}_2.$ We say $\mathcal{T}_2$ is \textbf{stronger} or \textbf{finer} than $\mathcal{T}_1.$} $\mathcal{T}_1 \subseteq \mathcal{T}_2$ if and only if every net in $X$ that converges with respect to $\mathcal{T}_2$ converges with respect to $\mathcal{T}_1.$
	The slogan is: ``In a weaker topology, more nets converge.''
	
	\item Continuity of a function is completely determined by nets.
	That is, if $X$ and $Y$ are topological spaces, and $f : X \to Y$ is a function, then $f$ is continuous if and only if for every convergent net $x_{\lambda} \to x$ in $X$, the net $f(x_{\lambda})$ converges to $f(x)$ in $Y$.
\end{enumerate}

\subsubsection{Operator topologies}
\label{app:operator-topologies}

Now we are ready to discuss the standard topologies on $\B(\H)$, the space of bounded operators on a Hilbert space $\H$.
The space of bounded operators is a vector space; any reasonable topology should respect this vector space structure, meaning that operator addition $+ : \B(\H) \times \B(\H) \to \B(\H)$ and scalar multiplication $\cdot : \comps \times \B(\H) \to \B(\H)$ should be continuous maps.
For a general vector space $V$, there is a very general kind of topology with this property called a \textbf{locally convex} or \textbf{seminorm} topology, which is induced by a family of distance measures.
We start with some family $\{p_{\alpha}\}$ of \textbf{seminorms} on $V$.
These are maps $p_{\alpha} : V \to [0, \infty)$ that satisfy the triangle inequality and $p_{\alpha}(c \ket{v}) = |c| p_{\alpha}(\ket{v}),$ together with $p_{\alpha}(0) = 0$ and $p_{\alpha}(\ket{x}) \geq 0,$ but are not necessarily positive definite in that they may assign $p_{\alpha}(\ket{v}) = 0$ for $\ket{v} \neq 0.$
Distance measures with these properties show up all the time in mathematics and physics, and given some seminorm family $\{p_{\alpha}\},$ it is useful to put a topology on $V$ for which the notion of closeness is induced by the seminorm distances.
It turns out that there does exist such a topology, generated by fundamental neighborhoods that are intersections of finitely many balls of the form
\begin{equation}
	B_{\alpha, \ket{x}}(r) = \{ \ket{y} \in V \text{ such that } p_{\alpha}(\ket{y} - \ket{x}) < r\}.
\end{equation}
The seminorm topology has two nice properties:
\begin{itemize}
	\item It is the weakest (i.e., coarsest) topology in which addition is continuous, scalar multiplication is continuous, and all of the seminorms $p_{\alpha}$ are continuous.
	\item The topology is completely characterized by the fact that the net $\ket{x_{\lambda}}$ converges to $\ket{x}$ if and only if $p_{\alpha}(\ket{x_{\lambda}} - \ket{x})$ converges to zero for every seminorm $p_{\alpha}.$
\end{itemize}
For more details on seminorm topologies and proofs of all claims made above, I suggest consulting chapter 1 of \cite{rudin1991functional}.

Now, for the particular vector space $\B(\H)$, there are three topologies that are of interest to us in these notes: the norm topology, the strong topology, and the weak topology.

The \textbf{norm topology} is a seminorm topology induced by the operator norm
\begin{equation}
	\lVert T \rVert = \sup_{\lVert \ket{x} \rVert = 1} \lVert T \ket{x} \rVert.
\end{equation}
So \textit{in the norm topology, we have $T_{\alpha} \to T$ if and only if $\lVert T_{\alpha} - T \rVert \to 0.$} For this reason, the norm topology is sometimes called ``the topology of uniform convergence.''

The \textbf{strong topology} or \textbf{strong operator topology} is a seminorm topology induced by a family of seminorms that are labeled by the vectors $\ket{x} \in \H$:
\begin{equation}
	p_{\ket{x}}(T) = \lVert T \ket{x} \rVert.
\end{equation}
So \textit{in the strong topology, we have $T_{\alpha} \to T$ if and only if $\lVert (T - T_{\alpha}) \ket{x} \rVert \to 0$ for every $\ket{x} \in \H$, or equivalently if and only if $T_{\alpha} \ket{x} \to T \ket{x}$ for every $\ket{x} \in \H$.} For this reason, the strong topology is sometimes called ``the topology of pointwise convergence.''

The \textbf{weak topology} or \textbf{weak operator topology} is a seminorm topology induced by a family of seminorms that are labeled by pairs of vectors $\ket{x}, \ket{y} \in \H$:
\begin{equation}
	p_{\ket{x}, \ket{y}}(T) = | \bra{x} T \ket{y}|.
\end{equation}
So \textit{in the weak topology, we have $T_{\alpha} \to T$ if and only if $\bra{x}T_{\alpha} \ket{y} \to \bra{x} T \ket{y}$ for every $\ket{x} \in \H$.} For this reason, the weak topology is sometimes called ``the topology of matrix element convergence.''

One can show fairly easy that if $T_{\alpha} \to T$ in the norm topology, then we have $T_{\alpha} \to T$ in the strong topology.
If $T_{\alpha} \to T$ in the strong topology, then we have $T_{\alpha} \to T$ in the weak topology.
This gives the ordering
\begin{equation}
	\text{norm} \supseteq \text{strong} \supseteq \text{weak}.
\end{equation}
A useful consequence of this ordering is that if a set is closed in the weak operator topology, then it is closed in the strong operator topology.
In the proof of fact \ref{fact:commutant-vN}, it was mentioned that every von Neumann algebra is weakly closed.
Consequently, it is strongly closed; this fact is used in section \ref{sec:relative-dimension}.

\subsection{The algebraic spectral theorem}
\label{app:spectral-theory}

There are many ways of thinking about the spectral theorem for bounded operators.
I will give a brief explanation of the way presented in \cite{douglas1998banach}.
The essential statement of the spectral theorem is given in the main text as fact \ref{fact:spectral-theorem}; the purpose of this appendix is to lay out the essential propositions that must be proved to lead to the spectral theorem, and indicate where proofs can be found.

\begin{definition}
	Let $T \in \B(\H)$ be a bounded operator.
	The \textbf{spectrum} of $T$, denoted $\sigma(T)$, is the set of complex numbers $\lambda \in \comps$ for which $T - \lambda$ (that is, $T$ minus $\lambda$ times the identity operator) is not invertible.  
\end{definition}

\begin{prop}
	If $T$ is Hermitian, then its spectrum is contained in $\reals.$
	If $T$ is positive, then its spectrum is contained in $[0, \infty).$
\end{prop}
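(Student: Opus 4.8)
The plan is to prove the contrapositive in each case: I will show that if $\lambda$ lies outside the claimed set, then $T - \lambda$ is invertible, so that $\lambda \notin \sigma(T)$. The engine for invertibility is a lower bound of the form $\lVert (T - \lambda)\ket{x}\rVert \geq c \lVert \ket x \rVert$ with some constant $c > 0$. An operator obeying such a bound is injective with closed range, since a Cauchy sequence in its image pulls back to a Cauchy sequence in the domain by the bound; and if in addition its adjoint obeys a similar bound, then, using the standard Hilbert-space identity $\im(T - \lambda)^{\perp} = \ker(T-\lambda)^{*}$, the image is also dense, hence all of $\H$. A bijective bounded operator has a bounded inverse, so $T - \lambda$ is invertible and $\lambda$ is excluded from the spectrum.

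For the Hermitian case, write $\lambda = a + ib$ with $b = \Im(\lambda) \neq 0$ and set $S = T - a$, which is again Hermitian. Using fact \ref{fact:adjoint} together with $(S - ib)^{*} = S + ib$, I would compute
\begin{equation}
	\lVert (T - \lambda)\ket{x}\rVert^2 = \bra{x}(S + ib)(S - ib)\ket{x} = \lVert S \ket{x} \rVert^2 + b^2 \lVert \ket{x} \rVert^2 \geq b^2 \lVert \ket{x} \rVert^2,
\end{equation}
so $T - \lambda$ is bounded below by $|b|$; the identical computation for $(T - \lambda)^{*} = T - \bar{\lambda}$ gives the same bound. Hence every $\lambda$ with nonzero imaginary part is excluded, i.e. $\sigma(T) \subseteq \reals$. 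For a positive operator, the hypothesis $\bra{x} T \ket{x} \geq 0$ is in particular real for all $\ket{x}$, so $\bra{x} (T - T^{*}) \ket{x} = 0$ for all $\ket{x}$; polarization on a complex Hilbert space then forces $T = T^{*}$, and the Hermitian case already yields $\sigma(T) \subseteq \reals$. It remains to exclude negative reals: for $\lambda = -c$ with $c > 0$, the operator $T - \lambda = T + c$ satisfies $\bra{x} (T + c) \ket{x} \geq c \lVert \ket{x} \rVert^2$, and Cauchy--Schwarz then gives $\lVert (T + c)\ket{x} \rVert \geq c \lVert \ket{x} \rVert$. Since $T + c$ is Hermitian it is its own adjoint, so this single lower bound drives the invertibility argument, and $\lambda \notin \sigma(T)$. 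Thus $\sigma(T) \subseteq [0, \infty)$.

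The main obstacle is not the algebra of the lower-bound estimates, which is elementary, but the analytic passage from ``bounded below'' to ``invertible in $\B(\H)$.'' Making that rigorous requires the closed-range argument, the identity $\im(A)^{\perp} = \ker(A^{*})$ to convert injectivity of the adjoint into density of the range, and an appeal to the bounded inverse (open mapping) theorem to guarantee that the set-theoretic inverse is itself a bounded operator. These facts are standard in functional analysis, but they are where the genuine content of the proof resides.
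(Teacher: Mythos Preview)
Your proposal is correct and follows essentially the same route as the paper's sketch: prove the contrapositive by showing $T-\lambda$ and its adjoint are bounded below, then invoke the lemma that this forces invertibility. The paper merely cites that lemma and refers to Douglas for details, whereas you have helpfully written out both the explicit lower-bound estimates and the closed-range/open-mapping argument behind the lemma.
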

\begin{proof}[Sketch of proof]
	This amounts to showing that if $T$ is Hermitian and $\lambda$ has nonzero imaginary part, then $T - \lambda$ is invertible, then showing that if $T$ is positive and $\lambda$ is negative, then $T - \lambda$ is invertible.
	The proofs use a lemma that an operator $S$ is invertible if both $S$ and $S^*$ are bounded below.
	See propositions 4.8 and 4.15 of \cite{douglas1998banach}.
\end{proof}

\begin{lemma} \label{lem:neumann-series}
	Any operator $S$ with $\lVert 1 - S \rVert < 1$ is invertible.
\end{lemma}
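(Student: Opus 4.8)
The plan is to construct the inverse explicitly as a \textbf{Neumann series}. Writing $A = 1 - S,$ the hypothesis becomes $\lVert A \rVert < 1$ and $S = 1 - A,$ so the goal is to show that $1 - A$ is invertible whenever $\lVert A \rVert < 1.$ Motivated by the geometric series $(1-a)^{-1} = \sum_{n \geq 0} a^n$ for scalars $|a| < 1,$ I would propose the candidate inverse
\begin{equation}
	B = \sum_{n=0}^{\infty} A^n,
\end{equation}
with the convention $A^0 = 1,$ and then verify that this series converges to a bounded operator and that $B$ is a genuine two-sided inverse of $S.$

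First I would establish convergence. Using the submultiplicativity of the operator norm (the third displayed inequality in the fact following definition \ref{def:bounded}), one gets $\lVert A^n \rVert \leq \lVert A \rVert^n,$ so the partial sums $B_N = \sum_{n=0}^{N} A^n$ satisfy, for $M > N,$
\begin{equation}
	\lVert B_M - B_N \rVert = \left\lVert \sum_{n=N+1}^{M} A^n \right\rVert \leq \sum_{n=N+1}^{M} \lVert A \rVert^n,
\end{equation}
where the triangle inequality is applied. Since $\lVert A \rVert < 1,$ the scalar geometric series $\sum_n \lVert A \rVert^n$ converges, so the right-hand side is the tail of a convergent series and tends to zero; hence $\{B_N\}$ is Cauchy in the operator norm. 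Because $\B(\H)$ is complete in the norm topology (it is a Banach space), the limit $B = \lim_N B_N$ exists as a bounded operator.

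Next I would verify the inverse property by a telescoping computation on the partial sums:
\begin{equation}
	(1 - A) B_N = \sum_{n=0}^{N} A^n - \sum_{n=1}^{N+1} A^n = 1 - A^{N+1},
\end{equation}
and identically $B_N (1 - A) = 1 - A^{N+1}.$ Since $\lVert A^{N+1} \rVert \leq \lVert A \rVert^{N+1} \to 0,$ the operators $A^{N+1}$ converge to $0$ in norm, so passing to the limit (using that left and right multiplication by the fixed operator $1 - A$ are norm-continuous) yields $(1-A) B = 1 = B (1 - A).$ Thus $B = S^{-1}$ and $S$ is invertible. The only genuinely non-elementary ingredient is the \emph{completeness} of $\B(\H)$ in the operator norm, which guarantees that the Cauchy series actually has a limit; everything else is a routine consequence of the norm inequalities already recorded in the text, so I expect the completeness step to be the main conceptual point to flag, even though it is a standard fact.
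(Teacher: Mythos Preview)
Your proof is correct and follows essentially the same approach as the paper: construct the inverse as the Neumann series $S^{-1} = \sum_{n=0}^{\infty} (1-S)^n$, use submultiplicativity of the norm and completeness of $\B(\H)$ to get convergence, and verify the inverse property via the telescoping identity. The paper's sketch is terser but the argument is identical.
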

\begin{proof}[Sketch of proof]
	One proves this lemma by constructing an inverse as the geometric series
	\begin{equation}
		S^{-1} = \sum_{n=0}^{\infty} (1 - S)^{n},
	\end{equation}
	and checking that the series converges in the norm topology to an inverse of $S$.
	The details are worked out in proposition 2.5 of \cite{douglas1998banach}.
\end{proof}

\begin{prop}
	The spectrum of $T$ is contained in the closed ball around $0 \in \comps$ of radius $\lVert T \rVert.$
\end{prop}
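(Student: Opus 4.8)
The plan is to prove the contrapositive containment: every $\lambda \in \comps$ with $|\lambda| > \lVert T \rVert$ fails to lie in $\sigma(T)$, which is exactly the statement that $\sigma(T)$ is contained in the closed ball of radius $\lVert T \rVert$ about the origin. So I would fix such a $\lambda$ and aim to show that $T - \lambda$ is invertible.

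First I would use that $|\lambda| > \lVert T \rVert \geq 0$ forces $\lambda \neq 0$, so I may pull out the scalar and write
\begin{equation}
	T - \lambda = -\lambda \left( 1 - \frac{T}{\lambda} \right).
\end{equation}
Since $-\lambda$ is a nonzero scalar, it is an invertible operator, and a product of operators is invertible exactly when both factors are; hence $T - \lambda$ is invertible if and only if $1 - T/\lambda$ is invertible.

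The key step is then to apply Lemma \ref{lem:neumann-series} with $S = 1 - T/\lambda$. We compute $\lVert 1 - S \rVert = \lVert T/\lambda \rVert = \lVert T \rVert / |\lambda|$, and the hypothesis $|\lambda| > \lVert T \rVert$ gives $\lVert 1 - S \rVert < 1$. The lemma then guarantees $S$ is invertible, so $T - \lambda$ is invertible and $\lambda \notin \sigma(T)$, completing the argument.

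I do not expect any genuine obstacle here: the whole content is the reduction, via factoring out the scalar $\lambda$, to the Neumann series lemma already established as Lemma \ref{lem:neumann-series}. The only point requiring a moment's care is checking that invertibility of $T - \lambda$ is equivalent to invertibility of $1 - T/\lambda$, which is immediate since the two differ by multiplication by the invertible scalar operator $-\lambda$.
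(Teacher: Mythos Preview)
Your proposal is correct and follows essentially the same approach as the paper: both factor $T - \lambda = -\lambda(1 - T/\lambda)$, compute $\lVert 1 - (1 - T/\lambda) \rVert = \lVert T \rVert/|\lambda| < 1$, and invoke Lemma~\ref{lem:neumann-series} to conclude invertibility. Your write-up is slightly more explicit about why invertibility passes through the scalar factor, but the argument is identical in substance.
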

\begin{proof}
	Suppose that $\lambda \in \comps$ has $|\lambda| > \lVert T \rVert.$
	Then we want to show that $T - \lambda$ is invertible.
	To do this, we use the observation
	\begin{equation}
		\left\lVert 1 - \left( 1 - \frac{T}{\lambda} \right) \right\rVert < 1,
	\end{equation}
	which implies by lemma \ref{lem:neumann-series} that $1 - T/\lambda$ is invertible.
	Hence $\lambda - T$ is invertible, and so is $T - \lambda.$
\end{proof}

\begin{prop}
	The spectrum of $T$ is compact and nonempty.
\end{prop}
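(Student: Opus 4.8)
The plan is to treat compactness and nonemptiness separately, since they require genuinely different tools.

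For compactness, the previous proposition already shows $\sigma(T)$ is contained in the closed ball of radius $\lVert T \rVert$, so $\sigma(T)$ is bounded and by Heine--Borel it suffices to prove it is closed. Equivalently, I would show the resolvent set $\comps \setminus \sigma(T)$ is open. Fix a point $\lambda_0$ at which $T - \lambda_0$ is invertible, and for nearby $\lambda$ write
\[
	T - \lambda = (T - \lambda_0) - (\lambda - \lambda_0) = (T - \lambda_0)\bigl[1 - (\lambda - \lambda_0)(T - \lambda_0)^{-1}\bigr].
\]
Whenever $|\lambda - \lambda_0| \, \lVert (T - \lambda_0)^{-1} \rVert < 1$, the bracketed operator differs from the identity by something of norm less than $1$, so lemma \ref{lem:neumann-series} guarantees it is invertible; as a product of two invertible operators, $T - \lambda$ is then invertible. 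Thus every point of the resolvent set has an open neighborhood inside it, so $\sigma(T)$ is closed and therefore compact.

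For nonemptiness, I would argue by contradiction, assuming $\sigma(T) = \emptyset$ so that the resolvent $R(\lambda) = (T - \lambda)^{-1}$ is defined for every $\lambda \in \comps$. The same factorization shows $R$ is norm-analytic: near any $\lambda_0$ it expands as the convergent power series $R(\lambda) = \sum_{n \geq 0} (\lambda - \lambda_0)^n (T - \lambda_0)^{-(n+1)}$. Moreover, for $|\lambda| > \lVert T \rVert$ the Neumann series gives $R(\lambda) = -\lambda^{-1} \sum_{n \geq 0} (T/\lambda)^n$, hence the estimate $\lVert R(\lambda) \rVert \leq (|\lambda| - \lVert T \rVert)^{-1}$, so $\lVert R(\lambda) \rVert \to 0$ as $|\lambda| \to \infty$. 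The key step is to pass to a scalar statement: for any bounded linear functional $\phi$ on $\B(\H)$, the function $f(\lambda) = \phi(R(\lambda))$ is entire and tends to zero at infinity, so Liouville's theorem forces $f \equiv 0$. Since this holds for every such $\phi$, the Hahn--Banach theorem yields $R(\lambda) = 0$ for all $\lambda$, contradicting the fact that an inverse cannot be the zero operator. Hence $\sigma(T)$ is nonempty.

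The main obstacle is the nonemptiness half: there is no purely algebraic route, and one is forced to recognize $R(\lambda)$ as an honest analytic operator-valued function and then import Liouville's theorem through the Hahn--Banach duality between $\B(\H)$ and its dual. The compactness half, by contrast, is a routine consequence of lemma \ref{lem:neumann-series} together with the boundedness already established.
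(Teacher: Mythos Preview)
Your argument is correct and follows essentially the same route as the paper: openness of the resolvent set via lemma \ref{lem:neumann-series} for compactness, and a Liouville-type argument on the resolvent for nonemptiness. Your treatment of the nonemptiness half is in fact more careful than the paper's sketch, since you correctly pass to scalar functions $\phi(R(\lambda))$ via bounded functionals and invoke Hahn--Banach, whereas the paper loosely speaks of $\lVert (T-\lambda)^{-1}\rVert$ being ``entire'' and defers the honest version to \cite{douglas1998banach}.
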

\begin{proof}[Sketch of proof]
	In light of the preceding proposition, which shows that $\sigma(T)$ is bounded, showing that it is compact requires only showing that it is topologically closed.
	To do this, we show that its complement --- the set of all $\lambda \in \comps$ such that $T - \lambda$ is invertible --- is open.
	This set is the preimage of the space $\operatorname{GL}(\H)$ of invertible elements in $\B(\H)$ under the map $\lambda \mapsto T - \lambda.$
	So we can prove compactness of $\sigma(T)$ by proving that $\operatorname{GL}(\H)$ is open in the norm topology, and proving that the map $\lambda \mapsto T - \lambda$ is norm-continuous.
	Neither of these is hard to do; continuity of the map is straightforward, and openness of $\operatorname{GL}(\H)$ follows from simple manipulations starting with lemma \ref{lem:neumann-series}.
	For details, consult proposition 2.7 of \cite{douglas1998banach}.
	
	The proof that $\sigma(T)$ is nonempty is very similar to the complex-analytic proof that every polynomial has a root.
	The idea is that if $\sigma(T)$ is empty, then one can show that the map $\lambda \mapsto \lVert (T - \lambda)^{-1}\rVert $ is a bounded, entire function of the complex plane.
	Appealing to Liouville's theorem then tells us that it must be constant, and since it limits to zero for large $|\lambda|$, this implies $\lVert (T - \lambda)^{-1} \rVert = 0$ and hence $(T - \lambda)^{-1} = 0$ for all $\lambda,$ which is a contradiction since zero is not the inverse of any operator.
	Some care must be taken in filling out the details of this proof to establish that the map is bounded and entire; for details, see theorem 2.29 of \cite{douglas1998banach}. 
\end{proof}

\begin{definition}
	A \textbf{$C^*$-subalgebra} of $\B(\H)$ is a $*$-subalgebra with identity (see definition \ref{def:*-subalgebra}) $\C \subseteq \B(\H)$ that is closed in the norm topology (see appendix \ref{app:operator-topologies}).
\end{definition}

\begin{remark}
	Note that every von Neumann algebra is also a $C^*$ algebra, since von Neumann algebras are weakly closed (see the proof of fact \ref{fact:commutant-vN}) and weakly closed sets are norm-closed (see section \ref{app:operator-topologies}).
\end{remark}

\begin{theorem} [$C^*$ spectral theorem] \label{thm:C*-spectral}
	Let $T$ be a normal operator in $\B(\H),$ and let $\C(T)$ be the minimal $C^*$ algebra containing $T$ (i.e., the norm-closure of all polynomials in $T$ and $T^*$).
	Let $C(\sigma(T))$ be the space of continuous functions from $\sigma(T)$ to $\comps$.
	Then the map
	\begin{align}
		\Gamma : & \, \C(T) \to C(\sigma(T)) \nonumber \\
						& \, \sum_{j, k} c_{j k} T^j (T^*)^k \mapsto \sum_{j, k} c_{j k} z^j \bar{z}^k
	\end{align}
	is an isomorphism, and takes the operator norm on $\C(T)$ to the supremum norm $\lVert f \rVert = \sup_{z \in \sigma(T)} |f(z)|$ on $C(\sigma(T)).$
\end{theorem}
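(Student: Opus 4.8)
The plan is to reduce the entire statement to a single norm identity and then invoke the Stone--Weierstrass theorem. Concretely, the heart of the matter is to show that for every two-variable polynomial $p$,
\begin{equation}
	\lVert p(T, T^*) \rVert = \sup_{z \in \sigma(T)} |p(z, \bar z)|.
\end{equation}
Granting this, the theorem follows quickly. The identity shows that the assignment $p(T,T^*) \mapsto p(z,\bar z)|_{\sigma(T)}$ is well defined (if two polynomial expressions give the same operator, the left-hand side vanishes, forcing the two functions to agree on $\sigma(T)$) and isometric. Since $\C(T)$ is by definition the norm-closure of the polynomial operators, $\Gamma$ extends by continuity to an isometry on all of $\C(T)$; being isometric it is injective, and because multiplication and the adjoint are norm-continuous the extension is still a $*$-homomorphism. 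Its image contains every polynomial in $z$ and $\bar z$; these separate points of $\sigma(T)$, contain the constants, and are closed under complex conjugation, so by Stone--Weierstrass they are dense in $C(\sigma(T))$. An isometry has complete, hence closed, image, so the image is all of $C(\sigma(T))$ and $\Gamma$ is onto. This makes $\Gamma$ the desired isometric $*$-isomorphism.

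To prove the norm identity I would first note that $S := p(T,T^*)$ is normal, since $T$ being normal makes the algebra generated by $T$ and $T^*$ commutative. Two standard facts then finish the job. The first is that a normal operator satisfies $\lVert S \rVert = r(S)$, where $r(S) = \sup\{|\mu| : \mu \in \sigma(S)\}$ is the spectral radius. This follows from the $C^*$-identity $\lVert S^* S \rVert = \lVert S \rVert^2$ (which holds in $\B(\H)$) together with normality: one shows $\lVert S^n \rVert = \lVert S \rVert^n$ and appeals to the Gelfand spectral-radius formula $r(S) = \lim_n \lVert S^n \rVert^{1/n}$. The second is the spectral mapping theorem for normal operators, $\sigma(p(T,T^*)) = \{p(\lambda,\bar\lambda) : \lambda \in \sigma(T)\}$. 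Combining the two gives $\lVert S \rVert = r(S) = \sup_{\lambda \in \sigma(T)} |p(\lambda,\bar\lambda)|$, which is exactly the identity.

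The main obstacle is the spectral mapping theorem in the form stated. For polynomials in $T$ alone it is elementary---one factors $p(z) - \mu$ over $\comps$ and uses that $p(T) - \mu$ is invertible precisely when each linear factor is---but once $T^*$ appears this factorization is unavailable, and the honest route to the result is through the commutative Gelfand--Naimark theorem: the character space $\Omega$ of the commutative $C^*$-algebra $\C(T)$ is homeomorphic to $\sigma(T)$ via $\phi \mapsto \phi(T)$, and under this identification the abstract Gelfand transform is exactly $\Gamma$. In fact one could organize the whole proof around this isomorphism and read off both the spectral mapping statement and the isometry at once; I find the norm-identity packaging above more transparent, but the nontrivial analytic input---that $\sigma(T)$ is the correct index set for the functional calculus---is the same either way. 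A careful treatment is given in chapter 4 of \cite{douglas1998banach}.
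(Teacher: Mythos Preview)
Your proposal is correct, and you are candid about where the real content lies: the spectral mapping identity $\sigma(p(T,T^*)) = \{p(\lambda,\bar\lambda) : \lambda \in \sigma(T)\}$ is not available without the Gelfand machinery, and once you have the identification of the character space of $\C(T)$ with $\sigma(T)$ via $\phi \mapsto \phi(T)$, you have essentially proved the theorem already. The paper's sketch takes exactly that direct route: it builds the Gelfand transform $\Gamma(S)(\phi) = \phi(S)$ on the maximal ideal space $\M_{\C(T)}$, shows this is an isometric $*$-isomorphism onto $C(\M_{\C(T)})$ using the maximal-ideal/character correspondence and Stone--Weierstrass, and only at the end identifies $\M_{\C(T)}$ with $\sigma(T)$. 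Your organization front-loads the elementary consequences (well-definedness, isometry, Stone--Weierstrass surjectivity) and defers the hard step to a named lemma, which is pedagogically cleaner; the paper's organization makes the Gelfand transform the protagonist from the start, which is more honest about where the work is being done. As you yourself note, the nontrivial analytic input is identical either way, and both accounts ultimately point the reader to chapter 4 of \cite{douglas1998banach}.
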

\begin{proof}[Sketch of proof]
	The proof of this theorem is very nontrivial, and I will only sketch its major steps.
	First, one constructs the space $\M_{\C(T)}$ of nontrivial homomorphisms $\phi : \C(T) \to \comps,$ and endows this space with a topology such that we have $\phi_{\alpha} \to \phi$ if and only if we have $\phi_{\alpha}(S) \to \phi(S)$ for all $S \in \C(T).$
	One then constructs a map $\Gamma$ from $\C(T)$ to continuous functions on $\M_{\C(T)}$ via
	\begin{align}
		[\Gamma(S)](\phi) = \phi(S).
	\end{align}
	A brilliant lemma is used to show that $\M_{\C(T)}$ is in bijection with the space of \textit{maximal ideals} in $\C(T)$; this step uses normality of $T$, since the map is only a bijection if $\C(T)$ is abelian.
	The identification of $\M_{\C(T)}$ with the maximal ideals of $\C(T)$ is then used to show that the map $\Gamma$ preserves invertibility; that is, $S \in \C(T)$ is invertible if and only if the map $\Gamma(S)$ is nonzero on all of $\M_{\C(T)}.$
	Since invertibility is what is used to determine whether or not a certain number $\lambda$ is in the spectrum of an operator, this lets us ask questions about spectra in $\C(T)$ by asking about the values attained by certain continuous functions on $\M_{\C(T)}.$
	By studying the consequences of this identification for polynomials in $T$ and $T^*$, and extending those observations to all of $\C(T)$ using the Stone-Weierstrass theorem, one can show that $\Gamma$ is an algebraic isomorphism, and furthermore that it takes the operator norm on $\C(T)$ to the supremum norm:
	\begin{equation}
		\sup_{\phi \in \M_{\C(T)}} |[\Gamma(S)](\phi)| = \lVert S \rVert.
	\end{equation}
	The final step is to show that $\M_{\C(T)}$ is homeomorphic to $\sigma(T)$ under the map $\phi \mapsto \phi(T)$, which completes the proof.
	
	As is apparent from the sketch given above, many lemmas and propositions must be stacked atop one another to complete the proof of this theorem.
	The final theorem can be found as theorem 4.30 of \cite{douglas1998banach}, with appeals to material from chapter 2 of that same book.
\end{proof}

\begin{remark}
	Why should the theorem we have just stated be thought of as a version of the spectral theorem?
	In finite dimensions, the real reason we \textit{want} a spectral theorem is to be able to apply functions to operators.
	If an operator $T$ is positive, we \textit{define} $\sqrt{T}$ by diagonalizing $T$ and taking the square root along its diagonal.
	The theorem above gives us a way to do this so long as the function we want to apply is continuous on the spectrum of the operator in question.
	It tells us that for any normal operator $T \in \B(\H)$, and any continuous function $f : \sigma(T) \to \comps,$ there exists some operator $f(T) \in \C(T)$ that has all the same algebraic properties as the function $f$!
	This is extremely useful.
	It can be used, for example, to show that every positive operator $T$ has a unique positive square root, and furthermore that the square root is contained in the $C^*$ algebra generated by $T$.
	For more details, see e.g. proposition 4.33 of \cite{douglas1998banach}.
	
	We are not done yet, though; we often want to find operators corresponding to discontinuous functions on the spectrum of $T$.
	For example, if $\omega \subseteq \sigma(T)$ is an open set, and $\chi_{\omega} : \sigma(T) \to \{0, 1\}$ is the indicator function for that set, we expect to be able to define the operator $\chi_{\omega}(T)$ as a ``projection onto that subset of the spectrum.''
	Theorem \ref{thm:C*-spectral} doesn't give us operators of this form, and in fact \textit{it forbids those operators from being contained in $\C(T)$}, since the map $\Gamma : \C(T) \to C(\sigma(T))$ is an isomorphism.
	
	What we really want is to enlarge the algebra $\C(T)$ so that it contains operators corresponding to \textit{bounded} functions of $\sigma(T)$, not just continuous functions.
	It will turn out that the right enlargement is to take the von Neumann algebra generated by $T$, which is generally larger than $\C(T)$ and is in fact isomorphic to the space of bounded functions on $\sigma(T)$.
\end{remark}

\begin{background}[Basic Lebesgue theory]
	This definition lays out the basic tools of Lebesgue theory.
	Consult chapter 1 of \cite{rudin1974real} for more detail.
	
	A space $X$ is said to be \textbf{measurable} if it is equipped with a family $\Sigma$ of subsets of $X$ that contains $X$, is closed under complements, and is closed under countable unions.
	The sets in $\Sigma$ are called the \textbf{measurable sets}.
	
	A map $f : X \to Y$ where $X$ is measurable and $Y$ is a topological space is said to be a \textbf{measurable function} if the preimages of open sets are measurable.
	
	A \textbf{measure} $\mu$ on a space $X$ is a map $\mu : \Sigma \to [0, \infty]$ that assigns $\mu(\varnothing) = 0$ for the empty set, and that is countably additive on disjoint measurable sets.
	That is, if $\{E_j\}$ is a sequence of pairwise-disjoint measurable sets, then we have $\mu(\cup_j E_j) = \sum_j \mu(E_j).$
	
	A \textbf{simple function} on $X$ is a measurable function $s : X \to [0, \infty]$ that takes only finitely many  values in its image.
	I.e., it is a positive linear combination of indicator functions for measurable sets.
	
	The \textbf{Lebesgue integral} of a simple function $s = \sum_{j} p_j \chi_{\omega_j}$ is defined by
	\begin{equation}
		\int_{X} d\mu\, s
			\equiv \sum_j p_j \mu(\omega_j).
	\end{equation}

	The \textbf{Lebesgue integral} of a measurable function $f : X \to [0, \infty]$ is the supremum over the integrals of all simple functions it dominates:
	\begin{equation}
		\int_{X} d\mu\, f
			\equiv \sup\left\{ \int_{X} d\mu\, s \text{ such that } s \leq f \right\}.
	\end{equation}
	
	The space $L^2(X, \mu)$ is the set of all functions $f : X \to \comps$ for which $|f|^2$ has finite Lebesgue integral, quotiented by all functions that vanish away from a set of measure zero.
	It is a Hilbert space with respect to the inner product
	\begin{equation}
		\braket{f}{g}
			\equiv \int_{X} d\mu\, \bar{f} g.
	\end{equation}

	The space $L^{\infty}(X, \mu)$ is the set of all bounded functions $f : X \to \comps$, quotiented by all functions that vanish away from a set of measure zero.
	It is a Banach space with respect to the norm
	\begin{equation}
		\lVert f \rVert_{\infty}
			= \inf\{\text{$\alpha \geq 0$ such that $|f(x)| > \alpha$ occurs only on a set of measure zero}\}.
	\end{equation}

	For every $\phi \in L^{\infty}(X, \mu),$ there is a \textbf{multiplication operator} $M_\phi$ acting on $L^2(X, \mu)$ as $M_{\phi} \ket{f} = \ket{\phi f}.$
	The operator norm of $M_{\phi}$ is the $\infty$-norm of $\phi,$ i.e., we have $\lVert M_{\phi} \rVert = \lVert \phi \rVert_{\infty}.$
	The space of multiplication operators is an abelian von Neumann subalgebra of $\B(L^2(X, \mu)).$
\end{background}

\begin{theorem}[Full spectral theorem]
	Let $\H$ be a separable Hilbert space, and $T \in \B(\H)$ a normal operator.
	Let $\C(T)$ be the $C^*$ algebra generated by $T$ and $\A(T)$ the von Neumann algebra generated by $T$.
	Then there exists a measure $\mu$ on the spectrum $\sigma(T)$ and a map $\hat{\Gamma} : \A(T) \to \B(L^2(\sigma(T), \mu))$ satisfying the following conditions:
	\begin{enumerate}[(i)]
		\item the image of $\hat{\Gamma}$ is exactly the space $L^{\infty}(\sigma(T), \mu)$ of multiplication operators on $L^2(\sigma(T), \mu).$
		$\hat{\Gamma}$ is an isomorphism with respect to the algebraic structure and the weak operator topology on both spaces.
		\item The measure $\mu$ is a Borel measure, meaning the set of measurable subsets is generated by the open subsets of $\sigma(T).$
		This implies that continuous functions from $\sigma(T)$ to $\comps$ are measurable.
		\item Since continuous functions on a compact set are bounded, this implies that every continuous function on $\sigma(T)$ is contained in $L^{\infty}(\sigma(T), \mu).$ The map $\hat{\Gamma}$ extends the one given in theorem \ref{thm:C*-spectral} in that for any continuous $f \in L^{\infty}(\sigma(T), \mu),$ we have $f(T) = \hat{\Gamma}^{-1}(M_f).$
		
		\item Every open subset of $\sigma(T)$ has nonzero measure.
	\end{enumerate}
	Consequently, for any bounded Borel-measurable function $f : \sigma(T) \to \comps,$ we can \textbf{define} $f(T)$ by $\hat{\Gamma}^{-1}(M_f).$
\end{theorem}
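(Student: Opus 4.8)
The plan is to bootstrap from the continuous functional calculus of theorem \ref{thm:C*-spectral} --- which already gives an isometric $*$-isomorphism $\Gamma : \C(T) \to C(\sigma(T))$ --- up to a representation of the full von Neumann algebra $\A(T)$ as multiplication operators. The bridge between the two is a measure produced by the Riesz--Markov representation theorem, applied to a positive linear functional built from a single well-chosen vector. First I would reduce to the case where $\A(T)$ has a cyclic vector $\ket{\Omega}$; this is the step the theorem statement quietly assumes (see the footnote in section \ref{subsec:factor-decomposition} about ``several copies of $\sigma(T)$''), and I return to it below.

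Granting a cyclic vector, I would define the functional $\ell(f) = \bra{\Omega} f(T) \ket{\Omega}$ on $C(\sigma(T))$, where $f(T) = \Gamma^{-1}(f)$. This is positive: if $f \geq 0$ then $g = \sqrt{f}$ is continuous and real, so $f(T) = g(T)^* g(T) \geq 0$ by the multiplicativity and adjoint properties of $\Gamma$, whence $\ell(f) = \lVert g(T)\ket{\Omega} \rVert^2 \geq 0$. Riesz--Markov then furnishes a unique regular Borel measure $\mu$ on $\sigma(T)$ with $\ell(f) = \int_{\sigma(T)} f \, d\mu$; regularity gives property (ii) for free. Property (iv) follows from faithfulness: since $\A(T)$ is abelian, a cyclic vector is automatically separating, so $g(T)\ket{\Omega} = 0$ forces $g(T) = 0$ and hence $g = 0$. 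If some nonempty open $\omega$ had $\mu(\omega) = 0$, a nonzero continuous $f \geq 0$ supported in $\omega$ (Urysohn, using that $\sigma(T)$ is compact metric) would satisfy $0 < \int f \, d\mu \leq \lVert f \rVert_{\infty}\, \mu(\omega) = 0$, a contradiction.

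Next I would build the unitary $W : \H \to L^2(\sigma(T), \mu)$ by setting $W(f(T)\ket{\Omega}) = f$ for continuous $f$. The computation $\lVert f(T)\ket{\Omega} \rVert^2 = \ell(\bar{f} f) = \int |f|^2 \, d\mu = \lVert f \rVert_{L^2}^2$ shows $W$ is isometric; it extends to a unitary because its domain is dense (cyclicity of $\ket{\Omega}$) and its range is dense ($C(\sigma(T))$ is dense in $L^2(\sigma(T),\mu)$). A one-line check on the dense domain, $W\, h(T) f(T)\ket{\Omega} = W (hf)(T)\ket{\Omega} = hf = M_h f = M_h\, W f(T)\ket{\Omega}$, gives $W\, h(T)\, W^{-1} = M_h$ for every continuous $h$. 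Defining $\hat{\Gamma}(S) = W S W^{-1}$, this already secures property (iii). Since $\A(T)$ is the weak-operator closure of $\C(T)$ and conjugation by the fixed unitary $W$ is a homeomorphism in the weak operator topology, $\hat{\Gamma}(\A(T))$ is the weak closure of $\{M_h : h \in C(\sigma(T))\}$; identifying this closure with the full algebra $\{M_{\phi} : \phi \in L^{\infty}(\sigma(T),\mu)\}$ --- which is a von Neumann algebra by the Background box --- delivers property (i). The concluding definition is then immediate: for any bounded Borel $f$, the operator $M_f$ lies in $L^{\infty}(\sigma(T),\mu)$, hence in the image of $\hat{\Gamma}$, so $f(T) \equiv \hat{\Gamma}^{-1}(M_f)$ is a well-defined element of $\A(T)$, and the properties listed in fact \ref{fact:spectral-theorem} descend from the corresponding identities for multiplication operators.

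The two places I expect real work are the cyclic-vector reduction and the weak-closure identification. For the former, a general normal $T$ need not have a cyclic vector, so one must use separability of $\H$ to decompose it into a countable orthogonal direct sum of $\A(T)$-cyclic subspaces (by an exhaustion or Zorn argument), run the construction on each summand, and reassemble; this honestly produces an $L^2$ space over a disjoint union of copies of $\sigma(T)$ carrying possibly different measures, which is exactly the subtlety the theorem sidesteps. For the latter, proving that the weak closure of the continuous multiplication operators is all of $L^{\infty}(\sigma(T),\mu)$ --- together with the claim that $\hat{\Gamma}$ is a homeomorphism onto its image in the weak topologies --- requires the standard density of $C(\sigma(T))$ in $L^{\infty}(\sigma(T),\mu)$ in the appropriate sense and careful bookkeeping of the measure-theoretic ``up to measure zero'' identifications. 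Everything else is routine given theorem \ref{thm:C*-spectral} and the Lebesgue-theoretic facts collected in the Background box.
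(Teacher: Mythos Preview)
Your proposal is correct and follows essentially the same route as the paper's own sketch: build the measure via Riesz--Markov applied to $f \mapsto \bra{\Omega} f(T) \ket{\Omega}$ for a cyclic vector, construct the unitary $W$ by $f(T)\ket{\Omega} \mapsto f$, set $\hat{\Gamma}(S) = W S W^{-1}$, and then identify $\hat{\Gamma}(\A(T))$ with $L^{\infty}(\sigma(T),\mu)$ by passing to weak closures. You have also correctly flagged the two genuine technical obstacles --- the cyclic-vector reduction and the weak-closure identification --- which are exactly the points the paper singles out as requiring the real work and defers to \cite{douglas1998banach}.
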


\begin{proof}[Sketch of proof]
	The proof of this theorem is even more complicated than the proof of theorem \ref{thm:C*-spectral}.
	The workhorse tool is a fundamental theorem in measure theory called the \textit{Riesz-Markov theorem},\footnote{See e.g. chapter 2 of \cite{rudin1974real}} which says that for any linear functional $\Lambda : C(\sigma(T)) \to \comps,$ there exists a measure $\mu$ on $\sigma(T)$ satisfying $\Lambda(f) = \int_{\sigma(T)} d\mu\, f$.
	This means that for any linear functional on $C(\sigma(T)),$ there exists a Borel measure on $\sigma(T)$ that makes this linear functional equivalent to integration.
	
	The basic idea is to construct a vector $\ket{\Omega} \in \H$ that is \textit{cyclic} for $\C(T)$, meaning that the closure of $\C(T) \ket{\Omega}$ is the full Hilbert space $\H$.
	One then constructs a linear functional $\Lambda : C(\sigma(T)) \to \comps$ by
	\begin{equation}
		\Lambda(f) = \bra{\Omega} f(T) \ket{\Omega},
	\end{equation}
	and finds a corresponding Riesz-Markov measure $\mu$ on $\sigma(T)$.
	Using theorem \ref{thm:C*-spectral}, you can then construct a map from $\C(T) \ket{\Omega}$ to $C(\sigma(T))$ by
	\begin{equation}
		U (f(T) \ket{\Omega}) = f,
	\end{equation}
	and show that this is an isometry with respect to the $L^2$ norm on $C(\sigma(T))$ induced by the measure $\mu$.
	Taking closures in both the domain and the range, one produces a unitary operator $U$ from $\bar{\C(T) \ket{\Omega}}$ to $L^2(\sigma(T), \mu),$ and sets $\hat{\Gamma}(S) = U S U^*.$
	
	Various topological arguments --- using lemmas that $\C(T)$ is weakly closed in $\A(T)$ and $C(\sigma(T))$ is weakly closed in $L^{\infty}(\sigma(T), \mu)$ --- then prove the theorem.
	The main subtlety not addressed here is that $\A(T)$ is not guaranteed to have a cyclic vector.
	To get around this, one finds a vector $\ket{\Omega}$ with sufficiently nice properties that the above argument can be applied to the subspace $\bar{\C(T) \ket{\Omega}}$ and then extended to the full Hilbert space. d
	
	The details of this proof are very involved, and can be found in chapter 4 of \cite{douglas1998banach}.
\end{proof}

\begin{remark}
	As a simple corollary, let us prove a statement used in section \ref{subsec:trace-construction}, which states that every positive operator $\rho$ is the supremum over all positive linear combinations of projections that satisfy
	\begin{equation}
		\sum_j p_j P_j \leq \rho.
	\end{equation}
	We certainly have inequality, that is,
	\begin{equation}
		\rho \geq \sup\left\{ \sum_j p_j P_j \text{ such that } \sum_j p_j P_j \leq \rho \right\}.
	\end{equation}
	Equality follows from the measure theory fact that every $L^{\infty}$ function can be written as the supremum over all simple functions that it dominates.
	Using the spectral theorem, $\rho$ can be treated as an $L^{\infty}$ function, and the simple functions that it dominates are positive linear combinations of its spectral projectors.
	This completes the proof.
\end{remark}

\section{Properties of traces}
\label{app:trace-math}

Recall the definition of a trace on a von Neumann algebra given in definition \ref{def:trace}.
\begin{definition} \label{def:trace-appendix}
	A \textbf{trace} on the von Neumann algebra $\A$ is a map $\tau : \A_+\to [0, \infty]$ satisfying:
	\begin{enumerate}[(i)]
		\item $\tau(\lambda T) = \lambda \tau(T)$ for all $T \in \A_+$ and all $\lambda \geq 0.$
		\item $\tau(T + S) = \tau(T) + \tau(S)$ for all $T, S \in \A_+$.
		\item $\tau(U T U^*) = \tau(T)$ for all $T \in \A_+$ and all unitary $U \in \A$.
	\end{enumerate}
\end{definition}

The purpose of this appendix is to prove various claims about trace that were made in that section.
We will prove:
\begin{itemize}
	\item Given properties (i) and (ii), property (iii) is equivalent to the property that for any $O \in \A$, we have $\tau(O O^*) = \tau(O^* O).$
	\item The set $\A_1$ of operators $T \in \A$ for which $\tau(|T|)$ is finite forms an ideal, the \textbf{trace-class operators}.
	\item $\tau$ extends uniquely to a linear functional $\tau : \A_1 \to \comps.$
	\item The extended trace is cyclic on $\A_1$: for $T \in \A_1$ and $S \in \A$, we have $\tau(T S) = \tau(S T).$
\end{itemize}

To prove these facts, we will first work through related facts about the Hilbert-Schmidt operators $\A_2,$ after which these properties will be easy to prove.

\begin{definition}
	An operator $X \in \A$ is said to be \textbf{Hilbert-Schmidt} with respect to $\tau$ if $\tau(X^* X)$ is finite.
	We denote the space of Hilbert-Schmidt operators by $\A_2.$
\end{definition}

\begin{lemma} \label{lem:trace-monotonic}
	For $T, S \in \A_+$ with $T \geq S,$ we have $\tau(T) \geq \tau(S).$
\end{lemma}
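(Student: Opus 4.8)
The plan is to reduce everything to the additivity property (ii) of the trace in definition \ref{def:trace-appendix}. The inequality $T \geq S$ for positive operators means precisely that $T - S$ is a positive operator, i.e., $\bra{\psi}(T - S)\ket{\psi} \geq 0$ for all $\ket{\psi} \in \H$. Since $\A$ is a von Neumann algebra and hence closed under subtraction, and since $T - S$ is positive, we have $T - S \in \A_+$. This is the only nontrivial verification, and it is immediate from the definitions.

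Once we know $T - S \in \A_+$, the proof is a one-line computation. We write the decomposition
\begin{equation}
	T = S + (T - S),
\end{equation}
where both $S$ and $T - S$ lie in $\A_+$, so that additivity (property (ii)) applies to give
\begin{equation}
	\tau(T) = \tau(S) + \tau(T - S).
\end{equation}
Because $\tau$ takes values in $[0, \infty]$, the term $\tau(T - S)$ is nonnegative, and therefore $\tau(T) \geq \tau(S)$, as claimed.

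There is no real obstacle here; the only subtlety worth flagging is purely notational. Since $\tau$ can take the value $+\infty$, the final inequality should be read as an inequality of extended reals in $[0, \infty]$, and the step ``$\tau(T-S) \geq 0$ implies $\tau(T) \geq \tau(S)$'' uses that addition and the order on $[0, \infty]$ behave as expected even when one or both summands are infinite. No further machinery---cyclicity, the comparison theorem, or the spectral theorem---is needed, so I would keep this proof to essentially the two displayed equations above together with the sentence verifying $T - S \in \A_+$.
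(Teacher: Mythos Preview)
Your proof is correct and takes essentially the same approach as the paper: write $T = S + (T-S)$, apply additivity, and use that $\tau(T-S) \geq 0$. The paper's proof is a single line stating exactly this; your version just spells out the verification that $T-S \in \A_+$ and the extended-real subtlety more explicitly.
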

\begin{proof}
	We have $\tau(T) = \tau(S) + \tau(T - S) \geq \tau(S).$
\end{proof}

\begin{prop} \label{prop:hilbert-schmidt-ideal}
	$\A_2$ is a two-sided ideal in $\A$.
\end{prop}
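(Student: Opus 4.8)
The plan is to verify the three defining features of a two-sided ideal: that $\A_2$ is closed under scalar multiplication, closed under addition, and absorbing under multiplication by arbitrary elements of $\A$ on either side. Scalar closure is immediate, since $(\lambda X)^*(\lambda X) = |\lambda|^2 X^* X$ gives $\tau((\lambda X)^*(\lambda X)) = |\lambda|^2 \tau(X^* X)$ by property (i) of definition \ref{def:trace}. For additive closure, I would exploit the operator identity
\begin{equation}
	(X+Y)^*(X+Y) + (X-Y)^*(X-Y) = 2 X^* X + 2 Y^* Y,
\end{equation}
which, since $(X-Y)^*(X-Y) \geq 0$, yields $(X+Y)^*(X+Y) \leq 2 X^* X + 2 Y^* Y$. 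Applying monotonicity (lemma \ref{lem:trace-monotonic}) together with properties (i) and (ii) then gives $\tau((X+Y)^*(X+Y)) \leq 2 \tau(X^* X) + 2 \tau(Y^* Y) < \infty$, so $X + Y \in \A_2.$ At this point $\A_2$ is a linear subspace of $\A.$

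For the absorbing property, the cleanest route avoids any appeal to the cyclicity-type identity $\tau(O O^*) = \tau(O^* O)$, which has not yet been established at this point in the appendix. The key observation is that the Hilbert-Schmidt condition is invariant under multiplication by a \emph{unitary} on either side. Indeed, if $U \in \A$ is unitary, then $(U X)^*(U X) = X^* U^* U X = X^* X$, so $\tau((U X)^*(U X)) = \tau(X^* X) < \infty$; and $(X U)^*(X U) = U^* X^* X U = U^*(X^* X) U$, whose trace equals $\tau(X^* X)$ directly by the unitary invariance in property (iii) of definition \ref{def:trace}. Hence $U X$ and $X U$ both lie in $\A_2$ whenever $X$ does.

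Finally, I would promote this unitary statement to arbitrary $S \in \A$ using fact \ref{fact:unitaries}, which lets us write $S = \sum_{k=1}^{4} c_k U_k$ as a linear combination of four unitaries in $\A.$ Then $S X = \sum_k c_k (U_k X)$ and $X S = \sum_k c_k (X U_k)$ are finite linear combinations of elements of $\A_2$, and since we have already shown $\A_2$ to be a subspace, both $S X$ and $X S$ lie in $\A_2.$ This establishes that $\A_2$ is a two-sided ideal.

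The step I expect to require the most care is the absorbing property, and specifically the decision to handle right-multiplication through unitary invariance rather than through an operator-norm bound. For left-multiplication one could alternatively argue directly from $X^* S^* S X \leq \lVert S \rVert^2 X^* X$ and monotonicity, but the analogous bound for $X S$ would control $S^* X^* X S$, which is not obviously dominated by a multiple of $X^* X.$ Routing both sides through the unitary decomposition of fact \ref{fact:unitaries} sidesteps this asymmetry and keeps the argument self-contained relative to what has been proven so far.
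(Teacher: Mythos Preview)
Your proof is correct and follows essentially the same route as the paper: additive closure via $(X+Y)^*(X+Y) \leq 2X^*X + 2Y^*Y$ and monotonicity, then the ideal property by checking unitary multiplication on each side (using property (iii) for the right side) and invoking fact \ref{fact:unitaries} to pass to general $S \in \A$. Your closing remark about why the unitary decomposition is preferable to a norm bound for right-multiplication is exactly the point.
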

\begin{proof}
	Fix $X, Y \in \A_2.$ We have
	\begin{align}
		(X + Y)^* (X + Y)
			& = X^* X + Y^* Y + X^*Y + Y^* X \nonumber \\
			& = 2 X^* X + 2 Y^* Y - (X - Y)^* (X - Y) \nonumber \\
			& \leq 2 X^* X + 2 Y^* Y.
	\end{align}
	So by lemma \ref{lem:trace-monotonic}, we have $\tau((X+Y)^* (X+Y)) \leq 2 \tau(X^* X) + 2 \tau(Y^* Y) < \infty$; it follows that $\A_2$ is closed under addition.
	It is easy to see that $\A_2$ is closed under scalar multiplication.
	So $\A_2$ is a vector space.
	
	To see that $\A_2$ is an ideal, fix $X \in \A_2.$ For any unitary $U \in \A$, we have
	\begin{equation}
		\tau((U X)^* (U X)) = \tau(X^* U^* U X) = \tau(X^* X) < \infty
	\end{equation}
	and
	\begin{equation}
		\tau((X U)^* (X U)) = \tau(U^* X^* X U) = \tau(X^* X) < \infty.
	\end{equation}
	So $\A_2$ is closed under left and right multiplication by unitaries.
	Since every $T \in \A$ can be written as a linear combination of four unitaries (fact \ref{fact:unitaries}), and since we have already shown that $\A_2$ is closed under linear combinations, it follows that $\A_2$ is closed under left and right multiplication by any $T \in \A$.
\end{proof}

\begin{lemma} \label{lem:ideal-closed-under-adjoints}
	$\A_2$ is closed under adjoints.
\end{lemma}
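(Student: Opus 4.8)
The plan is to reduce the statement to the two-sided ideal property already established in proposition \ref{prop:hilbert-schmidt-ideal}, using the polar decomposition of fact \ref{fact:polar-decomposition}. Unwinding the definitions, what must be shown is that $X \in \A_2$, i.e. $\tau(X^* X) < \infty$, implies $X^* \in \A_2$, i.e. $\tau(X X^*) < \infty$. The key observation will be that $X^*$ can be written as a product of an operator already known to lie in $\A_2$ with an operator in $\A$, after which membership in $\A_2$ follows immediately from the ideal structure.

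Concretely, I would first invoke the polar decomposition $X = V |X|$, where $|X| = \sqrt{X^* X}$ and $V$ is a partial isometry; the crucial point, supplied by the last sentence of fact \ref{fact:polar-decomposition}, is that both $V$ and $|X|$ lie in $\A$. Next I would observe that $|X|$ itself is Hilbert-Schmidt: since $|X|$ is positive and hence self-adjoint, one has $|X|^* |X| = |X|^2 = X^* X$, so $\tau(|X|^* |X|) = \tau(X^* X) < \infty$ and therefore $|X| \in \A_2$. Finally, taking adjoints in the polar decomposition gives $X^* = |X| V^*$. Since $|X| \in \A_2$, $V^* \in \A$, and $\A_2$ is a two-sided ideal (proposition \ref{prop:hilbert-schmidt-ideal}), right multiplication by $V^*$ keeps us inside $\A_2$, so $X^* = |X| V^* \in \A_2$, which is exactly the claim.

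I expect the main (and essentially only) subtlety to be the appeal to fact \ref{fact:polar-decomposition} to guarantee that $V$ and $|X|$ remain in the algebra $\A$; without this, the factorization $X^* = |X| V^*$ would not interact with the ideal structure of $\A_2$, and the argument would collapse. It is worth noting why the more direct route is less convenient: one could instead try to compute $\tau(X X^*) = \tau\!\big(V (X^* X) V^*\big)$ and bound it, but because $V$ is only a partial isometry rather than a unitary, the conjugation-invariance property (iii) of definition \ref{def:trace-appendix} does not apply to it, so this approach would require extra work that the ideal argument cleanly avoids.
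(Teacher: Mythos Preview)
Your proof is correct and follows essentially the same route as the paper: polar decomposition plus the two-sided ideal property of $\A_2$. The paper's version is marginally slicker in that it writes $T^* = |T|V^* = V^* T V^*$ directly (using $|T| = V^* V |T| = V^* T$), which lets it skip your separate check that $|X| \in \A_2$, and it phrases the result as the general fact that any two-sided ideal in a von Neumann algebra is adjoint-closed; but the underlying argument is the same as yours.
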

\begin{proof}
	This is a consequence of the more general fact that in any von Neumann algebra, any two-sided ideal is closed under adjoints.
	To see this, let $J$ be a two-sided ideal.
	Then for any $T \in J$ we have the polar decomposition $T = V |T|$ and $T^* = |T| V^* = V^* T V^* \in J.$
\end{proof}

\begin{definition}
	Given an ideal $J$, the symbol $J^2$ denotes the set of all finite linear combinations of products $X Y$ with $X, Y \in J.$
	It is easy to see that this is an ideal.
\end{definition}

\begin{definition}
	We denote by $\A_1$ the ideal $(\A_2)^2$. We call this the \textbf{ideal of trace-class operators}.
	We will soon show that it agrees with our earlier definition.
\end{definition}

\begin{lemma}
	The positive elements of $\A_1$ are exactly the positive operators $T \in \A_+$ with $\tau(T)$ finite.
\end{lemma}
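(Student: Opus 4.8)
The plan is to prove the two inclusions separately, since $\A_1$ was defined as $(\A_2)^2$ and the claim is that its positive part coincides with $\{T \in \A_+ : \tau(T) < \infty\}$. Neither direction is long, but they use different ideas.

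First I would handle the inclusion $\{T \in \A_+ : \tau(T) < \infty\} \subseteq \A_1$. Given a positive $T$ with $\tau(T) < \infty$, the idea is to use the positive square root $T^{1/2}$, which lies in $\A$ by the spectral theorem (fact \ref{fact:spectral-theorem}). Since $T^{1/2}$ is positive it is self-adjoint, so $(T^{1/2})^* T^{1/2} = T$, and therefore $\tau((T^{1/2})^* T^{1/2}) = \tau(T) < \infty$, which says precisely that $T^{1/2} \in \A_2$. Writing $T = T^{1/2} \cdot T^{1/2}$ then exhibits $T$ as a product of two Hilbert-Schmidt operators, so $T \in (\A_2)^2 = \A_1$. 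This direction is essentially immediate once the square root is available.

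The reverse inclusion is where the real work lies. Here I take $T \in \A_1$ with $T \geq 0$ and must show $\tau(T) < \infty$. By definition $T = \sum_{i=1}^n X_i Y_i$ for some $X_i, Y_i \in \A_2$. The key tool is the operator inequality $AB + B^*A^* \leq AA^* + B^*B$, which follows from expanding $(A - B^*)(A - B^*)^* \geq 0$. Since $T$ is positive it is self-adjoint, so $2T = T + T^* = \sum_i (X_i Y_i + Y_i^* X_i^*)$; applying the inequality termwise with $A = X_i$ and $B = Y_i$ gives
\begin{equation}
	2T \leq \sum_{i=1}^n \left( X_i X_i^* + Y_i^* Y_i \right).
\end{equation}
Each summand on the right has finite trace: $\tau(Y_i^* Y_i) < \infty$ because $Y_i \in \A_2$, and $\tau(X_i X_i^*) < \infty$ because $\A_2$ is closed under adjoints (lemma \ref{lem:ideal-closed-under-adjoints}), so $X_i^* \in \A_2$. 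Additivity of $\tau$ (property (ii) of definition \ref{def:trace}) then makes the right-hand side finite, and monotonicity (lemma \ref{lem:trace-monotonic}) yields $2\tau(T) \leq \sum_i [\tau(X_i X_i^*) + \tau(Y_i^* Y_i)] < \infty$, as desired.

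The main obstacle is recognizing that the linear-combination structure of a general element of $(\A_2)^2$ obstructs a naive computation of $\tau(T)$: one cannot simply add up the $\tau(X_i Y_i)$, since the individual products are neither positive nor yet in the domain of any extended trace functional. The operator inequality sidesteps this entirely by dominating the self-adjoint operator $2T$ by a genuine sum of positive finite-trace operators, after which monotonicity and additivity close the argument without ever invoking a polarization identity or the not-yet-constructed linear extension of $\tau$.
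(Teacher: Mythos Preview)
Your proof is correct, and the first direction matches the paper's exactly. For the reverse inclusion the paper takes a different route: it rewrites each product $X_j^* Y_j$ via the polarization identity
\begin{equation*}
X_j^* Y_j = \tfrac{1}{4}\sum_{k=0}^{3} i^k (X_j + i^k Y_j)^*(X_j + i^k Y_j),
\end{equation*}
so that $T$ becomes a real linear combination of operators $C_j^* C_j$ with $C_j \in \A_2$; then $T$ is dominated by the sum of the positive-coefficient terms, each of which has finite trace. Your approach via the inequality $AB + B^*A^* \leq AA^* + B^*B$ (equivalently $(A-B^*)(A-B^*)^* \geq 0$) is more direct and arguably cleaner for this lemma, since it produces the dominating positive operator in one step without splitting into four pieces and then arguing that the imaginary parts cancel. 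The paper's polarization route has the incidental advantage that the same identity (labeled \eqref{eq:four-squares} there) is reused verbatim in the proof of Hilbert--Schmidt cyclicity $\tau(XY) = \tau(YX)$, so it earns its keep later; your inequality, by contrast, is a one-shot tool tailored to this bound.
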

\begin{proof}
	If $T$ is a positive operator with $\tau(T)$ finite, then $|T|^{1/2}$ is in $\A_2,$ so $T = |T|^{1/2} |T|^{1/2}$ is in $\A_1$.
	
	Conversely, if $T$ is a positive operator in $\A_1,$ then $T$ can be written as a sum $X_1^* Y_1 + \dots + X_n^* Y_n$ with $X_j, Y_j \in \A_2.$\footnote{Here we have implicitly used lemma \ref{lem:ideal-closed-under-adjoints}, that $\A_2$ is closed under adjoints.} Each of these satisfies
	\begin{align}
		X_j^* Y_j
			= \frac{1}{4} & \left[ (X_j + Y_j)^* (X_j + Y_j) - (X_j - Y_j)^* (X_j - Y_j)  \right. \nonumber \\
				& \left.+ i (X_j - i Y_j)^* (X_j - i Y_j) - i (X_j + i Y_j)^* (X_j + i Y_j) \right]. \label{eq:four-squares}
	\end{align}
	So $T$ can be written as a linear combination of the form $T = \sum_{j=1}^{m} \alpha_j C_j^* C_j$ with each $C_j$ in $\A_2.$
	Hermiticity of $T$ means that all of the $\alpha_j$ coefficients can be taken to be real.
	So we have $T \leq S$ where $S$ is the sum of all terms in the series for which $\alpha_j$ is positive.
	$\tau(S)$ is finite since each $C_j$ is in $\A_2$, so $\tau(T)$ is finite.
\end{proof}

\begin{prop}
	$\A_1$ is exactly the set of $T \in \A$ such that $\tau(|T|)$ is finite.
\end{prop}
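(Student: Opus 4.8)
The plan is to reduce the whole statement to the preceding lemma, which already characterizes the \emph{positive} elements of $\A_1$ as exactly those positive operators of finite trace. Since $|T|$ is always a positive operator, that lemma tells us that $\tau(|T|) < \infty$ holds if and only if $|T| \in \A_1$. Thus the proposition is equivalent to the purely ideal-theoretic claim that $T \in \A_1$ if and only if $|T| \in \A_1$, and the engine driving both implications will be the polar decomposition $T = V|T|$ together with the fact (fact \ref{fact:polar-decomposition}) that the partial isometry $V$ lies in $\A$.

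For the forward implication, suppose $T \in \A_1$. Writing the polar decomposition as $T = V|T|$ and multiplying on the left by $V^*$, the partial-isometry identity $V^* V = $ (support projection of $T$) gives $V^* T = |T|$. Because $\A_1 = (\A_2)^2$ is a two-sided ideal (as $\A_2$ is an ideal by Proposition \ref{prop:hilbert-schmidt-ideal}) and $V^* \in \A$, left multiplication keeps us inside $\A_1$, so $|T| = V^* T \in \A_1$; being positive, the preceding lemma then yields $\tau(|T|) < \infty$. For the reverse implication, suppose $\tau(|T|) < \infty$. Then $|T|^{1/2}$ is Hilbert--Schmidt, since
\begin{equation}
	\tau\!\left( (|T|^{1/2})^* |T|^{1/2} \right) = \tau(|T|) < \infty,
\end{equation}
so $|T|^{1/2} \in \A_2$. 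Factoring $T = V|T| = \left(V |T|^{1/2}\right)\left(|T|^{1/2}\right)$ and using that $\A_2$ is an ideal containing $V|T|^{1/2}$ (left multiplication by $V \in \A$) exhibits $T$ as a product of two Hilbert--Schmidt operators, hence $T \in (\A_2)^2 = \A_1$.

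There is no deep obstacle here; the argument is essentially bookkeeping with the ideal structure. The one point requiring genuine care is the appeal to fact \ref{fact:polar-decomposition} to guarantee $V \in \A$, without which neither the passage $T \mapsto |T|$ nor the factorization $|T| \mapsto T$ would stay inside the algebra. Everything else follows by combining the ideal property of $\A_2$ with the preceding lemma, so the proof should be short once the reduction to ``$T \in \A_1 \iff |T| \in \A_1$'' is made explicit.
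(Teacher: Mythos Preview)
Your proof is correct and is essentially identical to the paper's own argument: both directions use the polar decomposition $T = V|T|$ with $V \in \A$, invoking the ideal property of $\A_2$ to factor $T = (V|T|^{1/2})(|T|^{1/2})$ for one direction and the ideal property of $\A_1$ to write $|T| = V^* T$ for the other, then appealing to the preceding lemma. The only cosmetic difference is that the paper presents the two implications in the opposite order.
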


\begin{proof}
	If $T \in \A$ satisfies $\tau(|T|) < \infty,$ then $|T|^{1/2}$ is in $\A_2$.
	For the polar decomposition $T = V |T|,$ we have $V |T|^{1/2}$ in $\A_2$, since $\A_2$ is an ideal and $|T|^{1/2}$ is in $\A_2$. Consequently, $T = (V |T|^{1/2}) |T|^{1/2}$ is in $(\A_2)^2,$ that is $T \in \A_1.$
	
	Conversely, if $T$ is in $\A_1,$ then $|T| = V^* T$ is in $\A_1$ since $\A_1$ is an ideal. So $|T|$ is a positive element of $\A_1,$ and has finite trace by the preceding lemma.
\end{proof}

\begin{remark} \label{rem:four-positives}
	Every operator $T \in \A$ can be written as a preferred linear combination of four positive operators.
	We define the real and imaginary parts of $T$ by
	\begin{align}
		T^R
		& \equiv \frac{T + T^*}{2}, \\
		T^I
		& \equiv \frac{T - T^*}{2 i},
	\end{align}
	then take the positive and negative parts of these Hermitian operators, which are the unique positive operators $T^R_+, T^R_-, T^I_+, T^I_-$ satisfying
	\begin{align}
		T^R
			& = T^R_+ - T^R_-, \\
		T^I
			& = T^I_+ - T^I_-, \\
		0
			& = T^R_+ T^R_-, \\
		0
			& = T^I_+ T^I_-.
	\end{align}
	These can be obtained by applying the continuous functions $f_+(x) = \max\{0, x\}$ and $f_{-}(x) = - \min\{0, x\}$ to the spectra of $T^R$ and $T^I$ using the spectral theorem \ref{thm:C*-spectral}.
	One then has
	\begin{equation}
		T
			= T^R_+ - T^R_- + i T^I_+ - i T^I_-.
	\end{equation}
	It is not hard to see using the spectral theorem that each of these positive operators is dominated by $|T|$, so if $|T|$ is in $\A_1$ then so is each of these positive operators.
\end{remark}

\begin{definition} \label{def:trace-extension}
	We extend the trace $\tau$ to $\A_1$ by
	\begin{equation}
		\tau(T)
			= \tau(T^R_+) - \tau(T^R_-) + i \tau(T^I_+) - i \tau(T^I_-).
	\end{equation}
	By the preceding remark, this sum is well defined, since each term is finite.
\end{definition}

\begin{prop}
	The extension of the trace to $\A_1$ as defined above is linear.
\end{prop}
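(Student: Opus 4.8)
The plan is to reduce the full complex-linearity statement to a single well-definedness lemma about Hermitian operators, since the genuine subtlety is that the canonical four-positive decomposition of Remark \ref{rem:four-positives} does not behave additively: the positive part of $T + S$ is generally not $T^R_+ + S^R_+$. First I would record the immediate consequence of Definition \ref{def:trace-extension} that for Hermitian $H \in \A_1$ the extended trace reads $\tau(H) = \tau(H_+) - \tau(H_-)$, and that for arbitrary $T \in \A_1$ we have $\tau(T) = \tau(T^R) + i \tau(T^I)$ with $T^R, T^I$ the (Hermitian, trace-class) real and imaginary parts. Because $T \mapsto T^R$ and $T \mapsto T^I$ are $\reals$-linear, additivity and real-homogeneity of the extended $\tau$ will follow once I have them on the Hermitian part alone.

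The heart of the argument---and the step I expect to be the main obstacle---is the following lemma: if a Hermitian $H \in \A_1$ is written as $H = P - Q$ for \emph{any} positive $P, Q \in \A_1$ (not necessarily the canonical positive and negative parts), then $\tau(H) = \tau(P) - \tau(Q)$. To prove it I would compare with the canonical decomposition $H = H_+ - H_-$: the identity $P - Q = H_+ - H_-$ rearranges to the equality of positive operators
\begin{equation}
	P + H_- = Q + H_+,
\end{equation}
all four summands lying in $\A_+ \cap \A_1$ and hence having finite trace. Applying additivity of $\tau$ on $\A_+$ (property (ii) of Definition \ref{def:trace-appendix}) gives $\tau(P) + \tau(H_-) = \tau(Q) + \tau(H_+)$, and since every term is finite I may rearrange to obtain $\tau(P) - \tau(Q) = \tau(H_+) - \tau(H_-) = \tau(H)$.

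With this lemma in hand the remaining steps are routine. For Hermitian $T, S$ I would write $T + S = (T_+ + S_+) - (T_- + S_-)$, a difference of positive trace-class operators, so the lemma together with property (ii) yields $\tau(T+S) = \tau(T_+) + \tau(S_+) - \tau(T_-) - \tau(S_-) = \tau(T) + \tau(S)$. Additivity on all of $\A_1$ then follows by taking real and imaginary parts and using their $\reals$-linearity. For homogeneity, positive real scalars are handled by property (i); for $\lambda = -1$ I use $(-H)_+ = H_-$ and $(-H)_- = H_+$ to conclude $\tau(-H) = -\tau(H)$ on Hermitians; and for the factor $i$ I compute directly that the real and imaginary parts of $iT$ are $-T^I$ and $T^R$ respectively, so that $\tau(iT) = -\tau(T^I) + i\tau(T^R) = i\tau(T)$. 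Every complex scalar $\alpha = a + bi$ is a combination $\alpha T = aT + b(iT)$, so complex-homogeneity follows from additivity together with the real-scalar and $i$ cases, completing the proof.
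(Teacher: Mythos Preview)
Your proof is correct, and the core trick is the same as the paper's: rearrange an identity among trace-class operators so that each side is a sum of \emph{positive} operators, apply property (ii) of Definition \ref{def:trace-appendix}, and then use finiteness of all terms to subtract freely. The difference is organizational. The paper works ``in bulk'': it starts from an arbitrary finite linear combination $T = \sum_j \alpha_j S_j$, expands every $S_j$ and $T$ into their four canonical positive pieces, and then shuffles all positively-signed terms to one side and all negatively-signed terms to the other before invoking (ii). You instead isolate the well-definedness lemma $\tau(P-Q) = \tau(P) - \tau(Q)$ for Hermitian $P-Q \in \A_1$ as a standalone step, and then assemble additivity and complex homogeneity from it in short modular pieces. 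Your presentation makes the essential point (independence of the positive decomposition) more visible and avoids the somewhat informal ``schematic'' regrouping in the paper's version; the paper's argument is marginally more direct but asks the reader to track more bookkeeping at once.
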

\begin{proof}
	Fix some $T \in \A_1$, and suppose it can be written as a linear combination
	\begin{equation}
		T = \sum_{j=1}^n \alpha_j S_j
	\end{equation}
	with each $S_j \in \A_1.$
	We would like to show $\tau(T) = \sum_{j=1}^{n} \alpha_j \tau(S_j),$ where $\tau(T)$ and $\tau(S_j)$ are defined as in definition \ref{def:trace-extension}.
	First, we will write each $\alpha_j$ in terms of its real and imaginary parts as $\alpha_j = x_j + i y_j,$ so we have
	\begin{equation}
		T = \sum_{j=1}^n (x_j + i y_j) S_j.
	\end{equation}
	If we write $T$ and each $S_j$ in terms of the preferred linear combinations of positive elements described in remark \ref{rem:four-positives}, we obtain the expression
	\begin{align}
		T^R_+ - T^R_- + i T^I_+ - i T^I_-
			& = \sum_{j=1}^n x_j  (S_{j,+}^R - S_{j,-}^R + i S_{j,+}^I - i S_{j, -}^I) \nonumber \\
			& \quad + i \sum_{j=1}^n  y_j (S_{j,+}^R - S_{j,-}^R + i S_{j,+}^I - i S_{j, -}^I).
	\end{align}
	We can group these elements together so that it schematically looks like
	\begin{equation}
		(\text{sum of pos. terms}) + i (\text{sum of pos. terms})
		= (\text{sum of pos. terms}) + i (\text{sum of pos. terms}).
	\end{equation}
	The first parenthetical is made up of $T_+^R$ together with all of the $x_j S_{j, +}^R$ terms for which $x_j$ is negative, all of the $x_j S_{j, -}^R$ terms for which $x_j$ is positive, all of the $y_j S_{j,+}^I$ terms for which $y_j$ is positive, and all of the $y_j S_{j, -}^I$ terms for which $y_j$ is negative.
	The other pieces of this equation are obtained analogously.
	Taking real and imaginary parts of this equation lets us treat them individually, and then we can use linearity of $\tau$ as applied to each sum of positive terms, which follows from the definition of $\tau$.
	Since all traces involved are finite, we can then regroup the terms to their original sides of the equation to obtain the desired linearity of $\tau$.
\end{proof}

\begin{prop} [Cyclicity of the trace] \label{prop:cyclicity}
	For $T \in \A_1$ and $S \in \A$, we have $\tau(TS) = \tau(ST).$
\end{prop}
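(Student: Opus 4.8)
The plan is to bootstrap from the unitary invariance built into the definition of a trace (property (iii) of definition \ref{def:trace}) up to full cyclicity, using fact \ref{fact:unitaries} to reduce a general $S \in \A$ to the case where $S$ is unitary. The whole argument rests on the observation that $\A_1$ is a two-sided ideal, so all the products that appear will stay inside the domain where the extended, linear trace is defined and finite.

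First I would promote unitary invariance from $\A_+$ to all of $\A_1$. For $T \in \A_1$ and any unitary $U \in \A$, write $T$ in the preferred four-positive form $T = T^R_+ - T^R_- + i T^I_+ - i T^I_-$ of remark \ref{rem:four-positives}. Each conjugate $U T^R_+ U^*$, etc., is positive (since $\bra{\psi} U P U^* \ket{\psi} = \bra{U^*\psi} P \ket{U^*\psi} \geq 0$) and lies in $\A_1$ by the ideal property, so $UTU^*$ is again in $\A_1$ and is displayed as the same linear combination of these four conjugated positives. Applying linearity of the extended trace (the preceding proposition) and then property (iii) of definition \ref{def:trace} on each positive piece gives $\tau(UTU^*) = \tau(T)$ for every $T \in \A_1$.

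Next I would establish the special case $\tau(UT) = \tau(TU)$ for $U$ unitary. The key identity is $U(TU)U^* = UT$, which holds because $UU^* = 1$. Since $TU \in \A_1$ (ideal property again), the extended unitary invariance from the previous step yields $\tau(UT) = \tau\big(U(TU)U^*\big) = \tau(TU)$. Finally I would reduce the general statement to this case: given arbitrary $S \in \A$, fact \ref{fact:unitaries} lets me write $S = \sum_{k=1}^4 c_k U_k$ with each $U_k$ unitary in $\A$ and $c_k \in \comps$. Both $TU_k$ and $U_k T$ lie in $\A_1$, so linearity of the extended trace gives $\tau(TS) = \sum_k c_k \tau(TU_k)$ and $\tau(ST) = \sum_k c_k \tau(U_k T)$; the two right-hand sides agree term by term by the special case, hence $\tau(TS) = \tau(ST)$.

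The main obstacle is really just the bookkeeping in the first step rather than a deep difficulty: one must verify that every operator encountered — the conjugates $UT^R_\pm U^*$, $UT^I_\pm U^*$, and the products $TU_k$ and $U_k T$ — genuinely remains in the trace-class ideal $\A_1$, so that the linear extension of $\tau$ applies and all values are finite, and that conjugation by $U$ carries each positive summand to a positive operator on which property (iii) directly applies. Once unitary invariance has been lifted from $\A_+$ to $\A_1$ in this controlled way, the remaining two steps are immediate.
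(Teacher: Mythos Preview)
Your proof is correct and follows essentially the same route as the paper: reduce to the unitary case via fact~\ref{fact:unitaries}, and for unitaries use the identity $U(TU)U^* = UT$ together with unitary invariance. You are more explicit than the paper about first lifting unitary invariance from $\A_+$ to all of $\A_1$ via the four-positive decomposition and linearity; the paper simply writes $\tau(TU) = \tau(UTUU^*) = \tau(UT)$ ``from the definition of $\tau$,'' implicitly taking that extension for granted.
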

\begin{proof}
	For any unitary $U \in \A$ we have, from the definition of $\tau,$ the identity
	\begin{equation}
		\tau(T U) = \tau(U T U U^{*}) = \tau(U T).
	\end{equation}
	The proposition is then proved by writing $S$ as a sum of four unitaries (see fact \ref{fact:unitaries}) and using linearity of the trace on $\A_1.$
\end{proof}

\begin{prop}[Hilbert-Schmidt cyclicity] \label{prop:HS-cyclicity}
	For $X, Y  \in \A_2,$ we have $\tau(XY) = \tau(YX).$
\end{prop}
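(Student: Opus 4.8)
The plan is to deduce the statement from the symmetric ``trace property'' on Hilbert--Schmidt operators, namely that $\tau(O^*P) = \tau(P O^*)$ for all $O, P \in \A_2$, and then to specialize. First I would record the bookkeeping that keeps every quantity finite: since $X, Y \in \A_2$ we have $XY, YX \in (\A_2)^2 = \A_1$, so both traces are defined and finite through the linear extension of $\tau$; and since $\A_2$ is closed under adjoints (Lemma \ref{lem:ideal-closed-under-adjoints}), $X^*$ also lies in $\A_2$. The entire argument will therefore take place inside the ideal $\A_1$, where $\tau$ is genuinely complex-linear.

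The heart of the proof is a polarization argument built on the identity $\tau(A^*A) = \tau(AA^*)$, which holds for every $A \in \A$ and is the reformulation of property (iii) established earlier in this appendix. I would apply it to $A = O + P$ and to $A = O + iP$ with $O, P \in \A_2$. Expanding $(O+P)^*(O+P)$ and $(O+P)(O+P)^*$ and applying $\tau$ termwise---legitimate because each diagonal term $O^*O, P^*P, OO^*, PP^*$ lies in $\A_1$ and each cross term lies in $(\A_2)^2 = \A_1$---the diagonal contributions cancel in pairs via $\tau(O^*O)=\tau(OO^*)$ and $\tau(P^*P)=\tau(PP^*)$, leaving
\begin{equation}
	\tau(O^*P) + \tau(P^*O) = \tau(OP^*) + \tau(PO^*).
\end{equation}
Repeating the computation with $A = O + iP$ and dividing through by $i$ yields
\begin{equation}
	\tau(O^*P) - \tau(P^*O) = \tau(PO^*) - \tau(OP^*).
\end{equation}
Adding these two equations gives the desired $\tau(O^*P) = \tau(PO^*)$.

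Finally I would substitute $O = X^*$ and $P = Y$, which is permissible since both lie in $\A_2$; then $O^*P = XY$ and $PO^* = YX$, so $\tau(XY) = \tau(YX)$. The only genuinely nontrivial ingredient is the identity $\tau(A^*A) = \tau(AA^*)$ itself, which is proved earlier via the polar decomposition rather than by cyclicity, so there is no circularity. Everything else is the routine polarization algebra together with the observation that all intermediate operators sit in the trace-class ideal $\A_1$. I expect the one place to be careful is precisely this last point---one must never manipulate a difference of potentially infinite traces---and confinement to $\A_1 = (\A_2)^2$ is exactly what guarantees that additivity of the extended trace is legitimately available at each step.
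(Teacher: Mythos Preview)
Your polarization argument is essentially the same idea as the paper's: both reduce to the diagonal identity $\tau(A^*A)=\tau(AA^*)$ for $A\in\A_2$ and then polarize (the paper via the four-squares identity \eqref{eq:four-squares}, you via the two substitutions $A=O+P$ and $A=O+iP$, which are algebraically equivalent). So the overall route is correct.

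There is, however, a logical-order gap in how you justify the diagonal identity. You invoke $\tau(A^*A)=\tau(AA^*)$ for every $A\in\A$ as something ``established earlier in this appendix,'' but in the paper's development that equivalence is the \emph{final} proposition of the appendix, and its proof in the forward direction explicitly calls on the very proposition you are trying to prove here. So as written, your argument is circular relative to the paper's ordering. Your parenthetical that the identity is ``proved earlier via the polar decomposition rather than by cyclicity'' is not quite right either: the polar-decomposition step the paper uses ($X=V|X|$, then $\tau(X^*X)=\tau(V^*V|X|^2)=\tau(V|X|^2V^*)=\tau(XX^*)$) still relies on the $\A_1$-cyclicity of Proposition~\ref{prop:cyclicity}, which is a form of cyclicity, just one that has already been established independently.

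The fix is simple and is exactly what the paper does as the first half of its own proof: before polarizing, supply the argument that $\tau(A^*A)=\tau(AA^*)$ holds for $A\in\A_2$ using the polar decomposition together with Proposition~\ref{prop:cyclicity}. You only ever apply the diagonal identity to $A=O+P$ and $A=O+iP$ with $O,P\in\A_2$, so this restricted version suffices. Once that step is in place, your proof and the paper's are the same argument in slightly different notation.
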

\begin{proof}
	First we will show $\tau(X X^*) = \tau(X^* X).$
	We take the polar decomposition $X = V |X|.$
	Then we have
	\begin{equation}
		X X^* = V |X|^2 V^*
	\end{equation}
	and
	\begin{equation}
		X^* X = |X|^2.
	\end{equation}
	Since $V^* V$ is the projector onto the closure of the image of $|X|^2,$ we may freely write
	\begin{equation}
		X^* X = V^* V |X|^2.
	\end{equation}
	Now, since $X$ is in $\A_2,$ $|X|^2$ is in $\A_1,$ and since $\A_1$ is an ideal, it follows that $V |X|^2$ is in $\A_1.$
	So using cyclicity of $\tau$ as given in proposition \ref{prop:cyclicity}, we have
	\begin{equation}
		\tau(X^* X) = \tau(V^* V |X|^2) = \tau(V |X|^2 V^*) = \tau(X X^*).
	\end{equation}
	
	Now, for $X, Y \in \A_2$, we recall from equation \eqref{eq:four-squares} that $XY$ can be written as
	\begin{align}
		XY
		= \frac{1}{4}
		&  \left[ (X^* + Y)^*(X^* + Y) - (X^* - Y)^* (X^* - Y) \right. \nonumber \\
		& + \left. i (X^* - i Y)^* (X^* - i Y) - i (X^* + i Y)^* (X^* + i Y) \right].
	\end{align}
	So using linearity of the trace and the result in the preceding paragraph, we see that $\tau(XY)$ is the same as the trace of the operator obtained by switching the two factors in each term on the right-hand side.
	It is easy to check that this is equivalent to the expression obtained by exchanging $X$ and $Y$, so we have $\tau(XY) = \tau(YX).$
\end{proof}

\begin{prop} [Unitary equivalence and positive cyclicity]
	Given conditions (i) and (ii) in definition \ref{def:trace-appendix}, condition (iii) is equivalent to the requirement that $\tau(O O^*) = \tau(O^* O)$ for every $O \in \A.$
\end{prop}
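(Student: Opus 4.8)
The plan is to prove the two implications separately, the forward one (positive cyclicity $\Rightarrow$ (iii)) being elementary and the reverse one carrying all the difficulty. For the forward direction, assume $\tau(O O^*) = \tau(O^* O)$ for every $O \in \A$, and let $U \in \A$ be unitary and $T \in \A_+$. The positive square root $T^{1/2}$ lies in $\A$ by fact \ref{fact:spectral-theorem}, so I would set $O = U T^{1/2}$. Then $O O^* = U T^{1/2} T^{1/2} U^* = U T U^*$ while $O^* O = T^{1/2} U^* U T^{1/2} = T$, and the hypothesis gives $\tau(U T U^*) = \tau(O O^*) = \tau(O^* O) = \tau(T)$, which is exactly (iii). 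This direction needs nothing beyond the existence of square roots and properties (i)--(ii).

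For the reverse direction I would start from (iii) and reduce to a statement about partial isometries using the polar decomposition (fact \ref{fact:polar-decomposition}). Writing $O = v |O|$ with $v, |O| \in \A$, one has $O^* O = |O|^2$ and $O O^* = v |O|^2 v^*$, so with $b = |O|^2 \in \A_+$ the goal becomes $\tau(v b v^*) = \tau(b)$, where $e = v^* v$ is the support projection of $b$ (hence $e b = b$) and $f = v v^*$ is the range projection of $O$. The favorable case is when $e$ and $f$ are orthogonal, $e f = 0$. There the element $u = v + v^* + (1 - e - f)$ is a self-adjoint unitary in $\A$ --- one checks $u = u^*$ directly and $u^2 = 1$ from the partial-isometry relations $v^* v = e$, $v v^* = f$, $v e = v$, together with $e v = 0$ and $v^2 = 0$, which follow from $ef = 0$ --- and a short computation using $b = e b = b e$ gives $u b u^* = v b v^*$. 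Property (iii) then yields $\tau(v b v^*) = \tau(u b u^*) = \tau(b)$ immediately.

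The remaining, and genuinely hard, case is when $e$ and $f$ are not orthogonal (when $O$ is invertible one has $e = f = 1$ and $v$ is already unitary, a trivial subcase, but the intermediate situations are not). The standard device is to pass to the algebra $M_2(\A)$ of $2 \times 2$ matrices over $\A$ acting on $\H \oplus \H$, where the projections $\operatorname{diag}(e, 0)$ and $\operatorname{diag}(0, f)$ are automatically orthogonal and remain equivalent, so that the orthogonal-case argument applies to the amplified problem and descends to $\tau(v b v^*) = \tau(b)$. The one step that is not formal manipulation is the claim that the diagonal-sum functional $\widetilde{\tau}(M) = \tau(M_{11}) + \tau(M_{22})$ is again a trace on $M_2(\A)$ --- equivalently, that any two equivalent projections receive equal $\tau$-value. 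I expect this to be the main obstacle: a direct component check of conjugation-invariance for $\widetilde{\tau}$ circles back to the identity one is trying to prove, so this comparison of non-orthogonal-but-equivalent projections requires the genuine input of Murray--von Neumann comparison theory rather than a formal dilation. I would therefore invoke the treatment in chapter V of \cite{takesaki2001theory} for this lemma and assemble the pieces above around it.
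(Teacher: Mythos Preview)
Your elementary direction (positive cyclicity $\Rightarrow$ (iii)) is correct and is exactly what the paper does.

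For the hard direction, you have correctly diagnosed the obstruction in your own approach: verifying that $\widetilde\tau$ satisfies (iii) on $M_2(\A)$ is essentially the statement you are trying to prove, so the $M_2$ amplification as you set it up is circular, and deferring that step to \cite{takesaki2001theory} is a citation of the result rather than a proof. The paper avoids this entirely by a different and self-contained strategy, splitting on whether $\tau(O^*O)$ is finite.

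If $\tau(O^*O) = \infty$, then $O \notin \A_2$. Property (iii) alone already makes $\A_2$ a two-sided ideal (proposition \ref{prop:hilbert-schmidt-ideal}), and any two-sided ideal in a von Neumann algebra is closed under adjoints via the polar decomposition (lemma \ref{lem:ideal-closed-under-adjoints}). Hence $O^* \notin \A_2$, i.e.\ $\tau(OO^*) = \infty$ as well, and no unitary needs to be manufactured.

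If $\tau(O^*O) < \infty$, the paper invokes the earlier machinery of the appendix: $\tau$ extends to a linear functional on the trace-class ideal $\A_1 = (\A_2)^2$, and there one has genuine cyclicity $\tau(TS) = \tau(ST)$ for $T \in \A_1$ and arbitrary $S \in \A$, proved by writing $S$ as a linear combination of four unitaries (fact \ref{fact:unitaries}) and using (iii) together with linearity. Your own polar-decomposition reduction then finishes cleanly: with $O = V|O|$ and $|O|^2 \in \A_1$ one has
\[
\tau(O^*O) = \tau(V^*V\,|O|^2) = \tau\bigl((V|O|^2)\,V^*\bigr) = \tau(OO^*),
\]
the middle equality being cyclicity with $T = V|O|^2 \in \A_1$ and $S = V^*$.

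So the missing idea is not a cleverer way to turn a single partial isometry into a single unitary, but rather to trade that for a \emph{linear combination} of four unitaries, which suffices once $\tau$ has been extended linearly to $\A_1$. Your orthogonal-support trick $u = v + v^* + (1-e-f)$ is correct and pretty, but in this setting it is simply not needed.
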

\begin{proof}
	First suppose that we stuck with the original definition of the trace, that $\tau(T) = \tau(U T U^*)$ for all $T \in \A_+$ and all unitary $U \in \A.$
	Then we can apply all of the theorems we have just proved, leading to proposition \ref{prop:HS-cyclicity}, which tells us that if $O$ is Hilbert-Schmidt, then we have $\tau(O^* O) = \tau(O O^*).$
	So the claim holds for all $O$ that are Hilbert-Schmidt.
	But if $O$ is not Hilbert-Schmidt, then by definition $\tau(O^* O)$ is infinite.
	Since $O$ is not Hilbert-Schmidt, $O^*$ is not Hilbert-Schmidt, since by lemma \ref{lem:ideal-closed-under-adjoints} the space of Hilbert-Schmidt operators is closed under adjoints.
	So $\tau(O O^*)$ is also infinite. 
	We thus always have $\tau(O^* O) = \tau(O O^*),$ whether or not $O$ is Hilbert-Schmidt.
	
	Conversely, suppose that we have $\tau(O O^*) = \tau(O^* O)$ for all $O \in \A$. Then for $T \in \A_+$ and unitary $U \in \A,$ we have
	\begin{align}
		\tau(U T U^*)
			& = \tau((U |T|^{1/2}) (U |T|^{1/2})^*) \nonumber \\
			& = \tau((U |T|^{1/2})^* (U |T|^{1/2})) \nonumber \\
			& = \tau(|T|^{1/2} U^* U |T|^{1/2}) \nonumber \\
			& = \tau(|T|) \nonumber \\
			& = \tau(T).
	\end{align}
\end{proof}

\bibliographystyle{JHEP}
\bibliography{bibliography}

\end{document}